\def\R{\mathbb{R}}
\def\la{\langle}
\def\ra{\rangle}
\def\to{\rightarrow}
\def\pa{\partial}
\def\eps{\varepsilon}
\newtheorem{theorem}{Theorem}[section]
\newtheorem{lemma}[theorem]{Lemma}
\newtheorem{proposition}[theorem]{Proposition}
\newtheorem{definition}[theorem]{Definition}
\newtheorem{remark}[theorem]{Remark}
\newtheorem{rk&ex}[theorem]{Remarks \& Examples}
\newtheorem{corollary}[theorem]{Corollary}
\newcommand{\beqar}{\begin{eqnarray*}}
\newcommand{\eeqar}{\end{eqnarray*}}
\newcommand{\beqarl}{\begin{eqnarray}}
\newcommand{\eeqarl}{\end{eqnarray}}
\newcommand{\lp}{\left(}
\newcommand{\rp}{\right)}
\newcommand{\be}{\begin{equation}}
\newcommand{\ee}{\end{equation}}
\newcommand{\nn}{\nonumber}
\newcommand{\nvec}{\mathbf{n}}
\newcommand{\vezero}{\mathbf{e_{1}}}
\newcommand{\uu}{\mathbf{u}}
\newcommand{\vv}{\mathbf{v}}
\newcommand{\Id}{\mathrm{Id}}
\newcommand{\PD}{\mbox{PD}}
\newcommand{\tr}{\textnormal{Tr}}
\newcommand{\ud}{\mathrm{d}}
\newcommand{\unitq}{{\mathbb{H}_1}}
\newcommand{\Ima}{\mbox{Im}}
\newcommand{\Real}{\mbox{Re}}
\newcommand{\q}{\mathbf{q}}
\newcommand{\qk}{\mathbf{q}_k}
\newcommand{\bqk}{\bar{\mathbf{q}}_k}
\newcommand{\p}{\mathbf{p}}
\newcommand{\rvec}{\mathbf{r}}
\title{Quaternions in collective dynamics}
\author[(1)]{Pierre Degond}
\author[(2)]{Amic Frouvelle}
\author[(3)]{Sara Merino-Aceituno}
\author[(4)]{Ariane Trescases}
\affil[(1)(3)]{Department of Mathematics, Imperial College London, South Kensington Campus\\
London, SW7 2AZ\\
United Kingdom,}
\affil[(1)]{pdegond@imperial.ac.uk}
\affil[(3)]{s.merino-aceituno@imperial.ac.uk}
\affil[(2)]{CEREMADE,  UMR CNRS 7534, Universit\'e de Paris-Dauphine, PSL Research University\\
Place du Mar\'echal De Lattre De Tassigny\\
PARIS, 75775 CEDEX 16, France\\
frouvelle@ceremade.dauphine.fr
}
\affil[(4)]{Department of Pure Mathematics and Mathematical Statistics, University of Cambridge\\
Wilberforce Road,
Cambridge, CB3 0WA\\
atrescases@maths.cam.ac.uk}
\begin{document}

\maketitle

\begin{abstract}
We introduce a model of multi-agent dynamics for self-organised motion; individuals travel at a constant speed while trying to adopt the averaged body attitude of their neighbours. The body attitudes are represented through unitary quaternions. We prove the correspondance with the model presented in Ref. \cite{bodyattitude} where the body attitudes are represented by rotation matrices. Differently from this previous work, the individual based model (IBM) introduced here is based on nematic (rather than polar) alignment. From the IBM, the kinetic and macroscopic equations are derived. The benefit of this approach, in contrast to Ref. \cite{bodyattitude}, is twofold: firstly, it allows for a better understanding of the macroscopic equations obtained and, secondly, these equations are prone to numerical studies, which is key for applications. 
\end{abstract}

\paragraph{Keywords:} Body attitude coordination; quaternions; collective motion; nematic alignment; Q-tensor; Vicsek model; Generalized Collision Invariant; dry active matter; self-organised hydrodynamics.

\paragraph{AMS Subject Classification:} 35Q92, 82C22, 82C70, 92D50

\newpage

\section{Introduction}

In this paper we consider collective dynamics where individuals are described by their location in space and position of their body (body attitude). The body attitude  is determined by a frame, i.e, three orthonormal vectors such that one vector indicates the direction of motion of the agent and the other two represent the relative position of the body around this direction. For this reason, the body frame of a given individual can be characterised by the rotation of a fixed reference frame.  This rotation (and hence, the body attitude) will be represented here by elements of the group of  unitary quaternions, denoted as $\unitq$ (see Fig. \ref{Fig:quaternion_explained}).
There exist multiple ways of describing rotations in $\R^3$, as we will see in Sec. \ref{sec:compare_micro}. Here, we choose the quaternionic representation as it is the one mostly employed in numerical simulations given their efficiency in terms of memory usage and complexity of operations \cite{salamin1979application}. This is key to apply the results of the present work.

 \begin{figure}
 \centering
\includegraphics[scale=0.5]{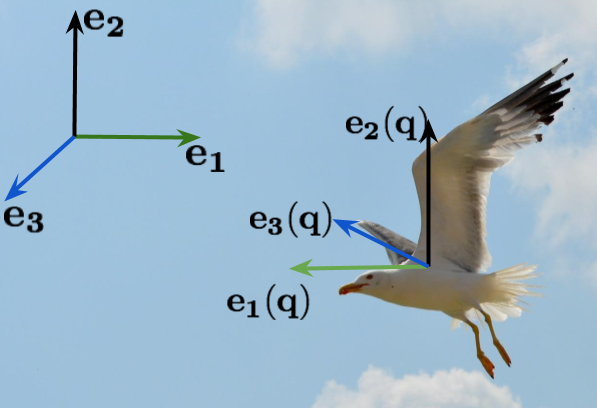}  
 \caption{The reference frame given by the orthonormal basis  $\left\{\vezero, \mathbf{e_2}, \mathbf{e_3}\right\}$ is rotated according to the unitary quaternion $\q$ and gives a new frame $\{\vezero(\q), \mathbf{e_2}(\q), \mathbf{e_3}(\q)\}$ describing the body attitude of the bird (where $\mathbf{e_i}(\q)$ denotes the rotation of the vector $\mathbf{e_i}$ by $\q\in \unitq$). The direction of motion is given by the vector $\vezero(\q)$ while the pair $\lp\mathbf{e_2}(\q), \mathbf{e_3}(\q)\rp$ gives the relative position of the body around this direction. }
  \label{Fig:quaternion_explained}
 \end{figure}

Agents move at a constant speed while attempting to coordinate their body attitude with those of near neighbours. Here we present an Individual Based Model (particle model) for body attitude coordination. We derive the corresponding macroscopic equations from the associated mean-field equation, which we refer to as the Self-Organized Hydrodynamics  based on Quaternions (SOHQ), by reference to the Self-Organized Hydrodynamics (SOH) derived from the Vicsek dynamics (see Ref.~\cite{degond2008continuum} and discussion below). Our model is inspired from the one in Ref. \cite{bodyattitude} where the body attitude is represented by elements of the rotation group $SO(3)$. The macroscopic equations obtained in Ref.~\cite{bodyattitude} present various drawbacks. Firstly, some of the terms in the equations do not have a clear interpretation. Secondly, and  most importantly, the macroscopic equations are impractical for numerical simulations due to their complexity, specially since some terms are defined implicitly (see Eq.~\eqref{def:D_x}). Moreover, the matrix representation is computationally inefficient: storing a rotation matrix requires nine entries of memory (while quaternions only require 4) and orthonormalising matrices is computationally expensive (while for quaternions this corresponds to normalising a 4-dimensional vector). Our objective is to sort out these problems that impeed the applications of the results obtained in Ref. \cite{bodyattitude} by considering a quaternionic representation. 

In contrast with the use of rotation matrices in  Ref. \cite{bodyattitude}, the use of the quaternion representation makes the modeling more difficult at the individual based level; firstly, it is not clear how to define a mean body attitude based on quaternions and, secondly, we need to consider nematic alignment rather than polar alignment. However, 
 the macroscopic equations obtained are easier to interpret than in Ref. \cite{bodyattitude} and  provide the right framework to carry out numerical simulations. The \textbf{main contributions} of the present paper are the derivation of the macroscopic equations; finding the right modelling at the individual based level; and proving the equivalence of the models and results obtained here with the ones in Ref. \cite{bodyattitude} for the rotation-matrix representation.

\medskip

There exist already a variety of models for collective behaviour depending on the type of interaction between agents. In the case of body attitude coordination, apart from Ref. \cite{bodyattitude},  other models has been proposed, see Ref. \cite{sarlette2009autonomous} and references therein.  This has applications in the study of collective motion of biological systems such as sperm dynamics; animals such as birds and fish; and it is a stepping stone to model more complex agents formed by articulated bodies (corp\-ora) \cite{constantin2010onsager,constantin2010high}. In the rest of the section we present related results in the literature and the structure of the document. 

\bigskip

The literature on collective behaviour is extensive. Such systems are ubiquitous in nature: fish schools, flocks of birds, herds \cite{buhl2006disorder,cavagna2010scale,parrish1997animal}; bacteria \cite{ben2000cooperative,zhang2010collective}; human walking behaviour \cite{helbing2007dynamics} are some examples.
The main benefit to study collective motion and self-organisation is to gain understanding in their emergent properties: local interactions between a large number of agents give rise to large scale structures (see the review in Ref.~\cite{vicsek2012collective}). Given the large number of agents, a statistical description of the system is more pertinent than an agent-based one. With this in mind mean-field limits are devised when the number of agents tend to infinity. From  them macroscopic equations can be obtained using hydrodynamic limit techniques (as we explain below).

\medskip

The body attitude coordination model presented here and the one in Ref. \cite{bodyattitude} are inspired from the Vicsek model. The Vicsek model is a particular type of model for self-propelled particles \cite{aldana2003phase,couzin2002collective,gregoire2004onset,vicsek1995novel} where agents travel at a constant speed while attempting to align their direction of motion with their neighbours.
  Other refinements and adaptations of the Vicsek model (at the particle level) or the SOH model (at the continuum level) have been proposed in the literature, we just mention a couple as examples: in Ref.~\cite{cavagna2014flocking} an individual-based model is proposed to better describe collective motion of turning birds; in Ref.~\cite{degond2015multi} agents are considered to have the shape of discs and volume exclusion is included in the dynamics. 
 
 One key difference in the modelling with respect to Ref.~\cite{degond2008continuum} is that we consider nematic alignment rather than polar alignment: given $\q\in \unitq$, $\q$ and $-\q$ represent the same rotation. Collective dynamics based on nematic alignment is not, though, new, see for example Refs. \cite{degond2015continuum,degond2015multi} and references therein. Nematic alignment also appears extensively in the literature of liquid crystals and colloids, like suspensions of polymers, see Ref.~\cite{degond2015continuum} and  the reference book \cite{doi1988theory}.

\medskip

Our results are inspired by the Self-Organized Hydrodynamics (SOH)  model (the continuum version of the Vicsek model) presented in Ref. \cite{degond2008continuum}, where we have substituted velocity alignment by body attitude coordination. The macroscopic equations are obtained from the mean-field limit equation, which takes the form of a Fokker-Planck equation.

To obtain the macroscopic equations, the authors in Ref.~\cite{degond2008continuum} use the well-known tools of hydrodynamic limits, first developed in the framework of the Boltzmann equation for rarefied gases \cite{cercignani2013mathematical,degond2004macroscopic,sone2012kinetic}. Since its first appearance, hydrodynamics limits have been used in other different contexts, including traffic flow modeling \cite{aw2002derivation,helbing2001traffic} and supply chain research \cite{armbruster2006model,degond2007stochastic}. However, in Ref.~\cite{degond2008continuum} a methodological breakthrough is introduced: the Generalized Collision Invariant (GCI), which will be key for the present study (Sec. \ref{sec:GCI}). Typically to obtain the macroscopic equations we require as many conserved quantities in the kinetic equation as the dimension of the equilibria (see again Ref.~\cite{vicsek2012collective}). In the mean-field limit of the Vicsek model this requirement is not fulfilled and the GCI is used to sort out this problem. For other cases where the GCI concept has been used see Refs.~\cite{degond2014hydrodynamics,degond2014macroscopic,bodyattitude,degond2012hydrodynamics,degond2014evolution,degond2015self,frouvelle2012continuum}.

\bigskip

After this introduction, we discuss the main results in Sec. \ref{sec:discussion_results}. In Section~\ref{sec:modeling} we explain the derivation of the Individual Based Model for body coordination dynamics and show its equivalence with the model in Ref.~\cite{bodyattitude} in Sec. \ref{sec:compare_micro}. Then we give the corresponding (formal) mean-field limit for the evolution of the empirical measure when the number of agents goes to infinity in Sec. \ref{sec:mean_field_limit}.

The following part concerns the derivation of the macroscopic equations (Theorem~\ref{th:macro_limit}) for the macroscopic density of the particles~$\rho=\rho(t,x)$ and the quaternion of the mean body attitude~$\bar \q = \bar \q(t,x)$. To obtain these equations we first study the rescaled mean-field equation (Eq. \eqref{eq:f_eps} in Section~\ref{sec:scaling}), which is, at leading order, a Fokker-Planck equation. We determine its equilibria (Eq. \eqref{eq:equilibria}). In Section~\ref{sec:GCI} we obtain the Generalized Collision Invariants (Prop.~\ref{prop:non constant GCI}), which are the main tool  to derive the macroscopic equations in Section~\ref{sec:macro_limit}. Finally, in Sec. \ref{sec:comparison} we prove the equivalence of our equations and results with the ones obtained in Ref.~\cite{bodyattitude}.

\section{Discussion of the main results}
\label{sec:discussion_results}

\subsection{Preliminary on quaternions}
Some basic notions on quaternions are necessary to understand the main results of this paper. We introduce them here. The set of quaternions $\mathbb{H}$ is a field which forms a four-dimensional algebra on $\R$ and whose elements are of the form:
$$\p = p_0+ p_1\vec{\imath}+p_2\vec{\jmath}+p_3\vec{k},$$
with $p_0,p_1,p_2,p_3\in \R$ and $\vec{\imath}, \vec{\jmath}, \vec{k}$, the fundamental quaternion units, satisfy: $\vec{\imath}^2=\vec{\jmath}^2=\vec{k}^2=\vec{\imath}\vec{\jmath}\vec{k}=-1$. From this, one can check that non-zero quaternions form a non-commutative group, particularly, it holds
$$\vec{\imath}\vec{\jmath}=-\vec{\imath}\vec{\jmath}=\vec{k}, \quad \vec{\jmath}\vec{k}=-\vec{k}\vec{\jmath}=\vec{\imath},\quad
\vec{k}\vec{\imath}=-\vec{\imath}\vec{k}=\vec{\jmath}.$$
The zeroth element $p_0=\Real(\p)$ is called the real part of $\p$ and the first to third elements form the imaginary part $p_1\vec{\imath}+p_2\vec{\jmath}+p_3\vec{k}=\Ima(\p)$. 

The conjugate of $\p\in\mathbb{H}$ is defined as $\p^*=p_0- p_1\vec{\imath}-p_2\vec{\jmath}-p_3\vec{k}$. The inner product corresponds to:
\be \label{eq:inner product quaternions}
\p\cdot \p'=p_0p_0'+p_1p_1'+p_2p_2'+p_3p_3'= \Real(\p'\p^*),
\ee
which generates the norm $|\p|^2= \Real(\p\p^*)$. Unitary quaternions are a subgroup of $\mathbb{H}$ defined as
 $$\unitq=\{\q\in\mathbb{H}\, \mbox{ such that } |\q|=1\} \subset\mathbb{H}.$$
Notice that $\unitq$ can be parametrized as the 3-dimensional sphere $\mathbb{S}^3$ (see proof of Prop. \ref{prop:volume_element}). Unitary quaternions can be represented as follows 
\be \label{eq:exponential form unitary quaternion}
\q = e^{\frac{\theta}{2}(n_1 \vec{\imath}+n_2\vec{\jmath}+n_3\vec{k})}=\cos \frac{\theta}{2} + \sin\frac{\theta}{2}\lp n_1 \vec{\imath}+n_2\vec{\jmath}+n_3\vec{k} \rp,
\ee
where $\nvec:=(n_1,n_2,n_3)$ is a unitary vector in $\R^3$ and $\theta\in[0,2\pi]$. With these notations $\q \in \unitq$ represents a rotation in $\R^3$ around the axis given by $\nvec$ and of angle $\theta$, anti-clockwise. Specifically, for any vector $\mathbf{v}\in \R^3$,  the corresponding rotated vector $\mathbf{\bar v}\in \R^3$ is obtained as follows (see remark below):
\be  \label{eq:rotation by quaternion}
\mathbf{\bar v} := \Ima(\q \mathbf{v}\q^*).
\ee
The quaternions $\q$ and $-\q$ represent the same rotation. The product of unitary quaternions corresponds to the composition of rotations.
\begin{remark}[Identification between purely imaginary quaternions and vectors in $\R^3$]
\label{rem:abuse of notation}
Notice that in Eq. \eqref{eq:rotation by quaternion} we abuse notation: the product $\q \mathbf{v}\q^*$ must be understood in quaternion sense (therefore we consider $\mathbf{v}=(v_1,v_2,v_3)$ as a quaternion which is purely imaginary, i.e., $\mathbf{v}= v_1\vec{\imath}+v_2\vec{\jmath}+v_3\vec{k}$). Conversely $\mathbf{\bar v}$ is understood as a vector in $\R^3$ rather than a purely imaginary quaternion. This abuse of notation where we identify vectors in $\R^3$ with purely imaginary quaternions (and the converse) will be used thorough the text. By the context, it will be clear the sense of the interpretation.\\
We will also use in general $\q$ to denote a unitary quaternion and $\p$ to denote an arbitrary quaternion. 
\end{remark}

\subsection{Self-Organized Hydrodynamics based on Quaternions (SOHQ)}\label{sec:SOHQ}

In Sec. \ref{sec:modeling} we introduce an individual based model for collective dynamics where individuals are described by their location in space and position of their body (body attitude). Individuals move at a constant speed while trying to adopt the same body attitude, up to some noise, see Eqs. \eqref{eq:particleX}--\eqref{eq:particleQ}. The body attitude  is given  by three orthonormal vectors where one of the vector indicates the direction of motion and the other two represent the relative position of the body around this direction. In this manner, the body frame of a given individual is characterised by the rotation of a fixed reference frame.  This rotation will be represented here by elements of the group of  unitary quaternions, denoted as $\unitq$ (see Fig. \ref{Fig:quaternion_explained}). The main result of this paper is Th. \ref{th:macro_limit}, which gives the macroscopic equations for these dynamics, i.e., the time-evolution equations for the macroscopic mass of agents $\rho=\rho(t,x)$ and the mean quaternion $\bar \q=\bar\q(t,x)$, which corresponds, as explained, to the mean body attitude. Here $t\ge0$ is the time and $x\in\R^3$ denotes a point of the physical space. We will refer to this system as the Self-Organised Hydrodynamics based on Quaternions (SOHQ). 

To discuss this result we first introduce the (right) relative differential operator on $\unitq$: for a function $\q=\q(t,x)$ where $\q(t,x)\in\unitq$ and for $\pa\in\{\pa_t,\pa_{x_1},\pa_{x_2},\pa_{x_3}\}$, let
\begin{eqnarray}
\pa_{\text{rel}} \q := (\pa \q) \q^\ast,\, \Big( = \Ima( (\pa \q) \q^\ast)\Big),
\end{eqnarray}
where $\pa \q$ belongs to the orthogonal space of $\q$, and the product has to be understood in the sense of quaternions. Notice that, effectively,  $\pa_{\text{rel}} \q$ is a purely imaginary quaternion, since $\Real((\pa\q)\q^*)=\q \cdot \pa\q=0$ by Eq. \eqref{eq:inner product quaternions}, and it can be identified with a vector in $\R^3$ (recall Rem. \ref{rem:abuse of notation}). 

With this notation the SOHQ corresponds to:
\begin{eqnarray}
\label{sys:macro1}&&\pa_t \rho +\nabla_x \cdot (c_1  \vezero ( \bar \q) \rho) = 0,\\
\nn &&\rho \lp \partial_t \bar \q +  c_2 (\vezero(\bar\q) \cdot \nabla_x) \bar \q\rp + c_3\left[\vezero(\bar\q) \times\nabla_x \rho \right] \bar \q  \\
\label{sys:macro3}&& \qquad \quad +c_4\rho  \left[  \nabla_{x,\text{rel}}\bar \q \,\vezero(\bar \q) +  (\nabla_{x,\text{rel}}\cdot \bar \q) \vezero(\bar\q) \right]\bar \q =0,
\end{eqnarray}
where $\vezero$ is a vector in $\R^3$ and $\vezero(\bar\q)$ denotes the rotation of $\vezero$ by the quaternion $\bar \q$, that is,
\begin{eqnarray}
\label{sys:macro4}\vezero(\bar \q) &=& \Ima(\bar \q\vezero \bar \q^*) ;
\end{eqnarray}
and where we used the (right) relative space differential operators
\begin{eqnarray} \label{eq:def_gradient_divergence_rel}
\nabla_{x,\textnormal{rel}} \bar\q =  (\pa_{x_i,\text{rel}}\bar\q)_{i=1,2,3} =((\partial_{x_i} \bar \q) \bar \q^\ast)_{i=1,2,3} \in (\R^3)^3\subset \mathbb{H}^3,\\
\label{eq:def_gradient_divergence_rel2} \nabla_{x,\textnormal{rel}} \cdot \bar \q =\sum_{i=1,2,3} (\partial_{x_i,\text{rel}} \bar \q)_i =\sum_{i=1,2,3} ((\partial_{x_i}\bar \q) \bar\q^\ast)_i\in\R,
\end{eqnarray}
where $((\partial_{x_i}\bar \q) \bar\q^\ast)_i$ indicates the $i$-th component of $(\partial_{x_i}\bar \q) \bar\q^\ast$. In Eq. \eqref{sys:macro4} and in the last three terms of Eq. \eqref{sys:macro3} we use the abuse of notation explained in Rem. \ref{rem:abuse of notation}.
The matrix product in the fourth term of Eq. \eqref{sys:macro3} has to be understood as a matrix product, giving rise to a scalar product in $\mathbb{H}$:
\begin{eqnarray}\label{def:Drel_times_e1}
\nabla_{x,\textnormal{rel}}\bar \q \, \vezero(\bar \q) = ((\partial_{x_i,\textnormal{rel}}\bar \q)\cdot \vezero(\bar \q))_{i=1,2,3} \,.
\end{eqnarray}

In Eqs. \eqref{sys:macro1}--\eqref{sys:macro3}, $c_1,c_2,c_3$ and $c_4$ are explicit constants (given in Th. \ref{th:macro_limit}) that depend on the parameters of the model, namely, the rate of coordination and the level of the noise. The constants $c_2$, $c_3$ and $c_4$ depend on the Generalised Collision Invariant (see Introduction and Sec. \ref{sec:GCI}). 
Interestingly, $c_1$ had a special meaning as a `(polar) order parameter' in Refs. \cite{bodyattitude,degond2008continuum} (see Rem. \ref{rem:constant_c1}). Here it has the same meaning, but as a `nematic' order parameter.

Eq. \eqref{sys:macro1} gives the continuity equation for the mass $\rho$ and ensures mass conservation. The convection velocity is given by $c_1\vezero(\bar\q)$ where the direction is given by $\vezero(\bar \q)$, a unitary vector (since $\vezero$ is unitary), and the speed is $c_1$. Notice that the convection term is quadratic in $\bar\q$. This is a new structure with respect to Refs \cite{bodyattitude,degond2008continuum}. We consider next Eq. \eqref{sys:macro3} for $\bar \q$. Observe first that all the terms in the equation belong to the tangent space at $\bar\q$ in $\unitq$, i.e, to $\bar\q^\perp$. This is true for the first term since $\lp \partial_t + c_2(\vezero(\bar\q)\cdot \nabla_x\rp$ is a differential operator (giving the transport of $\bar\q$) and it also holds for the rest of the terms since they are of the form $\mathbf{u}\bar\q$ with $\mathbf{u}$ purely imaginary (see Prop. \ref{prop:tangent space} in the appendix). 

The term corresponding to $c_3$ gives the influence of $\nabla_x\rho$ (pressure gradient) on the body attitude $\bar\q$. It has the effect of rotating the body around the vector directed by $\vezero(\bar\q)\times\nabla_x \rho$ at an angular speed given by $\frac{c_3}{\rho}\|  \vezero(\bar\q)\times\nabla_x \rho\|$, so as to align $\vezero(\bar\q)$ with $-\nabla_x\rho$. Indeed the solution to the differential equation 
$$\frac{\ud\q}{\ud t} + \gamma \mathbf{u}\q=0,$$
when $\mathbf{u}$ is a constant purely imaginary unitary quaternion and $\gamma$ a constant scalar, is given by $\q(t)=\exp(-\gamma \mathbf{u}t)\q(0)$, and $\exp(-\gamma\mathbf{u}t)$ is the rotation of axis $\mathbf{u}$ and angle $-\gamma t$ (see Eq. \eqref{eq:exponential form unitary quaternion}). Since $c_3$ is positive, the influence of this term consists of relaxing the direction of movement $\vezero(\bar\q)$ towards $-\nabla_x\rho$, i.e., making the agents move from places of high concentration to low concentration. In this manner, the $\nabla_x\rho$ term has the same effect as a pressure gradient in classical hydrodynamics. In the present case the pressure gradient provokes a change in the full body attitude $\bar \q$. 
 Finally, notice that in regions where $\rho>0$ we can divide Eq. \eqref{sys:macro3} by $\rho$ and this gives us the influence of each term depending on the local density. After division by $\rho$, we observe that the only term depending on the density $\rho$ in Eq. \eqref{sys:macro3} is on the third term in the form
$$c_3\, \left[ \vezero(\bar \q)\times \frac{\nabla_x \rho}{\rho}\right]\bar\q.$$
Therefore, for small densities, this term may take large values and become dominant, while for large densities it becomes small and the other terms in the equation prevail for reasonably large $\nabla_x\rho$. The fact that agents tend to relax their direction of motion towards regions of lower concentration creates dispersion: this term is a consequence of the noise at the microscopic level. However, the relaxation becomes weaker once agents are in regions of high density or areas with small variations of density.  The last two terms in Eq.  \eqref{sys:macro3} are unique to the body attitude coordination model and are the main difference with respect to the SOH equations for the Vicsek model.

\medskip

Analogously to the discussions in Refs. \cite{bodyattitude,degond2008continuum} for the body attitude model based on rotation matrices and for the Vicsek model, the SOHQ model bears analogies with the compressible Euler equations, where Eq. \eqref{sys:macro1} is the mass conservation equation and Eq. \eqref{sys:macro3} is akin to the momentum conservation equation, where momentum transport  is balanced by a pressure force. There are however major differences. Firstly, the pressure term belongs to $\bar\q^\perp$ in order to ensure that $\bar\q\in\unitq$ for all times; in the Euler equations the velocity is an arbitrary vector, not necessarily normalized. Secondly, the convection speed $c_2$ is a priori different from the mass conservation speed $c_1$. This difference signals the lack of Galilean invariance of the system, which is a common feature of all dry active matter models (models for collective motion not taking place in a fluid), see Ref. \cite{tu1998sound}. Finally, the last two terms of Eq. \eqref{sys:macro4} do not have a clear analog to the compressible Euler equations: they seem quite specific to our model.

\subsubsection{The equation as a relative variation}

The (right) relative differential operator $\pa_{\text{rel}}$ can be interpreted as the (right) relative variation of $\q$, i.e.,
$$\pa_{\text{rel}} \q=\pa \q \,\q^{-1},$$
where $\q^{-1}=\q^\ast$ is the inverse of $\q$ since $\q$ is unitary.

This expression would have a clear meaning in a commutative setting. For example in the case of (unit) complex numbers (that is, if we consider rotations in 2 dimensions), we consider the analogous definition (for $z=z(t,x)$ a function with values in the group $\mathbb{U}$ of unitary complex numbers)
$$\pa_{\text{rel},\mathbb{C}} z=\pa z \,z^{-1},$$
where $z^{-1}$ is the inverse of $z$ (which is also its complexe conjugate). In this case because $\mathbb{C}$ is commutative we can write simply
$$\pa_{\text{rel},\mathbb{C}} z=\frac{\pa z}{z}.$$
Equivalently, we can also recognize
$$\pa_{\text{rel},\mathbb{C}} z=\pa \lp\log z\rp,$$
and the interpretation in terms of a relative variation is standard.

Let us go back to quaternions. The logarithm is well defined on $\unitq\setminus\{-1\}$, and for $\q\in\unitq\setminus\{-1\}$, we have that $\log\q=\log(\exp(\theta\nvec/2))=\theta\nvec/2$ with the notations of Eq. \eqref{eq:exponential form unitary quaternion}. But, because of the lack of commutativity of $\mathbb{H}$, it is not clear that the logarithm and the relative operator satisfy any relevant relation globally. Since such an interpretation cannot be, \emph{a priori}, directly translated to quaternions, we propose the following \emph{local} interpretation. Locally around a fixed point $(t_0,x_0)\in\R^+\times\R^3$, we can write
\be \label{eq:r}
\q(t,x) = \mathbf{r}(t,x) \q(t_0,x_0),
\ee
where $\mathbf{r}(t,x)=\q(t,x)\q(t_0,x_0)^\ast\in\unitq$ represents the variation of $\q$ around $\q(t_0,x_0)$ with $\mathbf{r}(t_0,x_0)= 1$. Then, with these notations, it holds that
 $$\pa_{\text{rel}} \q=\left.(\pa \mathbf{r})\right|_{(t,x)=(t_0,x_0)}.$$
 
 \begin{remark} When $\pa=\pa_t$ is the time derivative, for a function $\q=\q(t,x)$ with values in $\unitq$, the vector $\partial_{t,\textnormal{rel}} \q =\pa_t \q \,\q^{-1}$ is half of the angular velocity of a solid of orientation represented by $\q$. By analogy, the vector $\partial_{x_i, \textnormal{rel}} \q=\pa_{x_i}\q \,\q^{-1}$ for $i=1,\,2,\,3$ is half of the angular variation in space of a solid of orientation represented by $\q$.
 \end{remark}
 
Multiplying from the right the evolution equation \eqref{sys:macro3} for $\bar \q$  by $\bar\q^\ast$, we obtain the following equivalent equation
\begin{eqnarray}
&&\rho \partial_{t,\textnormal{rel}} \bar \q + \rho  c_2 (\vezero(\bar\q) \cdot \nabla_{x,\textnormal{rel}}) \bar \q + c_3\left[\vezero(\bar\q)\times\nabla_x \rho\right]  \nn \\
&& \quad \qquad +c_4\rho \left[  \nabla_{x,\textnormal{rel}}\bar \q \, \vezero(\bar \q) +  (\nabla_{x,\textnormal{rel}}\cdot \bar \q) \vezero(\bar\q) \right] =0. \label{eq:q_relative_form}
\end{eqnarray}
In this equation we notice that all the differential operators naturally appear under their (right) relative form. Notice also that all other nonlinearities in $\bar \q$ are expressed in terms of $\vezero(\bar\q)$. Therefore, the previous system can be interpreted as the evolution of the relative changes of $\bar\q$.

In terms of $\mathbf{r}$, the previous system can be recast into:
 \begin{eqnarray*}
\Big[\rho \partial_t \mathbf{r} + \rho  c_2 (\vezero(\q) \cdot \nabla_x) \mathbf{r} + c_3 \vezero(\q)\times\nabla_x \rho
+\left.c_4\rho  \left[ \nabla_{x}\mathbf{r} \, \vezero(\q) + (\nabla_{x}\cdot \mathbf{r})\vezero(\q) \right]\Big]\right|_{(t_0,x_0)}\\ =0.
\end{eqnarray*}
For an interpretation (again, \emph{local}) in terms of $\mathbf{b}:=\log \mathbf{r}$ we refer the reader to Sec. \ref{sec:comparison_macro}.

\subsection{Equivalence with the previous body attitude model}

In Ref. \cite{bodyattitude} a model for body attitude coordination is presented were the body attitude is represented by a rotation matrix (element in $SO(3)$, orthonormal group) rather than by a quaternion. In Sec. \ref{sec:compare_micro} we will prove the equivalence between the individual based model presented in Ref. \cite{bodyattitude} and the one here, in the sense that the two stochastic processes are the same in law (Cor. \ref{cor:equivalence_process}). 

In Ref. \cite{bodyattitude} also the macroscopic equations  are obtained for the mean body attitude $\Lambda=\Lambda(t,x)\in SO(3)$ and spatial density of agents $\rho=\rho(t,x)\ge0$, called Self-Organised Hydrodynamics for Body Attitude Coordination (SOHB):
\begin{align}
&\partial_t \rho + \nabla_x \cdot \big( \tilde c_1 \rho \Lambda \mathbf{e_1}\big)=0, \label{eq:macro_rho}\\
&\rho \Big( \partial_t\Lambda+ \tilde c_2 \big((\Lambda \vezero) \cdot \nabla_x\big)\Lambda \Big)\Lambda^t +\left[ (\Lambda \vezero) \times \big( \tilde c_3 \nabla_x \rho+\tilde c_4\rho\,\rvec_x(\Lambda)\big) + \tilde c_4\rho\,\delta_x(\Lambda)\Lambda \vezero\right]_\times =0,\label{eq:macro_lambda}
\end{align}
with explicit constants $\tilde c_i$, $i=1,\hdots,4$, where $\Lambda^t$ indicates the transpose matrix of $\Lambda$, and where for a vector $\mathbf{u}=(u_1,u_2, u_3)$, the antisymmetric matrix~$\left[ \mathbf{u}\right]_\times$, is defined by
\be \label{eq:def_operator_asym}
\left[ \mathbf{u}\right]_\times := \left[
\begin{array}{ccc}
0 & -u_3 & u_2\\
u_3 & 0 & -u_1\\
-u_2 & u_1 & 0
\end{array}
\right].
\ee  
The scalar~$\delta_x(\Lambda)$ and the vector~$\rvec_x(\Lambda)$ are first order differential operators intrinsic to the dynamics. We define them next. For a smooth function $\Lambda=\Lambda(x)$ from $\R^3$ to $SO(3)$, we define the matrix $\mathscr{D}_x(\Lambda)$ by the equality
\be\label{def:D_x}
\text{for all } \mathbf{w}\in\R^3, \quad (\mathbf{w} \cdot \nabla_x) \Lambda = [\mathscr{D}_x(\Lambda)\mathbf{w}]_\times \Lambda,
\ee
(the matrix $\mathscr{D}_x(\Lambda)$ is well defined, see Ref. \cite{bodyattitude}).
The first order operators $\delta_x(\Lambda)$ and $\rvec_x(\Lambda)$ are then defined by
\be\label{def:delta_and_r}
\delta_x(\Lambda) = \textnormal{Tr}\lp \mathscr{D}_x(\Lambda) \rp, \qquad [\rvec_x(\Lambda)]_\times = \mathscr{D}_x(\Lambda)-\mathscr{D}_x(\Lambda)^t. 
\ee

Since the individual based model formulated in terms of quaternions is equivalent (in law) to the one formulated with rotation matrices, we expect their respective macroscopic limits to be also equivalent. This is the case, as expressed in Th. \ref{th:equivalence_macro_equations}, i.e., if at time $t=0$, $\Lambda(0)$ and $\bar\q(0)$ represent the same rotation, then $\Lambda(t)$ and $\bar \q(t)$ represent the same rotation for all $t$ where the solutions are well defined. 

There are, however, important differences between the SOHB and SOHQ macroscopic equations. On one hand, notice that the operators $\delta_x$ and $\rvec_x$ cannot be expressed under a simple explicit form, which makes the meaning of these operators less clear. In the quaternion case, all the elements in Eqs. \eqref{sys:macro1}--\eqref{sys:macro3} are explicit. Moreover, quaternions give the right framework for numerical simulations (in terms of memory and operation efficiency), as explained in the introduction. On the other hand, when using rotation matrices, we obtain clear equations for the evolution of each one of the orthonormal vectors that define the body frame (see Ref. \cite{bodyattitude}). However, the expressions for these vectors  in the quaternion formulation  is complicated and little revealing, due to the quadratic structure of the rotation (see Eq. \eqref{sys:macro4}).

\section{Modeling: the individual based model and its mean-field limit}
\label{sec:modeling}

\subsection{The individual based model}

Consider a reference frame in $\R^3$ given by the orthonormal basis $\{\vezero, \mathbf{e_2}, \mathbf{e_3}\}$.  Consider, also, $N$ agents labelled by $k=1,\hdots, N$ with position $X_k(t)\in \R^3$ and body attitude given by the unitary quaternion $\q_k(t)\in \unitq$. As explained in the introduction, the body frame of agent $k$ corresponds to $\{\vezero(\qk), \mathbf{e_2}(\qk), \mathbf{e_3}(\qk) \}$, where $\mathbf{e_i}(\qk)$ denotes the rotation of $\mathbf{e_i}$ by $\qk$ for $i=1,2,3$.  The vector $\vezero(\qk)$ gives the direction of motion of agent $k$ and the other two vectors give the position of the body relative to this direction (see Fig. \ref{Fig:quaternion_explained}).
Our goal is to model collective dynamics where agents move at a constant speed while adopting the body attitude of their neighbours, up to some noise.

\medskip

\textit{Evolution of the positions, $(X_k)_{k=1,\hdots,N}$}. The fact that agent $k$ moves in direction $\vezero(\qk)$ at constant speed $v_0>0$, simply corresponds to the equation
$$\frac{\ud X_k}{\ud t} = v_0 \vezero(\qk), \quad \vezero(\qk):= \Ima(\qk \vezero \qk^*),$$
(recall Eq. \eqref{eq:rotation by quaternion} and Rem. \ref{rem:abuse of notation}).
Notice that the speed for all agents is constant and equal to $v_0>0$.

\medskip
\textit{Evolution for the body attitudes, $(\qk)_{k=1,\hdots,N}$.}
Agents try to coordinate their body attitudes with those of their neighbours. To model this phenomenon, we need to firstly define an "average" body attitude around a given agent $k$ and to secondly express the relaxation of the body attitude of agent $k$ towards this average. 

\begin{remark}[Nematic alignment, sign invariance]
\label{re:invariance_change_of_sign}
The body attitude is uniquely defined by quaternions up to a sign since $\q_k$ and $-\q_k$ represent the same rotation. This implies, firstly,  that the time evolution equation for $\q_k=\q_k(t)$ must be sign invariant, and secondly, that the average must take this sign invariance into account. This is called `nematic alignment' (in opposition to `polar' alignment) and it appears in other collective models \cite{degond2015continuum,degond2015multi}  and in liquid crystals \cite{doi1988theory}. Therefore, we cannot define the average analogously as in Ref. \cite{degond2008continuum}  since the alignment is polar in this case. For example, if one considers the normalised averaged quaternion defined in the same way as in the Vicsek model:
\be \label{eq:wrong average}
\frac{\sum_{i=1}^N \q_i}{\left|\sum_{i=1}^N \q_i\right|} \in \unitq,
\ee
we obtain a unitary quaternion that can be interpreted as a rotation. However, the meaning of this rotation is unclear and it is not invariant under changes of sign of any of the vectors $\q_i$. We cannot use, either the nematic average used in Ref. \cite{degond2015continuum} since it is only valid in $\R^2$. 
\end{remark}

 We define (up to a sign) the average around $\qk$  by
\beqarl \nn
\bar\q_k &:=& \mbox{unitary eigenvector of the maximal eigenvalue of $Q_k$}\\
&=& \arg \max \{ \q\cdot Q_k \q, \, \q \in \unitq \}, \label{eq:definition bar q}
\eeqarl
with
\beqarl \label{eq:definition of Q_k}
Q_k = \frac{1}{N} \sum_{i=1}^N K(|X_i-X_k|) \left(\q_i \otimes \q_i - \frac{1}{4} \Id \right),
\eeqarl
where the nonnegative-valued function $K$ is a kernel of influence. It is in the definition that $\bqk\in \unitq$; one can check that if $\bqk$ is an average, so is $-\bqk$ (so it is sign invariant); $\bqk$ remains invariant under the change of sign of any of the arguments $\q_1, \hdots, \q_N$; and, $\bqk$ maximises over $\unitq$:
\beqarl
\q \mapsto \q\cdot Q_k \q &=& \frac{1}{N}\sum_{i=1}^N K(|X_i-X_k|)\, \lp (\q_i\cdot \q)^2-\frac{1}{4}\rp \nonumber\\
&=&\frac{1}{N}\sum_{i=1}^N K(|X_i-X_k|)\, \lp\cos^2(\widehat{\la \q_i,\q\ra})-\frac{1}{4}\rp, \label{eq:definition average}
\eeqarl
where $\widehat{\la \q_i,\q\ra}$ denotes the angle between $\q_i$ and $\q$ (seen as elements of the hypersphere $\mathbb{S}^3$).

\medskip
Now to express the relaxation of $\q_k$ towards this average we define first
\beqarl \label{eq:definition F_k}
F_k =  \left(\bqk \otimes \bqk - \frac{1}{4} \Id\right) \qk,
\eeqarl
and write
\be \label{eq:relaxation_term}
\frac{\ud\qk}{\ud t} = P_{\mathbf{q}_\mathbf{k}^\perp}F_k\, \lp =\frac{1}{2}\nabla_{\qk}\lp\qk \cdot F_k \rp \rp,
\ee
where $P_{\mathbf{q}_\mathbf{k}^\perp}$ indicates the projection on the orthogonal space to $\qk$   (which corresponds to the tangent space of $\qk$ in $\unitq$); $\nabla_{\qk}$ indicates the gradient in $\unitq$ (the second equality in Eq. \eqref{eq:relaxation_term} is proven in Prop. \ref{prop:}). Eq. \eqref{eq:relaxation_term} relaxes $\qk$ towards the maximizer of $\q\mapsto \q \cdot\left(\bqk \otimes \bqk - \frac{1}{4} \Id\right)\q$ which corresponds precisely to $\bqk$ or $-\bqk$.

\medskip

Finally, putting everything together, we obtain the evolution equations:
\beqarl \label{eq:particleX}
\frac{\ud X_k}{\ud t} &=& v_0 \vezero(\qk), \quad \vezero(\qk):= \Ima(\qk \vezero \qk^*),\\
\ud\qk &=& P_{\mathbf{q}_\mathbf{k}^\perp} \circ \lp\nu F_k \ud t+ \sqrt{D/2}\,\ud B_t^k \rp. \label{eq:particleQ}
\eeqarl
where $\nu>0$ indicates the intensity of the relaxation.
The evolution for the body attitudes results from two competing phenomena: body attitude coordination (the $F_k$-term) and noise due to errors that the agents make when trying to coordinate. The noise term is given by $(B_t^k)_{k=1,\hdots,N}$ independent 4-dimensional Brownian motions. The constant $D>0$ gives the intensity of the noise term (by modifying the variance of the Brownian motion). The projection $P_{\qk^\perp}$ in Eq. \eqref{eq:particleQ} ensures that $\qk(t)\in \unitq$ for all times $t$ where the solution is defined. The stochastic differential equation \eqref{eq:particleX}-\eqref{eq:particleQ} must be understood in the Stratonovich sense, see Ref. \cite{hsu2002stochastic}.
 
\begin{remark} Some comments:
\begin{enumerate}[(i)]
\setlength\itemsep{0em}
\item Typically the noise term would be scaled by $\sqrt{2D}$, because then the generator of the process is the laplacian with coefficient $D$. However, we chose the scaling $\sqrt{D/2}$ to make the model equivalent with the one based on rotation matrices in Ref. \cite{bodyattitude}. This will be discussed in Sec. \ref{sec:compare_micro}.
\item The operator $Q_k$ in Eq. \eqref{eq:definition of Q_k} corresponds to the de Gennes Q-tensor that appears in the theory of liquid crystal \cite{doi1988theory} and which is also related to the so-called `nematic order coefficient'. Notice that in the definition of $Q_k$ in Eq. \eqref{eq:definition of Q_k} the $1/4$-factor could be ignored and the definition of the average $\bqk$ would remain unchanged. Also in $F_k$ in Eq. \eqref{eq:definition F_k} the $1/4$-factor can be ignored since that term dissapears in the projection  in Eq. \eqref{eq:relaxation_term}. We keep it here for the parallelisms that it bears with the theory of liquid crystal and because it will appear when we define the equilibrium distribution in Eq. \eqref{eq:equilibria}.

\item Notice that to define the average $\bqk$ in Eq. \eqref{eq:definition average}, we assume that the maximal eigenvalue is \emph{simple}. At the formal level, this assumption is reasonnable since for general symmetric matrices the event of a multiple maximal eigenvalue is negligible. Of course, for a rigourous analysis, we would need to ensure carefully that this event can actually be neglected, or we would need to add an extra rule to determine uniquely the average.
\item Notice that we could have defined the relaxation $F_k$ by considering directly $F_k=Q_k \qk$ instead of Eq. \eqref{eq:definition F_k}, since, in this case, Eq. \eqref{eq:relaxation_term} relaxes also to $\bqk$. However, for this case the relaxation is weaker. This is a modeling choice. We will prove in Sec. \ref{sec:compare_micro} that our choice here is the one that corresponds to the model presented in Ref. \cite{bodyattitude} where the body attitude is described with rotation matrices.
\item One can check that the particle system \eqref{eq:particleX}-\eqref{eq:particleQ} is frame invariant, in the sense that if $\tilde{X}_k=R_{\textnormal{frame}}(X_k)$ for $k=1\dots N$, with $R_{\textnormal{frame}}$ the frame change associated to the quaternion $q_{\textnormal{frame}}$, and $\tilde{q}_k=q_{\textnormal{frame}}q_k$, then the pair $(\tilde{X}_k, \tilde{q}_k)$ satisfies the same system (with the appropriate initial conditions).

\end{enumerate}
\end{remark}

\subsection{Mean-field limit}
\label{sec:mean_field_limit}

We now obtain formally the mean-field limit for Eqs. \eqref{eq:particleX}--\eqref{eq:particleQ} as the number of particles $N\to \infty$. The rigorous mean-field limit has been proven for the Vicsek model in Ref. \cite{bolley2012mean}. A key difference between the Vicsek model and the system \eqref{eq:particleX}--\eqref{eq:particleQ} is the way we compute the average in Eq. \eqref{eq:definition average}.
Consider the empirical distribution in $(x,\q)\in\R^3\times \unitq$ over time 
$$f^N(t, x, \q): = \frac{1}{N} \sum_{i=1}^N \delta_{(X_i(t), \q_i(t))}(x,\q),$$
where $(X_i(t),\q_i(t))_{i=1,\hdots, N}$ satisfy Eqs. \eqref{eq:particleX}--\eqref{eq:particleQ}.

Assume that $f^N$ converges weakly to $f=f(t,x,\q)$ as $N\to\infty$. 
It is standard to show (formally) that $f$ satisfies
\beqarl \label{eq:kinetic_equation}
&&\pa_t f + \nabla_x \cdot (v_0 \vezero(\q) f) + \nabla_\q \cdot (F_f f) = \frac{D}{4} \Delta_\q f,\\
&&F_f = \nu {P_{\q^\perp}}\bar \q^K_f \left( \bar\q^K_f \cdot \q \right) =  \nu {P_{\q^\perp}} \left(\bar \q_f^K \otimes \bar \q_f^K \right) \q,\nn\\
&&\bar \q^K_f =\text{unitary eigenvector of the maximal eigenvalue of $Q^K_f$}\nn\\
&&\qquad = \arg \max \{ \q \mapsto \q\cdot Q^K_f \q \}, \nn\\
&& Q_f^K = \int_{\R^3} \int_{\unitq} K(|x-y|) \lp \q \otimes \q -\frac{1}{4}\Id \rp f(t,y,\q) \ud \q \ud y, \nn
\eeqarl
where $\nabla_\q$ and $\Delta_\q$ denote the gradient and the laplacian in $\unitq$, respectively, and $\ud \q$ is the Lebesgue measure on $\unitq$.

\section{Hydrodynamic limit}

The goal of this section is the derivation of the macroscopic equations for Eq. \eqref{eq:kinetic_equation}. After a dimensional analysis and a time and space scaling described next in Sec. \ref{sec:scaling}, we recast Eq. \eqref{eq:kinetic_equation} into
\beqarl
\label{eq:f_eps}
\pa_t f^\eps + \nabla_x \cdot (  \vezero(\q) f^\eps)   &=& \frac{1}{\eps}\Gamma(f^\eps) + \mathcal{O}(\eps),\\
\Gamma(f) &:=& -\nu \nabla_\q \cdot ({P_{\q^\perp}}\lp\lp \bar\q_{f} \otimes \bar\q_{f}\rp \q\rp\, f) +\frac{D}{4} \Delta_\q f ,\label{eq:def_Gamma}
\eeqarl
with $\bar \q_{f}$ defined by
\beqarl
\bar \q_f &=& \text{unitary eigenvector of the maximal eigenvalue of $Q_f$} \nonumber \\
\label{eq:definition_q_f} &=& \arg \max \{ \q \mapsto \q\cdot Q_f \q,\, \q\in\unitq \},
\eeqarl
where
\be\label{def:Q_f} Q_f= \int_\unitq f(t,x,\q) \, \lp \q \otimes \q - \frac{1}{4} \Id \rp \, \ud\q,\ee
$\ud \q$ being the Lebesgue measure on $\unitq$. (Note that after the dimensional analysis and the rescaling the values of the parameters $D$ and $\nu$ as well as the variables $t$ and $x$ have changed: see details in Sec. \ref{sec:scaling}.)

We then analyse in Sec. \ref{sec:equilibrium} the collisional operator $\Gamma$ in Eq. \eqref{eq:def_Gamma}, particularly, we determine its (von-Mises-like) equilibria, given by (for $\bar \q\in\unitq$)
\be \label{eq:equilibria}
M_{\bar \q}(\q) = \frac{1}{Z}\exp\lp\frac{2}{d}\lp\lp \bar \q \cdot \q\rp^2-\frac{1}{4} \rp\rp,
\ee
where $d$ is a parameter given by
\be d=\frac{D}{\nu}, \label{eq:def_d}
\ee
and where $Z$ is a normalizing constant (such that $\int_\unitq M_{\bar \q} \ud\q =1$). We then describe the structure of the \emph{Generalized Collision Invariants} for $\Gamma$ in Sec. \ref{sec:GCI}.

With this information we are ready to prove our main result:

\begin{theorem}[(Formal) macroscopic limit]
 \label{th:macro_limit}
When $\eps\to 0$ in the kinetic equation \eqref{eq:f_eps} it holds (formally) that
$$f^\eps \to f= f(t,x, \q)=\rho M_{\bar\q}(\q), \quad \bar\q=\bar\q(t,x)\in \unitq, \, \rho=\rho(t,x) \geq 0.$$
Moreover, if the convergence is strong enough and the functions $\bar \q$ and $\rho$ are regular enough, then they satisfy the system \eqref{sys:macro1}--\eqref{sys:macro3} that we recall here:
\begin{eqnarray}
\label{eq:continuity_equation}&&\pa_t \rho +\nabla_x \cdot (c_1  \vezero ( \bar \q) \rho) = 0,\\
\nn &&\rho \lp \partial_t \bar \q +  c_2 (\vezero(\bar\q) \cdot \nabla_x) \bar \q\rp + c_3\left[\vezero(\bar\q) \times\nabla_x \rho \right] \bar \q  \\
\label{eq:constants} && \qquad \quad +c_4\rho  \left[  \nabla_{x,\textnormal{rel}}\bar \q \,\vezero(\bar \q) +  (\nabla_{x,\textnormal{rel}}\cdot \bar \q) \vezero(\bar\q) \right]\bar \q =0,
\end{eqnarray}
where the (right) relative differential operator $\nabla_{x,\textnormal{rel}}$ is defined in Sec. \ref{sec:SOHQ}, where
\begin{eqnarray}
\vezero(\bar \q) &=& \Ima(\bar \q\vezero \bar \q^*),
\end{eqnarray}
and where $c_i$, $i=1,\hdots, 4$ are explicit constants. To define them we use the following notation: for two real functions $g$, $w$ consider
\be\la g \ra_{w} := \int^\pi_0 g(\theta) \frac{w(\theta)}{\int^\pi_0 w(\theta') \ud\theta'}\, \ud\theta \label{eq:definition_angles}.\ee
Then the constants are given by
\begin{eqnarray}
\label{eq:c1} 
c_1&=& \frac{2}{3}\la 1/2+\cos\theta\ra_{ m\, \sin^2(\theta/2)},\\
c_2&=& \frac{1}{5} \la 1 + 4 \cos\theta \ra_{m \sin^4(\theta/2) h(\cos(\theta/2)) \cos(\theta/2)},\\
c_3&=&\frac{d}{2},\\
c_4&=& \frac{1}{5}\la 1-\cos\theta \ra_{m\sin^4(\theta/2) h(\cos(\theta/2)) \cos(\theta/2)}, 
\end{eqnarray}
where $d$ is given in Eq. \eqref{eq:def_d}, where
\be
m(\theta) := \exp\lp d^{-1}\lp\frac{1}{2}+\cos\theta\rp \rp, \label{eq:tilde_m}
\ee
and where $h$ is the solution of the differential equation \eqref{eq:ode_h}.
\end{theorem}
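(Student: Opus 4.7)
The plan is to follow the Chapman--Enskog expansion adapted for collision operators whose equilibrium manifold has positive dimension; the shortage of classical conservation laws is compensated by integrating the rescaled equation against the Generalized Collision Invariants (GCI) constructed in Prop.~\ref{prop:non constant GCI}.

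First, passing $\eps\to 0$ formally in \eqref{eq:f_eps} gives $\Gamma(f)=0$, and since the kernel of $\Gamma$ consists exactly of nonnegative multiples of the von~Mises profiles \eqref{eq:equilibria}, the limit must have the form $f(t,x,\q)=\rho(t,x)\,M_{\bar\q(t,x)}(\q)$. Integrating \eqref{eq:f_eps} over $\q$ and using $\int_\unitq \Gamma(f)\,\ud\q=0$ (mass is a classical collision invariant) yields, after passing to the limit,
\[
\pa_t\rho + \nabla_x\cdot\Big(\rho \int_\unitq \vezero(\q)\,M_{\bar\q}(\q)\,\ud\q\Big)=0.
\]
By rotational invariance of the von~Mises profile around $\bar\q$, the integral equals $c_1\vezero(\bar\q)$; parametrizing $\q=[\cos(\theta/2)+\sin(\theta/2)\mathbf{n}]\bar\q$, the volume element on $\unitq\simeq\mathbb{S}^3$ produces the weight $m(\theta)\sin^2(\theta/2)$ appearing in \eqref{eq:definition_angles}, and averaging in $\mathbf{n}\in\mathbb{S}^2$ recovers formula \eqref{eq:c1} exactly.

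For the body-attitude equation I would exploit that the GCI form a $3$-parameter family $\{\psi_A\}_{A\in\bar\q^\perp}$ linear in $A$ such that $\int_\unitq \Gamma(f)\,\psi_A\,\ud\q=0$ for any $f$ whose nematic principal eigenvector is $\bar\q$. Multiplying \eqref{eq:f_eps} by $\psi_A$, integrating over $\q$, and letting $\eps\to0$ kills the right-hand side and yields, for every $A\in\bar\q^\perp$,
\[
\int_\unitq \Big(\pa_t(\rho M_{\bar\q}) + \nabla_x\cdot(\vezero(\q)\rho M_{\bar\q})\Big)\,\psi_A\,\ud\q=0.
\]
Differentiating $M_{\bar\q}$ via the chain rule -- noting that $M_{\bar\q}$ depends on $\q$ only through $(\bar\q\cdot\q)^2$, so space-time derivatives factor through $\pa_{t,\text{rel}}\bar\q$ and $\nabla_{x,\text{rel}}\bar\q$ -- and decomposing $\vezero(\q)$ around $\vezero(\bar\q)$ via the Rodrigues formula produces four families of angular moments, each of which reduces by the axis--angle parametrization to the bracket $\la\cdot\ra$ of \eqref{eq:definition_angles} with the stated weights; the diffusion contribution giving $c_3=d/2$ comes from an integration by parts against the Laplace--Beltrami part of $\Gamma$, and the weight $h(\cos(\theta/2))$ appearing in $c_2$ and $c_4$ enters through the radial profile of the GCI, solution of \eqref{eq:ode_h}.

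The main obstacle will be recasting the resulting identity, valid for all $A\in\bar\q^\perp$, into the precise form \eqref{eq:constants}. The quadratic dependence $\vezero(\bar\q)=\Ima(\bar\q\vezero\bar\q^*)$ makes $\nabla_x\vezero(\bar\q)$ a sum of several terms involving $\nabla_{x,\text{rel}}\bar\q$, and one must carefully repackage them so as to isolate the convection piece $c_2(\vezero(\bar\q)\cdot\nabla_x)\bar\q$, the pressure contribution $c_3[\vezero(\bar\q)\times\nabla_x\rho]\bar\q$, and the two body-attitude-specific terms $\nabla_{x,\text{rel}}\bar\q\,\vezero(\bar\q)$ and $(\nabla_{x,\text{rel}}\cdot\bar\q)\vezero(\bar\q)$. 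The key structural fact is that any element of the tangent space at $\bar\q$ can be written uniquely as $\mathbf{u}\bar\q$ with $\mathbf{u}$ a purely imaginary quaternion (Prop.~\ref{prop:tangent space}); writing all tangent vectors in this form and testing against a basis of $\bar\q^\perp$ lets one read off the four coefficients of \eqref{eq:constants} one by one.
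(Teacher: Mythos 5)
Your proposal follows essentially the same route as the paper: pass to the limit to identify $f=\rho M_{\bar\q}$, integrate the kinetic equation against $1$ for the continuity equation, then test against the non-constant GCIs $\psi^\beta(\q)=(\beta\cdot\q)\,h(\q\cdot\bar\q)$ with $\beta\in\bar\q^\perp$ and unpack the resulting moment identity, using the axis--angle parametrization of $\unitq$ (Prop.~\ref{prop:volume_element}) to reduce all moments to the weighted angular averages of \eqref{eq:definition_angles}. This matches the structure of the proof in Sec.~\ref{sec:macro_limit} (the decomposition $X=X_1+X_2+X_3+X_4$, the change of variable $\q\mapsto\q\bar\q^\ast$, the Rodrigues-type expansion of $\vezero(\q\bar\q)$, and the reduction of $\Ima(\q)\otimes\Ima(\q)$ to diagonal form by parity).

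One attribution in your sketch is off, though: $c_3=d/2$ does not arise from "an integration by parts against the Laplace--Beltrami part of $\Gamma$." The very purpose of the GCI is that $\int_\unitq\Gamma(f^\eps)\,\psi_{\bar\q_{f^\eps}}\,\ud\q=0$ identically (Eq.~\eqref{eq:GCIworks}), so no trace of the collision operator survives at $O(1)$. Instead, the factor $d$ enters through the derivatives of the equilibrium $M_{\bar\q(t,x)}$ with respect to the slow variables $(t,x)$: since $\log M_{\bar\q}$ carries a $\tfrac{2}{d}$ prefactor, one gets $\partial_t(\rho M_{\bar\q})=(\partial_t\rho)M_{\bar\q}+\rho M_{\bar\q}\tfrac{4}{d}(\q\cdot\bar\q)(\q\cdot\partial_t\bar\q)$ and similarly for $\nabla_x$, so that $X_2$ and $X_4$ are proportional to $\tfrac{4}{d}$ while $X_3$ (which comes from $\vezero(\q)\cdot\nabla_x\rho$) is not. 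Dividing through by the common factor $\tfrac{4}{d}C_2$ is what produces $c_3=d/2$. Apart from this misdescription, the argument and the intended bookkeeping (isolating $c_2,c_3,c_4$ by testing against a basis of $\bar\q^\perp$ and writing tangent vectors as $\mathbf{u}\bar\q$) line up with the paper.
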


We recall that in Sec. \ref{sec:discussion_results} we provided a discussion of this main result. The proof is given in Sec. \ref{sec:macro_limit}. We conclude this study with Sec. \ref{sec:comparison_macro} where we compare the macroscopic limit obtained here with the corresponding one for the body attitude model with rotation matrices  from Ref. \cite{bodyattitude}.

\subsection{Scaling and expansion}
\label{sec:scaling}
We assume that the kernel of influence $K$ is Lipschitz, bounded and such that
\be \label{eq:K_properties}
K=K(|x|)\geq 0, \quad \int_{\R^3}K(|x|)\, d x=1, \quad \int_{\R^3}|x|^2 K(|x|)\, dx <\infty.
\ee

We express the kinetic equation \eqref{eq:kinetic_equation} in dimensionless variables. Let $\nu_0$ be the typical interaction frequency, i.e., $\nu=\nu_0 \nu'$ with $\nu'=\mathcal{O}(1)$. We consider also the typical time and space scales $t_0, x_0$ with $t_0=\nu_0^{-1}$ and $x_0=v_0t_0$. With this we define the non-dimensional variables $t'=t/t_0$, $x'=x/t_0$. Consider also the dimensionless diffusion coefficient $D'=D/\nu_0$ and the rescaled influence kernel $K'(|x'|)=K(x_0|x'|)$. Skipping the primes we get the same equation as Eq. \eqref{eq:kinetic_equation} except that $v_0=1$, all the quantities are dimensionless and $D, \nu$ and $K$ are assumed to be of order 1. Notice, in particular, that
$$d^{-1}=\frac{\nu}{D}=\frac{\nu'}{D'},$$
is the same before and after the dimensional analysis.

To perform the macroscopic limit we rescale space and time by $x'=\eps x$ and $t'=\eps t$. After skipping the primes we obtain:
\beqarl
&&\eps\left[\pa_t f^\eps + \nabla_x \cdot (  \vezero(\q) f^\eps)\right] + \nabla_\q \cdot (F^\eps_{f^\eps} f^\eps) = \frac{D}{4} \Delta_\q f^\eps, \label{eq:rescaled_kinetic equation}\\
&&F^\eps_f = \nu {P_{\q^\perp}}\bar \q^\eps_f \left(\bar \q^\eps_f \cdot \q \right) =  \nu {P_{\q^\perp}} \left(\bar \q^\eps_f \otimes \bar \q^\eps_f \right) \q,\nn\\
&&\bar \q^\eps_f = \text{unitary eigenvector of the maximal eigenvalue of $Q^\eps_f$}\nn\\
&& \qquad= \arg \max \{ \q \mapsto \q\cdot Q^\eps_f \q, \, \q\in\unitq\},\label{eq:def_qf}\\
&&Q^\eps_f = \frac{1}{\eps^3} \int_{\R^3} \int_{\unitq} K\lp \frac{|x-y|}{\eps}\rp  \lp \q \otimes \q - \frac{1}{4} \Id \rp f(t,y,\q)\, \ud \q \ud y. \nn
\eeqarl

\begin{lemma}\label{lem:Q_expansion}
For any sufficiently smooth function $f$, we have the expansion
$$Q^\eps_f = \int f(t,x,\q)\,  \lp \q \otimes \q - \frac{1}{4} \Id \rp \, \ud\q + \mathcal{O}(\eps^2).$$

\end{lemma}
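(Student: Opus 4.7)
The plan is a standard rescaling/Taylor expansion argument, exploiting the radial symmetry of the kernel $K$ stated in \eqref{eq:K_properties}. First I would perform the change of variables $y = x + \eps z$ in the definition of $Q_f^\eps$. The Jacobian cancels the $\eps^{-3}$ prefactor and rewrites
\begin{equation*}
Q^\eps_f = \int_{\R^3}\int_{\unitq} K(|z|)\lp\q\otimes\q - \tfrac{1}{4}\Id\rp f(t,x+\eps z, \q)\,\ud\q\,\ud z.
\end{equation*}

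Next I would Taylor expand $f(t,x+\eps z,\q)$ in the spatial variable around $x$ to order two,
\begin{equation*}
f(t,x+\eps z,\q) = f(t,x,\q) + \eps\, z\cdot\nabla_x f(t,x,\q) + \tfrac{\eps^2}{2}\, z^T D^2_x f(t,x,\q)\, z + R_\eps(t,x,z,\q),
\end{equation*}
where the remainder $R_\eps$ satisfies $|R_\eps|\leq C\eps^3 |z|^3 \|\nabla_x^3 f\|_\infty$ for sufficiently smooth $f$. The zeroth-order term, using $\int_{\R^3} K(|z|)\,\ud z = 1$, produces precisely the asserted leading expression $\int_{\unitq} f(t,x,\q)\lp\q\otimes\q - \tfrac14 \Id\rp\ud\q$.

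The first-order term vanishes by parity: since $K(|z|)$ is even in $z$ and the integrand contains the odd factor $z$, one gets $\int_{\R^3} K(|z|)\,z\,\ud z = 0$. For the second-order term, the radial symmetry of $K$ implies $\int_{\R^3} K(|z|)\, z_i z_j\,\ud z = \tfrac13\delta_{ij}\int_{\R^3} K(|z|)|z|^2\,\ud z$, which is finite by \eqref{eq:K_properties}; hence this contribution is of the form $\tfrac{\eps^2}{6}\bigl(\int K(|z|)|z|^2\ud z\bigr)\,\Delta_x\!\int_{\unitq}\! f(t,x,\q)\lp\q\otimes\q-\tfrac14\Id\rp\ud\q$, which is $\mathcal{O}(\eps^2)$. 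The remainder contributes $\mathcal{O}(\eps^3)$ thanks again to the finite second moment (and the boundedness of $|\q\otimes\q-\tfrac14\Id|$ on $\unitq$ plus smoothness of $f$).

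There is no real obstacle here beyond bookkeeping; the only points requiring mild care are (i) justifying that the $\eps^3$ remainder is indeed controlled (which needs $f$ to have bounded third derivatives in $x$, hence the hypothesis ``sufficiently smooth''), and (ii) noting that we silently use $\int K(|z|)|z|\,\ud z<\infty$, which follows from Cauchy--Schwarz combined with the first and second moment conditions in \eqref{eq:K_properties}. Everything else is a direct consequence of the normalization and radial symmetry of $K$.
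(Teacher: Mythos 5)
Your approach is exactly the one the paper uses (rescale $y=x+\eps z$, Taylor expand in $x$, use parity of $K$ to kill the first-order term, and use the finite second moment for the $\mathcal{O}(\eps^2)$ bound); the paper simply states this compressed into one line. The only slip is in the remainder: after expanding to second order you bound $|R_\eps|\lesssim \eps^3 |z|^3\|\nabla_x^3 f\|_\infty$ and claim it contributes $\mathcal{O}(\eps^3)$ ``thanks to the finite second moment,'' but controlling $\int K(|z|)|z|^3\,\ud z$ actually requires a finite \emph{third} moment, which is not among the hypotheses in Eq.~\eqref{eq:K_properties}. This is easily repaired and in fact the expansion to third order is unnecessary: stopping at first order with Lagrange remainder, $f(t,x+\eps z,\q)=f(t,x,\q)+\eps\,z\cdot\nabla_x f(t,x,\q)+\tfrac{\eps^2}{2}z^T D_x^2 f(t,x+\theta\eps z,\q)z$, the linear term vanishes by parity and the remainder is already $\mathcal{O}(\eps^2|z|^2)$, so the finite second moment of $K$ alone yields the claimed $\mathcal{O}(\eps^2)$ with only bounded second derivatives of $f$ needed.
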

\begin{proof}
The result is obtained by a Taylor expansion in $\eps$ and using that (recall Eq. \eqref{eq:K_properties})
$$\frac{1}{\eps^3}\int K\lp \frac{|x|}{\eps}\rp\, dx =1, \qquad
\frac{1}{\eps^3}\int|x|^2 K(|x|) \, dx =\mathcal{O}(\eps^2).$$
\end{proof}

\begin{proposition}
\label{prop:expansion_q}
For any sufficiently smooth function $f$, it holds that
\be\label{exp:q_f}\bar \q^\eps_f= \left( \bar\q^\eps_f \cdot \bar \q_f \right) \bar \q_f + \mathcal{O}(\eps^2) \quad \mbox{as } \eps\to 0.\ee
In particular, we have
$$\bar \q^\eps_f \otimes \bar \q^\eps_f= \bar \q_f \otimes \bar \q_f + \mathcal{O}(\eps^2) \quad \mbox{as } \eps\to 0.$$
\end{proposition}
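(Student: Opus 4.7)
The proof will rest on standard finite-dimensional perturbation theory for a simple eigenvalue, with the key input being the order $\eps^2$ approximation of the matrix $Q^\eps_f$ by $Q_f$ already established in Lemma \ref{lem:Q_expansion}.

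First I would write the conclusion of Lemma \ref{lem:Q_expansion} as $Q^\eps_f = Q_f + \eps^2 R^\eps_f$, where $R^\eps_f$ is a symmetric $4\times4$ matrix uniformly bounded in $\eps$ (this uniform bound follows from the smoothness of $f$ and the second-moment assumption on $K$ in \eqref{eq:K_properties}). Under the (formal) simplicity assumption on the maximal eigenvalue of $Q_f$ already invoked after Eq. \eqref{eq:definition average}, denote by $\lambda_1 > \lambda_2 \geq \lambda_3 \geq \lambda_4$ the eigenvalues of $Q_f$, by $\bar\q_f$ a unitary eigenvector associated to $\lambda_1$, and let $\gamma := \lambda_1 - \lambda_2 > 0$ denote the spectral gap. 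By continuity of simple eigenvalues (Kato perturbation theory), for $\eps$ small enough $Q^\eps_f$ also has a simple maximal eigenvalue $\lambda_1^\eps$, with eigenvector $\bar\q^\eps_f$ that we fix (modulo the intrinsic sign ambiguity) by requiring $\bar\q^\eps_f \cdot \bar\q_f \geq 0$.

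Next I would derive the desired $\eps^2$ bound for the orthogonal component. Decompose $\bar\q^\eps_f = \alpha^\eps \bar\q_f + \mathbf{w}^\eps$ with $\alpha^\eps = \bar\q^\eps_f \cdot \bar\q_f$ and $\mathbf{w}^\eps \in \bar\q_f^\perp$. Projecting the eigenvalue equation $Q^\eps_f \bar\q^\eps_f = \lambda_1^\eps \bar\q^\eps_f$ onto $\bar\q_f^\perp$ gives
\begin{equation*}
(Q_f - \lambda_1^\eps \Id)\mathbf{w}^\eps = -\eps^2 P_{\bar\q_f^\perp} R^\eps_f \bar\q^\eps_f.
\end{equation*}
On $\bar\q_f^\perp$, the operator $Q_f - \lambda_1^\eps \Id$ has eigenvalues $\lambda_j - \lambda_1^\eps$ for $j=2,3,4$, and since $\lambda_1^\eps = \lambda_1 + \mathcal{O}(\eps^2)$ its inverse is bounded by $1/(\gamma - \mathcal{O}(\eps^2))$ for $\eps$ small. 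Inverting and using that $|\bar\q^\eps_f|=1$ yields $|\mathbf{w}^\eps| = \mathcal{O}(\eps^2)$, which is precisely \eqref{exp:q_f}.

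For the tensor product statement, I would expand
\begin{equation*}
\bar\q^\eps_f \otimes \bar\q^\eps_f = (\alpha^\eps)^2\, \bar\q_f \otimes \bar\q_f + \alpha^\eps\bigl(\bar\q_f \otimes \mathbf{w}^\eps + \mathbf{w}^\eps \otimes \bar\q_f\bigr) + \mathbf{w}^\eps \otimes \mathbf{w}^\eps.
\end{equation*}
The normalization $(\alpha^\eps)^2 + |\mathbf{w}^\eps|^2 = 1$ gives $(\alpha^\eps)^2 = 1 + \mathcal{O}(\eps^4)$, the cross terms are $\mathcal{O}(\eps^2)$, and the last term is $\mathcal{O}(\eps^4)$, so the claimed bound follows. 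The only subtle point — the main (but mild) obstacle — is the treatment of the sign ambiguity of $\bar\q^\eps_f$ and the reliance on simplicity of the maximal eigenvalue; both are handled by the normalization choice $\alpha^\eps \geq 0$ and the formal assumption already adopted in the paper, and they could be made rigorous by excluding a negligible set of configurations at which a spectral crossing occurs.
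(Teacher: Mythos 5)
Your proof is correct and follows essentially the same approach as the paper's: both decompose the error component of $\bar\q^\eps_f$ orthogonal to $\bar\q_f$, use the simplicity of the top eigenvalue to invert a resolvent-type operator $(Q_f-\lambda\,\Id)$ on $\bar\q_f^\perp$, and bound that orthogonal component by $\mathcal{O}(\eps^2)$. The only difference is cosmetic: where you cite Kato/Weyl perturbation theory for the closeness of $\lambda_1^\eps$ to $\lambda_1$, the paper proves it in two lines by a self-contained variational two-sided bound $\lambda^\eps_{\max}\le\lambda_{\max}+\mathcal{O}(\eps^2)$ and vice versa, and it inverts $(Q_f-\lambda_{\max}\Id)$ rather than $(Q_f-\lambda_1^\eps\Id)$ on $\bar\q_f^\perp$ (your version actually gives the slightly sharper $1-(\alpha^\eps)^2=\mathcal{O}(\eps^4)$, though $\mathcal{O}(\eps^2)$ suffices).
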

\begin{proof}
Let $\lambda^\eps_{max}$, respectively $\lambda_{max}$, be the maximal eigenvalue of $Q^\eps_{f}$, respectively $Q_f$ (we assume them to be uniquely defined). From Lem. \ref{lem:Q_expansion}, we have
$$Q^\eps_f = Q_f+\mathcal{O}(\eps^2),$$
and, multiplying by $\bar\q^\eps_f$ on both sides,
\beqar
\lambda^\eps_{max}&=& \bar\q^\eps_f \cdot Q_f \bar\q^\eps_f + \mathcal{O}(\eps^2).
\eeqar
By maximality of $\lambda_{max}$, we have that $\bar\q^\eps_f \cdot Q_f \bar\q^\eps_f \le \lambda_{max}$ so that
\beqar
\lambda^\eps_{max} \le \lambda_{max}+ \mathcal{O}(\eps^2).
\eeqar
By symmetry, we also have $\lambda_{max} \le \lambda^\eps_{max}+ \mathcal{O}(\eps^2)$, therefore
$$\lambda^\eps_{max}-\lambda_{max}=\mathcal{O}(\eps^2).$$

\medskip
On the other hand, we have that
\beqar
(Q_f-\lambda_{max}\Id) P_{\bar \q_f^\perp}\bar \q^\eps_f
&=&(Q_f-\lambda_{max}\Id) (\bar \q^\eps_f-(\bar \q^\eps_f\cdot \bar \q_f)\bar \q_f)\\
&= & (Q_f-\lambda_{max}\Id) \bar \q_f^\eps -0\\
&=& (Q^\eps_f-\lambda^\eps_{max}\Id + \mathcal{O}(\eps^2))\bar \q_f^\eps\\
&=& \mathcal{O}(\eps^2).
\eeqar
By our crucial assumption that $\lambda_{max}$ is a single eigenvalue with eigenvector $\bar \q_f$, we can invert the matrix $(Q_f-\lambda_{max}\Id)$ on the 3-dimensional space $\bar \q_f^\perp$. By a small abuse of notation we write $(Q_f-\lambda_{max}\Id)^{-1}$ its inverse on $\bar \q_f^\perp$. Finally we have
\beqarl
P_{\bar \q_f^\perp}\bar \q^\eps_f = (Q_f-\lambda_{max}\Id)^{-1}  \mathcal{O}(\eps^2) =  \mathcal{O}(\eps^2),
\eeqarl
which proves Eq. \eqref{exp:q_f}. Taking the scalar product with $\bar \q^\eps_f$ and using the fact that $\bar \q^\eps_f$ is unitary, we have that
$$ 1 - (\bar \q_f\cdot \bar \q^\eps_f)^2 =  \mathcal{O}(\eps^2),$$
so using  Eq. \eqref{exp:q_f} we can finally show that
$$\bar \q^\eps_f \otimes \bar \q^\eps_f= \bar \q_f \otimes \bar \q_f + \mathcal{O}(\eps^2).$$
\end{proof}

Using Prop. \ref{prop:expansion_q} we recast the rescaled kinetic equation \eqref{eq:rescaled_kinetic equation} as
Eq. \eqref{eq:f_eps}

\subsection{Equilibrium solutions and Fokker-Planck formulation}\label{sec:equilibrium}
Define $d=D/\nu$ and consider the generalisation of the von-Mises distribution in $\unitq$:
\be
M_{\bar \q}(\q) = \frac{1}{Z}\exp\lp\frac{2}{d}\lp\lp \bar \q \cdot \q\rp^2-\frac{1}{4} \rp\rp, \quad \int_{\unitq}M_{\bar\q}(\q)\, \ud\q =1, \quad \bar \q\in\unitq,
\ee
where $Z$ is a normalizing constant. Observe that $Z<\infty$ is independent of $\bar\q$ since the volume element in $\unitq$ is left-invariant, i.e.,
\beqar
Z&=& \int_{\unitq}\exp\lp\frac{2}{d}\lp\lp \bar \q \cdot \q\rp^2-\frac{1}{4} \rp\rp \, \ud\q = \int_{\unitq} \exp\lp\frac{2}{d}\lp\lp 1 \cdot \q\bar\q^*\rp^2-\frac{1}{4} \rp\rp \, \ud\q\\
&=& \int_{\unitq} \exp\lp\frac{2}{d}\lp\lp 1 \cdot \q\rp^2-\frac{1}{4} \rp\rp\, \ud\q.
\eeqar
Note that we can recast
\be \label{eq:M_and_m}
M_{\bar \q}(\q) = \frac{m(\theta)}{4\pi \int_0^\pi m(\theta') \sin^2(\theta'/2) \ud\theta'}, \quad \text{with } \bar\q\cdot\q = \cos(\theta/2),
\ee
 with $m(\theta)$ given by Eq. \eqref{eq:tilde_m}. Indeed,
\beqar
M_{\bar \q}(\q) &=& \frac{1}{Z} \exp\lp \frac{2}{d} (\cos^2(\theta/2)-\frac{1}{4})\rp = \frac{1}{Z} \exp\lp \frac{1}{d} (\cos(\theta)+\frac{1}{2})\rp= \frac{1}{Z} m(\theta),
\eeqar
and by Prop. \ref{prop:volume_element},
\be
Z = 4\pi \int_0^\pi \exp\lp \frac{2}{d} (\cos^2(\theta'/2)-\frac{1}{4})\rp \sin^2(\theta'/2) \ud\theta' = 4\pi \int_0^\pi m(\theta') \sin^2(\theta'/2) \ud\theta'.
\ee
\begin{proposition}[Properties of $\Gamma$]
\label{prop:properties_Gamma}
 The following holds:
\begin{itemize}
\item[i)] The operator $\Gamma$ in Eq. \eqref{eq:def_Gamma} can be written as
\be \label{eq:Gamma_Fokker_Planck_form}
\Gamma(f)= \frac{D}{4}\nabla_\q \cdot \lp M_{\bar \q_f}\nabla_\q \lp \frac{f}{M_{\bar \q_f}} \rp \rp,
\ee
and we have
\be \label{eq:H_def}
\textnormal{H}(f):= \int_\unitq \Gamma(f)\, \frac{f}{M_{\bar \q_f}}\, \ud\q = -\frac{D}{4}\int_{\unitq} M_{\bar \q_f} \left| \nabla_\q \lp\frac{f}{M_{\bar \q_f}} \rp \right|^2\, \ud\q\, .
\ee
\item[ii)] The equilibria, i.e., the functions $f=f(\q)\ge0$ such that $\Gamma(f)=0$ form a 4-dimensional manifold $\mathcal{E}$ given by
$$\mathcal{E}= \left\{\rho M_{\bar\q}(\q)\, |\, \rho\geq 0, \, \bar\q\in\unitq \right\},$$
where $\rho$ is the macroscopic mass, i.e.,
$$\rho = \int_{\unitq} \rho M_{\bar \q}(\q)\, \ud\q,$$
and $\bar \q$ is the eigenvector corresponding to the maximum eigenvalue of 
$$\int_\unitq \q \otimes \q \, \rho M_{\bar \q}(\q)\, \ud\q  .$$
Furthermore, $\textnormal{H}(f)=0$ iff $f=\rho M_{\bar \q}$ for some $\rho\geq 0$ and $\bar \q \in \unitq$.

\end{itemize}
\end{proposition}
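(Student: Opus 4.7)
\emph{Proof plan.} The key computation behind part~(i) is the identity
$$\nabla_\q \ln M_{\bar\q}(\q) \;=\; \frac{4}{d}\,P_{\q^\perp}\bigl((\bar\q \otimes \bar\q)\q\bigr),$$
which I would obtain directly from $\ln M_{\bar\q} = \frac{2}{d}(\bar\q\cdot\q)^2 - \ln Z$ by computing the spherical gradient on $\unitq \simeq \mathbb{S}^3$: $\nabla_\q(\bar\q\cdot\q)^2 = 2(\bar\q\cdot\q)\,P_{\q^\perp}\bar\q = 2\,P_{\q^\perp}((\bar\q\otimes\bar\q)\q)$. Since $\nu d = D$, the drift in \eqref{eq:def_Gamma} is exactly $\frac{D}{4}\nabla_\q\ln M_{\bar\q_f}$, so treating $\bar\q_f$ as a parameter determined by $f$ I can rewrite
$$\Gamma(f) \;=\; \frac{D}{4}\,\nabla_\q\cdot\bigl(\nabla_\q f - f\,\nabla_\q \ln M_{\bar\q_f}\bigr) \;=\; \frac{D}{4}\,\nabla_\q\cdot\Bigl(M_{\bar\q_f}\nabla_\q\bigl(f/M_{\bar\q_f}\bigr)\Bigr),$$
which is~\eqref{eq:Gamma_Fokker_Planck_form}. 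Multiplying by $f/M_{\bar\q_f}$ and integrating by parts on the closed manifold $\unitq$ (no boundary, no boundary terms) then yields~\eqref{eq:H_def}.

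For part~(ii), suppose $\Gamma(f)=0$ with $f \geq 0$. The dissipation identity~\eqref{eq:H_def} forces $\textnormal{H}(f)=0$, whence $\nabla_\q(f/M_{\bar\q_f}) \equiv 0$, so $f = \rho\,M_{\bar\q_f}$ with $\rho = \int_{\unitq} f\,\ud\q$ constant in $\q$; by the very definition of $\bar\q_f$, this $f$ lies in $\mathcal{E}$. Conversely, for $f = \rho M_{\bar\q}$ with $\bar\q \in \unitq$ and $\rho \geq 0$, I need the self-consistency that $\bar\q$ is (up to sign) the maximal eigenvector of $\int_{\unitq}\q\otimes\q\, M_{\bar\q}\,\ud\q$. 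Once this holds, $\bar\q_f = \pm\bar\q$, $M_{\bar\q_f} = M_{\bar\q}$ (the exponent in $M$ is sign-invariant), $f/M_{\bar\q_f} = \rho$ is constant, and~\eqref{eq:Gamma_Fokker_Planck_form} gives $\Gamma(f)=0$. Dimensionally, $\bar\q$ ranges over $\unitq\simeq\mathbb{S}^3$ (three parameters, with the $\pm$ redundancy not changing dimension) and $\rho \in [0,\infty)$ adds one, so $\mathcal{E}$ is four-dimensional. The equivalence $\textnormal{H}(f)=0 \iff f \in \mathcal{E}$ follows from the sign-definiteness of the integrand in~\eqref{eq:H_def}.

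\emph{Main obstacle.} The only non-routine step is the self-consistency check. By axial symmetry of $M_{\bar\q}$ under the rotations fixing $\bar\q$, the matrix $\int \q\otimes\q\,M_{\bar\q}\,\ud\q$ has $\bar\q$ as an eigenvector with eigenvalue $E_{\bar\q} = \int \cos^2(\theta/2)\,M_{\bar\q}\,\ud\q$, and its restriction to $\bar\q^\perp$ is a scalar multiple of the identity with eigenvalue $E_\perp = (1 - E_{\bar\q})/3$ (by taking the trace). The required strict inequality $E_{\bar\q} > E_\perp$ reduces, via $\cos^2(\theta/2)=\tfrac12(1+\cos\theta)$ and the volume element from Prop.~\ref{prop:volume_element}, to $\int_0^\pi(\cos\theta + \tfrac{1}{2})\,m(\theta)\sin^2(\theta/2)\,\ud\theta > 0$. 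An elementary computation shows that this integral \emph{vanishes} when $m$ is replaced by a constant, so I would establish strict positivity by the reweighting trick: for the unique $\theta_\star \in (0,\pi)$ with $\cos\theta_\star = -\tfrac12$, the factors $\cos\theta + \tfrac12$ and $m(\theta)-m(\theta_\star)$ have the same sign everywhere by strict monotonicity of $m(\theta) = e^{d^{-1}(1/2+\cos\theta)}$ on $[0,\pi]$, so their product is $\geq 0$ and is not identically zero, while $m(\theta_\star)\int(\cos\theta+\tfrac12)\sin^2(\theta/2)\,\ud\theta = 0$.
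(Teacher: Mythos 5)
Your proof is correct, and for part~(i) it follows essentially the same route as the paper (compute $\nabla_\q \ln M_{\bar\q}$, identify the drift, assemble the Fokker--Planck form, integrate by parts). For part~(ii) the overall skeleton is the same, but you take two genuinely different routes for the key steps, and both buy something.

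For the diagonalization of $T := \int_{\unitq}\q\otimes\q\, M_{\bar\q}\,\ud\q$, the paper works concretely: it uses the left-multiplication isometry $E^l(\bar\q)$, changes variables $\q' = \bar\q^\ast\q$, and verifies directly that the resulting matrix is diagonal with first entry $I^2$ and the remaining three equal to $(1-I^2)/3$. You instead invoke the invariance of $M_{\bar\q}$ under rotations of $\unitq$ fixing $\bar\q$ (an $SO(3)$ action on $\bar\q^\perp$), apply Schur's lemma to conclude $T|_{\bar\q^\perp}$ is scalar, and recover the eigenvalues from the trace. This is a cleaner, coordinate-free argument, though slightly less self-contained.

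For the strict inequality $E_{\bar\q} > 1/4$, which both proofs reduce to the positivity of $\int_0^\pi (\cos\theta+\tfrac12)\, m(\theta)\sin^2(\theta/2)\,\ud\theta$, the paper differentiates $I^2$ in $d$, applies Jensen's inequality to show the derivative is strictly negative, and then uses dominated convergence plus an integration by parts to evaluate $\lim_{d\to\infty}I^2 = 1/4$. Your reweighting argument is shorter and more elementary: you observe that the unweighted integral vanishes, pick the $\theta_\star$ with $\cos\theta_\star = -1/2$, and exploit the fact that $\cos\theta + \tfrac12$ and $m(\theta) - m(\theta_\star)$ have the same sign everywhere by strict monotonicity of $m$, so that subtracting $m(\theta_\star)\cdot 0$ gives a manifestly positive integral. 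This is essentially a Chebyshev correlation inequality, avoiding both the differentiation in $d$ and the limit argument. Both routes are valid; yours is more direct, while the paper's also establishes the (potentially useful) monotonicity of the nematic order parameter in $d$.

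One small point worth stating explicitly when writing this up: in the converse direction you should note that $\bar\q_{\rho M_{\bar\q}}$ is determined only up to sign, but $M_{-\bar\q} = M_{\bar\q}$ (the exponent is quadratic in $\bar\q\cdot\q$), so the quotient $f/M_{\bar\q_f}$ is well defined and constant. You do flag this, so the argument is complete.
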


\begin{remark}[Comparison with the equilibria considered in Ref. \cite{bodyattitude}]
\label{rem:equilibria_comparison}
Thanks to Eq. \eqref{eq:equivalence_scalar_product}, one can check that the equilibria $M_{\bar\q}$ represents the same equilibria as for the kinetic model corresponding to the body attitutude model with rotation matrices in Ref. \cite{bodyattitude}, which is given by:
$$M_{\Lambda}(A)=\frac{1}{Z'}\exp\lp d^{-1}(A\cdot \Lambda) \rp, \qquad\mbox{for } \Lambda, A\in SO(3),$$
(where $Z'$ is a normalizing constant), i.e., as long as $\Lambda=\Phi(\bar\q)$ and $A=\Phi(\q)$ ($\Phi$ is defined in Eq. \eqref{eq:def_Phi}), we have $Z'\, M_{\Lambda}(A) = Z M_{\bar \q}(\q)$. Note that the normalizing constants $Z$ and $Z'$ are not equal since the measures chosen on $SO(3)$ and $\unitq$ are identical only up to a multiplicative constant.
\end{remark}

\begin{proof}[Proof of Prop. \ref{prop:properties_Gamma}] 
\mbox{}
\paragraph{Proof of point i)}
Eq. \eqref{eq:Gamma_Fokker_Planck_form} is consequence of the fact that
$$(\bar \q \cdot \q)^2 = \q \cdot \lp \bar \q \otimes \bar \q\rp\q,$$
and that (see Prop. \ref{prop:})
$$\frac{1}{2}\nabla_\q \lp\q \cdot (\bar \q \otimes \bar \q)\q-\frac{1}{4} \rp= P_{\q^\perp}((\bar \q \otimes \bar \q) \q).$$
A computation similar to  \cite[Lemma 4.3] {bodyattitude} allows us to conclude Eq. \eqref{eq:Gamma_Fokker_Planck_form}. Inequality \eqref{eq:H_def} follows from Eq. \eqref{eq:Gamma_Fokker_Planck_form} and the Stokes theorem in $\unitq$. 

\paragraph{Proof of point ii)}
From inequality \eqref{eq:H_def}, we have that if $\Gamma(f)=0$, then $\frac{f}{M_{\bar \q_f}}$ is constant in $\q$. We denote this constant by $\rho$ (which is positive since $f$ and $M_{\bar \q_f}$ are positive).
\medskip
We are left with proving  that $\bar \q$ is the eigenvector corresponding to the maximum eigenvalue of 
$$\int_{\unitq} \q\otimes \q\, M_{\bar \q}(\q)\, \ud\q,$$
(this will not change if multiplied by $\rho$ since it is positive).

For any quaternion $\p_0\in\mathbb{H}$, the left multiplication by $\p_0$, that is $\p\in\mathbb{H}\mapsto \p_0\p \in\mathbb{H}$, is an endomorphism on $\mathbb{H}$. We write $E^l(\p_0)$ the associated matrix, so that for all $\p~\in~\mathbb{H}$, the (quaternion) product $\p_0\p$ is equal to the (matrix) product $E^l(\p_0)\p$.
Using the change of variable $\q'=\bar \q^\ast \q$, we compute
\begin{eqnarray}
\int_\unitq \q \otimes \q M_{\bar \q}(\q)\, \ud\q &=& \int_\unitq (\bar \q\q) \otimes (\bar \q\q) M_{1}(\q)\, \ud\q \nn \\
&=& \int_\unitq E^l(\bar \q) (\q \otimes \q)E^l(\bar \q)^t M_{1}(\q)\, \ud\q \nn\\
&=& E^l(\bar \q) \lp \int_\unitq (\q \otimes \q) M_{1}(\q)\, \ud\q \rp E^l(\bar \q)^t \nn.
\end{eqnarray}
To compute the value of the integral in the term above, first note that $M_1(\q)$ depends only on $\Real \q$. We use a change of variable that switch $q_i$ and $q_j$ (for $i\neq j$) to check that the off-diagonal terms $(i,j)$ and $(j,i)$ are zero. Then we compute the diagonal terms: the zeroth diagonal term is clearly given by $\int_\unitq(\Real(\q))^2 M_{1}(\q)\, \ud\q$, while with the same changes of variable that switch $q_i$ and $q_j$ (for $i\neq j$) we check that the first to third diagonal terms are identical and equal to $\frac{1}{3}\int_\unitq \Ima^2(\q) M_{1}(\q)\, \ud\q$. Using the fact that $\Real^2(\q) + \Ima^2(\q)=1 $, we obtain
\begin{eqnarray}
\int_\unitq \q \otimes \q M_{\bar \q}(\q)\, \ud\q &=& E^l(\bar \q) \lp \int_\unitq \text{diag}[(\Real(\q))^2,\frac{1-(\Real(\q))^2}{3},\dots] M_{1}(\q)\, \ud\q \rp E^l(\bar \q)^t\nn\\
\label{eq:diagonalization1}&=&  E^l(\bar \q) \lp \text{diag}[I^2,\frac{1-I^2}{3},\dots] \rp E^l(\bar \q)^t,
\end{eqnarray}
where we defined
\begin{eqnarray}
I^2:=\int_\unitq (\Real(\q))^2 M_{1}(\q)\, \ud\q >0.
\end{eqnarray}

Note that for any $\p \in\mathbb{H}$, we have $\p^t E^l(\bar \q)^t E^l(\bar \q) \p = \lp E^l(\bar \q) \p\rp^t E^l(\bar \q ) \p = |\bar \q  \p|^2=|\p|^2$. Therefore, $E^l(\bar \q )^t E^l(\bar \q )=\Id$, which implies, since $E^l(\bar \q )$ is invertible (with inverse $E^l(\bar \q ^\ast)$), that $E^l(\bar \q )^t = E^l(\bar \q )^{-1}=E^l(\bar \q ^\ast)$.

Therefore, equality \eqref{eq:diagonalization1} is a diagonalization of the matrix $\int_\unitq \q \otimes \q M_{\bar \q}(\q)\, \ud\q$ in a orthonormal basis. It is direct to check that $\bar \q$ is an eigenvector corresponding to the first eigenvalue $I^2$. It is the maximum eigenvalue, if and only if,
$$ I^2>\frac{1-I^2}{3},$$
that is, if and only if,
\be \label{cdt:I2} I^2 > \frac{1}{4}.
\ee
We compute (using Prop. \ref{prop:volume_element})
\begin{eqnarray}
I^2&=&\frac{\int_\unitq (\Real(\q))^2 \exp{\lp(\Real(\q))^2/d\rp}\, \ud\q}{\int_\unitq \exp{\lp(\Real(\q))^2/d\rp}\, \ud\q}=\frac{\int_0^\pi \cos^2\theta \exp(\cos^2 \theta/d) \sin^2\theta \ud\theta}{\int_0^\pi \exp(\cos^2\theta/d)\sin^2\theta\, \ud\theta}\\
&=&\frac{\int_{[-1,1]} r^2 \exp(r^2/d) (1-r^2)^{1/2} dr}{\int_{[-1,1]} \exp(r^2/d)(1-r^2)^{1/2}\, dr}
 =:\frac{I_1}{I_0},
\end{eqnarray}
and, writing $w(r)=(1-r^2)^{1/2} \exp{\lp r^2/d\rp}$, we have that
\begin{eqnarray}
\frac{\ud}{\ud d}I^2&=& \frac{I_0 \frac{\ud}{\ud d}I_1- I_1 \frac{\ud}{\ud d} I_0}{I_0^2}\\
&=&\frac{-1}{d^2}\frac{1}{I_0^2} \lp  \int_{-1}^1 w(r) \ud r  \int_{-1}^1 r^4 w(r) \ud r - \lp \int_{-1}^1 r^2 w(r) \ud r \rp^2   \rp < 0,
\end{eqnarray}
by Jensen's inequality.

Therefore, we conclude
\begin{eqnarray}
I^2 > \lim_{d\longrightarrow \infty} I^2.
\end{eqnarray}
By the dominated convergence theorem, and using an integration by parts, we have
\begin{eqnarray}
\lim_{d\longrightarrow \infty} I^2 = \frac{\int_0^\pi \cos^2\theta \sin^2\theta \ud \theta}{\int_0^\pi \sin^2\theta \ud \theta} =\frac{1}{4},
\end{eqnarray}
so that \eqref{cdt:I2} holds true, which completes the proof.

\end{proof}

\subsection{Generalised Collision Invariants}
\label{sec:GCI}

\subsubsection{Definition and characterisation}

Consider the rescaled kinetic equation \eqref{eq:f_eps}--\eqref{eq:def_Gamma}. Formally, the limit $f$ of $f^\eps$ as $\eps \to 0$ belongs to the kernel of $\Gamma$ which, by Prop. \ref{prop:properties_Gamma}, means that $f(t,x, \q) = \rho(t,x) M_{\bar\q(t,x)}(\q)$ for some functions $\rho(t,x)\ge0$ and $\bar\q(t,x)\in\unitq$. To obtain the macroscopic equations for $\rho$ and $\bar \q$ we start by looking for conserved quantities of the kinetic equation, i.e., we want to identify functions $\psi=\psi(\q)$ such that
$$\int_{\unitq} \Gamma(f)\psi\, \ud\q =0, \quad \mbox{for all } f.$$
By Prop. \ref{prop:properties_Gamma}, this can be rewritten as
$$0=-\int_\unitq M_{\bar\q_f}\nabla_\q \lp\frac{f}{M_{\bar\q_f}}\rp \cdot \nabla_\q \psi\, \ud\q,$$ 
which particularly holds for $\nabla_\q\psi=0$, i.e., when $\psi$ is a constant. Consequently, we only know one conserved quantity for our model corresponding to the macroscopic mass $\rho$. To obtain the macroscopic equation for $\bar \q$, \emph{a priori} we would need 3 more conserved quantities. To sort out this problem, we use the method of the Generalized Collision Invariants (GCI) introduced in Ref. \cite{degond2008continuum}.

\paragraph{Definition of the GCI.}

Define the operator 
$$C(f, \bar \q)=\nabla_\q \cdot \lp M_{\bar\q}\nabla_\q \lp\frac{f}{M_{\bar \q}} \rp \rp,$$
for a function $f$ and $\bar \q\in\unitq$. Notice that
$$\Gamma (f)=C(f, \bar \q_f).$$
\begin{definition}\label{definition:GCI}[Generalised Collision Invariant] A function $\psi\in H^1(\unitq)$ is a generalised collision invariant (GCI) associated with $\bar \q \in \unitq$ if and only if
$$\int_\unitq C(f, \bar \q) \psi\, \ud\q=0, \, \mbox{ for all } f\mbox{ such that } P_{\bar \q^\perp}\left[\int (\q\otimes\q) f(\q) \, \ud\q\, \bar\q\right]=0.$$
We write $GCI(\bar\q)$ the set of GCI associated with $\bar \q$.
\end{definition}
If $\psi$ exists for any given $\bar \q \in \unitq$, consider particularly $\psi_{\bar \q_{f^\eps}}$, the GCI associated with $\bar \q_{f^\eps}$ given by Eq. \eqref{eq:def_qf}. It holds that 
\be \label{eq:GCIworks}
\frac{1}{\eps}\int_\unitq \Gamma(f^\eps)\psi_{\bar \q_{f^\eps}}\, \ud\q = \frac{1}{\eps}\frac{D}{4}\int_\unitq C(f^\eps, \bar \q_{f^\eps}) \psi_{\bar \q_{f^\eps}}\, \ud\q=0,
\ee
 since
\beqar
P_{\bar \q_{f^\eps}^\perp}\left[\int (\q\otimes\q) f^\eps \, \ud\q\, \bar\q_{f^\eps}\right]=P_{\bar \q_{f^\eps}^\perp}(\lambda^\eps_{max} \bar \q_{f^\eps})=0.
\eeqar
Therefore, after multiplying the kinetic equation \eqref{eq:f_eps} by $\psi_{\bar \q_{f^\eps}}$ and integrating on $\unitq$, the right hand side is of order $\eps$.

\paragraph{Characterisation of the GCI.}
The main result of this section is the following description of the set of GCI.
\begin{proposition}[Description of the set of GCI] \label{prop:set_GCI}
Let $\bar \q\in\unitq$. Then
\be
GCI(\bar \q) = \text{span} \left\{ 1,\, \cup_{\beta\in{\bar\q}^\perp} \psi^\beta\right\},
\ee
where, for $\beta\in{\bar\q}^\perp$, the function $\psi^\beta$ is defined by
\be
\psi^\beta(\q):=(\beta\cdot \q) \,h(\q\cdot\bar\q), \label{eq:psi_beta}
\ee
with $h=h(r)$ the unique solution of the following differential equation on $(-1,1)$:
\be\begin{split} \label{eq:ode_h} (1-r^2)^{3/2} \exp\lp\frac{2 r^2}{d}\rp \lp\frac{-4}{d}r^2-3\rp h(r)+ \frac{\ud}{\ud r} \left[  (1-r^2)^{5/2} \exp\lp\frac{2 r^2}{d}\rp h'(r) \right]\\
=r\, (1-r^2)^{3/2}  \exp\lp\frac{2 r^2}{d}\rp. \end{split}\ee
Furthermore, the function $h$ is \emph{odd}: $h(-r)=-h(r)$, and it satisfies for all $r\ge0$, $h(r)\le0$.
\end{proposition}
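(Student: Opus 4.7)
The plan is to reformulate the GCI condition as a linear PDE on $\unitq$, reduce it to a one-dimensional problem using the rotational symmetry around $\bar\q$, and solve the resulting singular Sturm--Liouville problem.

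First, using the Fokker--Planck form of $C(f,\bar\q)$ from Prop.~\ref{prop:properties_Gamma} and integrating by parts on the closed manifold $\unitq$, the defining condition $\int_\unitq C(f,\bar\q)\psi\,\ud\q=0$ becomes $\int_\unitq (f/M_{\bar\q})\,\nabla_\q\cdot(M_{\bar\q}\nabla_\q\psi)\,\ud\q = 0$, subject to the three linear constraints $\int_\unitq f\,(\beta\cdot\q)(\bar\q\cdot\q)\,\ud\q=0$ for $\beta\in\bar\q^\perp$. A standard $L^2$-duality argument then shows that $\psi$ is a GCI if and only if there exists $\beta\in\bar\q^\perp$ with
\begin{equation}\label{pde:psi}
\nabla_\q\cdot(M_{\bar\q}\nabla_\q\psi)(\q) = M_{\bar\q}(\q)(\beta\cdot\q)(\bar\q\cdot\q),
\end{equation}
modulo the kernel of the operator on $\unitq$, which is exactly the constants (providing the ``$1$'' in the announced span).

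Second, I would exploit the $SO(3)$-symmetry stabilizing $\bar\q$. Parametrizing $\q\in\mathbb{S}^3$ by $r=\bar\q\cdot\q\in[-1,1]$ and $\nvec\in\mathbb{S}^2\subset\bar\q^\perp$ via $\q=r\bar\q+\sqrt{1-r^2}\,\nvec$, the metric splits as $\ud r^2/(1-r^2)+(1-r^2)\,\ud\sigma_{\mathbb{S}^2}^2$; then $\beta\cdot\q=\sqrt{1-r^2}(\beta\cdot\nvec)$, and $\beta\cdot\nvec$ is a first spherical harmonic on $\mathbb{S}^2$ (eigenvalue $-2$ of $\Delta_{\mathbb{S}^2}$). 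The ansatz $\psi^\beta(\q)=(\beta\cdot\q)h(r)$ therefore separates variables, and with $M_{\bar\q}\propto e^{2r^2/d}$ a direct calculation reduces \eqref{pde:psi} to exactly the ODE \eqref{eq:ode_h}. That $\psi^\beta$ depends linearly on $\beta$ and that no more general form can occur both follow from the covariance of \eqref{pde:psi} under rotations fixing $\bar\q$.

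Third, I would solve \eqref{eq:ode_h} variationally in the weighted Hilbert space $V$ equipped with the norm
$$\|h\|_V^2 := \int_{-1}^1\bigl[(1-r^2)^{5/2}|h'|^2+(1-r^2)^{3/2}|h|^2\bigr]e^{2r^2/d}\,\ud r.$$
The associated bilinear form is continuous and coercive on $V$ (the zeroth-order coefficient $3+4r^2/d$ is bounded below by $3$), and the right-hand side of \eqref{eq:ode_h} defines a continuous linear functional on $V$, so Lax--Milgram yields a unique $h\in V$; classical elliptic regularity makes $h$ smooth on $(-1,1)$. Parity follows from uniqueness: the left-hand side coefficients are even in $r$ and the source is odd, so $r\mapsto-h(-r)$ solves the same problem and must coincide with $h$. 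For the sign on $[0,1]$, testing the weak form restricted to $[0,1]$ against $h_+:=\max(h,0)$ (which vanishes at $r=0$ by parity and at $r=1$ in the weighted sense) yields $a_{[0,1]}(h,h_+) = -\int_0^1 r(1-r^2)^{3/2}e^{2r^2/d}h_+\,\ud r$; the left-hand side is $\ge 0$ and the right-hand side is $\le 0$, forcing $h_+\equiv 0$, i.e.\ $h\le 0$ on $[0,1]$.

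The main obstacle is the analysis of the singular Sturm--Liouville problem in the third step: the leading coefficient $(1-r^2)^{5/2}$ vanishes to fractional order at $r=\pm 1$, so one must verify that the chosen $V$ is simultaneously big enough for Lax--Milgram to apply, small enough so that the reassembled $\psi^\beta=(\beta\cdot\q)h(r)$ lies in $H^1(\unitq)$, and tight enough so that the only function in $V$ satisfying the homogeneous equation is zero (which is what guarantees uniqueness of $\psi$ modulo constants on $\unitq$). The duality reformulation, separation of variables, and parity/sign arguments are comparatively routine once this analytic core has been established.
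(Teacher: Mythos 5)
Your plan follows the paper's architecture closely: the $L^2$-duality reformulation is exactly Prop.~\ref{prop:equivalent_definition_GCI}, the parity and maximum-principle arguments match the paper, and the overall route (reduce to a 1D singular Sturm--Liouville problem, solve by Lax--Milgram, verify membership in $H^1(\unitq)$) is the same. There are two genuine differences of method, and one step where you have named the central difficulty without resolving it.

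On the differences: you derive the ODE by separating variables in the warped-product parametrization $\q=r\bar\q+\sqrt{1-r^2}\,\nvec$ of $\mathbb{S}^3$, using that $\beta\cdot\nvec$ is a degree-one spherical harmonic. The paper instead substitutes the ansatz directly and computes $\nabla_\q$, $\Delta_\q$ via intrinsic identities (e.g.\ $\nabla_\q(\q\cdot\bar\q)\cdot\nabla_\q(\beta\cdot\q)=-(\q\cdot\beta)(\q\cdot\bar\q)$, $\Delta_\q(\beta\cdot\q)=-3(\beta\cdot\q)$). The two computations are equivalent and your route is arguably more conceptual; I checked that your separation of variables reproduces the intermediate form $(\tfrac{-4}{d}r^2-3)h+(\tfrac{4}{d}(1-r^2)-5)rh'+(1-r^2)h''=r$, which is \eqref{eq:ode_h_alt}. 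The second difference is in the exhaustion argument: you use $SO(3)$-covariance plus the spherical-harmonic decomposition of $\mathcal{L}^*$ to rule out other forms of $\psi$, whereas the paper proves existence and uniqueness of the GCI (modulo constants) directly in $H^1_0(\unitq)$ by Lax--Milgram (Prop.~\ref{prop:exist_unique_GCI}) and then only needs to verify that the ansatz belongs to $H^1_0(\unitq)$. Your version needs the claim that $\mathcal{L}^*$ preserves each spherical-harmonic degree and that the kernel is the constants; both are true, but the paper's ordering avoids spelling this out.

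The step you flag as the ``analytic core'' is indeed the crux, and you leave it open. Membership of $h$ in your space $V$ gives $\int_{-1}^1(1-r^2)^{3/2}h^2<\infty$, but the reassembled $\psi^\beta=(\beta\cdot\q)\,h(\q\cdot\bar\q)$ lies in $H^1(\unitq)$ only if the stronger condition $\int_{-1}^1(1-r^2)^{1/2}h^2<\infty$ holds (this is condition \eqref{cdt:H10}; the exponent drops from $3/2$ to $1/2$ because of the Jacobian $\sin^2(\theta/2)$ on $\mathbb{S}^3$). Since $(1-r^2)^{1/2}\ge(1-r^2)^{3/2}$, this does not follow from $h\in V$ for free. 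The paper bridges the gap with an integration-by-parts bootstrap (the $I_h(\delta)$ estimate): writing $\int_{1/2}^{1-\delta}r(1-r^2)^{1/2}h^2$ in terms of $(1-r^2)^{3/2}h^2$ evaluated at the endpoints and $\int(1-r^2)^{3/2}hh'$, then using Young's inequality to absorb a factor $\tfrac{2}{3}I_h(\delta)$ back into the left side and control the remainder by $\int(1-r^2)^{5/2}(h')^2<\infty$. Without this (or an equivalent weighted Hardy-type inequality) the proof is incomplete. Also note that for the maximum-principle step you need $h(0)=0$, which you get from the parity of $h$, but to deduce parity from uniqueness you need to know $h$ is continuous on $(-1,1)$; the paper obtains this from $h\in H^1(a,b)$ for $-1<a<b<1$ and a Sobolev injection, a small point worth making explicit.
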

This proposition will be crucial to compute the hydrodynamical limit in Sec. \ref{sec:macro_limit}. The proof is done in the two subsections below.

\subsubsection{Existence and uniqueness of GCI}
We prove here the
\begin{proposition}[First characterisation of the GCI] \label{prop:equivalent_definition_GCI}
Let $\bar \q\in\unitq$. We have that $\psi \in GCI(\bar \q)$ if and only if 
\begin{equation} \label{eq:for_psi}
\text{there exists }\beta\in \bar \q^\perp\text{ such that }\nabla_\q \cdot (M_{\bar \q}\nabla_\q \psi) = (\beta \cdot \q)(\bar \q\cdot \q)\, M_{\bar \q}.
\end{equation}

\end{proposition}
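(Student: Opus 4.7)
The strategy is to reduce Definition \ref{definition:GCI} to a standard duality statement via integration by parts on the closed manifold $\unitq\cong\mathbb{S}^3$.

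First, I would move all derivatives onto $\psi$. Applying Stokes' theorem on $\unitq$ (no boundary) twice gives
$$\int_\unitq C(f,\bar\q)\,\psi\,\ud\q \;=\; -\int_\unitq M_{\bar\q}\,\nabla_\q g\cdot\nabla_\q\psi\,\ud\q \;=\; \int_\unitq g\,\nabla_\q\cdot(M_{\bar\q}\nabla_\q\psi)\,\ud\q,$$
where $g:=f/M_{\bar\q}$ (the second equality is to be read weakly if $\psi$ is only in $H^1$). Since $M_{\bar\q}>0$ everywhere, the correspondence $f\leftrightarrow g$ is bijective on the relevant function space.

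Next, I would rewrite the admissibility constraint on $f$ explicitly. The vanishing $P_{\bar\q^\perp}\bigl[(\int\q\otimes\q\,f\,\ud\q)\bar\q\bigr]=0$ is equivalent, by definition of the projection, to
$$\int_\unitq(\beta\cdot\q)(\bar\q\cdot\q)\,f(\q)\,\ud\q=0\qquad\forall\,\beta\in\bar\q^\perp,$$
i.e.\ to $\int g\,\phi_\beta\,\ud\q=0$ for every $\beta\in\bar\q^\perp$, with $\phi_\beta(\q):=(\beta\cdot\q)(\bar\q\cdot\q)M_{\bar\q}(\q)$. Hence $g$ sweeps out the codimension-$3$ subspace
$$\mathcal{V}:=\bigl\{g\in L^2(\unitq):\textstyle\int g\,\phi_\beta\,\ud\q=0\text{ for all }\beta\in\bar\q^\perp\bigr\}.$$

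The GCI condition then reads: the linear form $g\mapsto\int g\,\nabla_\q\cdot(M_{\bar\q}\nabla_\q\psi)\,\ud\q$ vanishes on $\mathcal{V}$. By elementary duality, this holds iff $\nabla_\q\cdot(M_{\bar\q}\nabla_\q\psi)$ belongs to $\mathcal{V}^\perp=\mathrm{span}\{\phi_\beta:\beta\in\bar\q^\perp\}$; and since $\beta\mapsto\phi_\beta$ is linear, every element of this span coincides with $\phi_\beta$ for a unique $\beta\in\bar\q^\perp$. This is precisely \eqref{eq:for_psi}; the converse implication is immediate by substitution.

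The main technical point will be the identification $\mathcal{V}^\perp=\mathrm{span}\{\phi_\beta\}$, which requires that the three functionals $g\mapsto\int g\,\phi_{\beta_i}\,\ud\q$, for a basis $\{\beta_1,\beta_2,\beta_3\}$ of $\bar\q^\perp$, be linearly independent. Because $M_{\bar\q}>0$ on $\unitq$, this reduces to the linear independence of the polynomials $(\beta_i\cdot\q)(\bar\q\cdot\q)$ on $\mathbb{S}^3$, which can be checked either by evaluation at well-chosen unit quaternions or by computing their second moments against $M_{\bar\q}$. A minor ancillary issue is that the GCI definition a priori quantifies over nonnegative densities $f$; this is harmless since any $g\in\mathcal{V}$ can be written as a difference of admissible centred perturbations around a positive equilibrium, exploiting the linearity of the constraint.
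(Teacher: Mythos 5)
Your proposal is correct and follows essentially the same route as the paper: both reduce Definition \ref{definition:GCI} to a duality statement in $L^2(\unitq)$, identify the constraint set's orthogonal complement with the $3$-dimensional span of the functions $(\beta\cdot\q)(\bar\q\cdot\q)$ (your $\phi_\beta$'s are just these multiplied by $M_{\bar\q}$, matching your change of variable $g=f/M_{\bar\q}$), and use finite-dimensionality to close the biorthogonal. The only cosmetic difference is that the paper phrases the integration-by-parts step via the adjoint operator $\mathcal{L}^*$ acting on $\psi$, whereas you move the derivatives onto $\psi$ directly; and the paper supplies the short injectivity argument for $\beta\mapsto(\beta\cdot\q)(\bar\q\cdot\q)$ that you leave as a flagged technical point.
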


\begin{proof}[Proof of Prop.~\ref{prop:equivalent_definition_GCI}]
We denote by $\mathcal{L}$ the linear operator $C(\cdot,\bar \q)$ on $L^2(\unitq)$, and~$\mathcal{L}^*$ its adjoint. We have the following sequence of equivalences, starting from Definition \ref{definition:GCI}:
\begin{align*}
&\psi\in GCI(\bar \q)\Leftrightarrow \int_{\unitq}\psi \mathcal{L}(f) \, \ud\q =0, \quad \mbox{for all } f\mbox{ such that } P_{\bar \q^\perp}\lp \left[\int_\unitq \q\otimes \q f(\q)\, \ud\q \right]\bar\q\rp=0\\
&\Leftrightarrow  \int_{SO(3)}\mathcal{L}^*(\psi) f \, \ud\q=0, \mbox{ for all } f\mbox{ s.t. } \forall \beta\in \bar \q^\perp, \beta \cdot \left[\int_\unitq \q\otimes \q f(\q)\, \ud\q \right]\bar\q=0\\
&\Leftrightarrow  \int_{SO(3)}\mathcal{L}^*(\psi) f \, \ud\q=0, \quad \mbox{for all } f\in \mathscr{F}_{\bar \q}^\perp\\
&\Leftrightarrow  \mathcal{L}^*(\psi) \in \lp \mathscr{F}^\perp_{\bar \q}\rp^\perp,
\end{align*}
where
$$\mathscr{F}_{\bar \q}:= \left\{ f:\unitq\to \R, \mbox{ with } f(\q)=(\beta\cdot \q)(\bar \q \cdot \q), \, \mbox{ for some } \beta\in \bar \q^\perp\right\},$$
and~$\mathscr{F}_{\bar \q}^\perp$ is the space orthogonal to~$\mathscr{F}_{\bar \q}$ in~$L^2(\unitq)$. Note that $\mathscr{F}_{\bar\q}$ is a vector subspace of $L^2$ isomorphic to $\bar \q^\perp$: indeed, if for some $\beta \in \bar \q^\perp$ we have that $f(\q)=(\beta\cdot \q)(\bar \q \cdot \q)=0$ for all $\q\in\mathbb{H}$, then $\beta\cdot \q=0$ for all $\q\in\mathbb{H}\setminus{\bar \q}^\perp$, so that by continuity and density it is also true for all $\q\in\mathbb{H}$, which finally implies $\beta=0$. Therefore, $\mathscr{F}_{\bar \q}$ is closed (finite dimensional of dimension 3), and we have $(\mathscr{F}_{\bar \q}^\perp)^\perp=\mathscr{F}_{\bar \q}$. Therefore we get
\[\psi\in GCI(\bar \q)\Leftrightarrow  \mathcal{L}^*(\psi) \in  \mathscr{F}_{\bar \q}\Leftrightarrow  \mbox{ there exists } \beta \in \bar \q^\perp\mbox{ such that } \mathcal{L}^*(\psi)(\q)= (\beta\cdot \q)(\bar \q \cdot \q) ,
\]
which ends the proof since the expression of the adjoint is~$\mathcal{L}^*(\psi)=\frac1{M_{\bar \q}}\nabla_\q\cdot (M_{\bar \q}\nabla_{\q}\psi)$.
\end{proof}

We now verify that Eq. \eqref{eq:for_psi} in Prop. \ref{prop:equivalent_definition_GCI} has a unique solution in the space
\begin{equation*}
H^1(\unitq) := \left\{ \psi: \unitq \to \mathbb{R} \, \left|\, \int_\unitq |\psi|^2 \,\ud \q+ \int_\unitq |\nabla_{\q}\psi|^2 \,\ud \q < \infty \right.\right\}.
\end{equation*}

\begin{proposition}[Existence and uniqueness of the GCI] \label{prop:exist_unique_GCI}
Let $\bar \q\in \unitq$ and let $\beta\in\bar\q^\perp$. Then, Eq. \eqref{eq:for_psi} has a unique solution $\psi$ (up to an additive constant) in $H^1(\unitq)$.
\end{proposition}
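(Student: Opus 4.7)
The plan is to recast \eqref{eq:for_psi} as an elliptic problem on the compact manifold $\unitq\simeq\mathbb{S}^3$ and apply the Lax--Milgram theorem on the quotient $H^1(\unitq)/\R$. Concretely, I would look for $\psi\in H^1(\unitq)$ satisfying the weak formulation
\begin{equation*}
a(\psi,\varphi) \,:=\, \int_\unitq M_{\bar\q}\,\nabla_\q \psi\cdot\nabla_\q\varphi\,\ud\q \,=\, -\int_\unitq (\beta\cdot\q)(\bar\q\cdot\q)\,M_{\bar\q}\,\varphi\,\ud\q \,=:\,\ell(\varphi), \qquad \forall\,\varphi\in H^1(\unitq).
\end{equation*}
Because $\unitq$ is a closed Riemannian manifold, the problem is of Neumann type and constants are in the kernel of $a$, so a necessary solvability condition is $\ell(1)=0$.

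The key step is to verify this compatibility condition, namely $\int_\unitq (\beta\cdot\q)(\bar\q\cdot\q)M_{\bar\q}(\q)\,\ud\q=0$. Using the left-invariance of the volume element on $\unitq$ (Prop.~\ref{prop:volume_element} style argument, already used in defining $M_{\bar\q}$), I perform the change of variable $\q=\bar\q\q'$. Since $\bar\q\cdot\q=\Real(\q\bar\q^\ast)=\Real(\bar\q\q'\bar\q^\ast)=\Real(\q')$ and $\beta\cdot\q=\Real(\q'(\bar\q^\ast\beta)^\ast)=\beta'\cdot\q'$ with $\beta':=\bar\q^\ast\beta$, the condition $\beta\in\bar\q^\perp$ translates into $\Real(\beta')=0$, i.e., $\beta'$ is purely imaginary. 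The integral reduces to $\int_\unitq (\beta'\cdot\q')\Real(\q')\,M_1(\q')\,\ud\q'$, and since $M_1$ depends only on $\Real(\q')$, the symmetry $\q'_i\mapsto -\q'_i$ for $i=1,2,3$ kills each of the three terms $b_i\int q'_0 q'_i M_1\,\ud\q'$.

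With the compatibility condition established, I work on the Hilbert space $V:=H^1(\unitq)/\R$ equipped with the inner product $\int_\unitq\nabla_\q\psi\cdot\nabla_\q\varphi\,\ud\q$, which is indeed a norm thanks to the Poincar\'e--Wirtinger inequality on the compact manifold $\unitq$. Since $M_{\bar\q}$ is a positive continuous function on a compact set, there exist constants $0<m_-\le M_{\bar\q}\le m_+<\infty$. Hence $a$ is continuous and coercive on $V$ (with coercivity constant $m_-$), and $\ell$ is continuous on $H^1(\unitq)$; by the compatibility relation $\ell(1)=0$ it descends to a continuous linear form on $V$. The Lax--Milgram theorem then yields a unique $\psi\in V$ with $a(\psi,\varphi)=\ell(\varphi)$ for all $\varphi\in V$, equivalently a solution of \eqref{eq:for_psi} in $H^1(\unitq)$, uniquely determined up to an additive constant.

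The only non-routine point is the compatibility computation; everything else is a direct application of the variational framework. Once existence in $H^1$ is secured, the explicit structure of the solution as $\psi^\beta(\q)=(\beta\cdot\q)\,h(\q\cdot\bar\q)$ with $h$ satisfying \eqref{eq:ode_h} will be recovered in the next subsection by an ansatz and separation of variables; here we only need the abstract well-posedness.
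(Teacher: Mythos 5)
Your proposal follows essentially the same route as the paper: both pass to the weak formulation, verify the compatibility condition $\int_\unitq(\beta\cdot\q)(\bar\q\cdot\q)M_{\bar\q}\,\ud\q=0$ via the left-translation change of variable $\q\mapsto\bar\q\q'$ together with the parity symmetries of $M_1$, and then invoke Lax--Milgram with the Poincar\'e--Wirtinger inequality supplying coercivity. The only cosmetic difference is that you work on the quotient $H^1(\unitq)/\R$ while the paper uses the zero-mean subspace $H^1_0(\unitq)$; these are canonically the same Hilbert space and carry no substantive distinction.
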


\begin{proof}
To prove this proposition, we rewrite Eq. \eqref{eq:for_psi} in its weak formulation as
\begin{equation}\label{eq:lax-milgram}
\int_\unitq M_{\bar \q} \nabla_\q \psi \cdot \nabla_\q \varphi \,\ud \q = - \beta \cdot \int_{\unitq} \q (\q \cdot \bar \q) \varphi M_{\bar\q}\,\ud \q,
\end{equation}
for all test functions $\varphi$ in $H^1(\unitq)$. Denote by $H^1_0(\unitq)$ the set of zero-mean functions in $H^1(\unitq)$, i.e.,
\begin{equation*}
H^1_0(\unitq) = \left\{ \psi \in H^1(\unitq)\left| \int_\unitq \psi \,\ud \q=0 \right.\right\}.
\end{equation*}
Note that thanks to the Poincar\'e inequality on the sphere $\unitq$, the usual semi-norm on $H^1(\unitq)$ given by $\psi\mapsto \int_\unitq |\nabla_\q \psi |^2 \,\ud\q$ is a norm on $H^1_0(\unitq)$. The weak formulation \eqref{eq:lax-milgram} is equivalent on $H^1(\unitq)$ and on $H^1_0(\unitq)$: indeed, if $\psi$ is a solution of \eqref{eq:lax-milgram} on $H^1_0(\unitq)$, then by a change of variable $\q':=\bar \q^\ast \q$ (and using $\beta \in \bar\q^\perp$),
$$ \beta \cdot \int_{\unitq} \q (\q \cdot \bar \q) M_{\bar\q}\,\ud \q = \beta \cdot \bar \q \int_{\unitq} \q (\Real \q) M_{1}(\Real \q)\,\ud \q =0,$$
so that Eq. \eqref{eq:lax-milgram} is also satisfied on the set of constant functions $\varphi$, and, by linearity, $\psi$ solves Eq. \eqref{eq:lax-milgram} on $H^1(\unitq)$.

We want to apply Lax-Milgram's theorem to Eq. \eqref{eq:lax-milgram} in the Hilbert space $H^1_0(\unitq)$. The left-hand side of Eq. \eqref{eq:lax-milgram} is a bilinear operator in $(\psi,\varphi)\in(H^1_0(\unitq))^2$, which is continuous (thanks to a Cauchy-Schwarz inequality, using furthermore the fact that $M_{\bar\q}$ is upper bounded pointwise on $\unitq$), and coercive (by definition of the norm on $H^1_0(\unitq)$, and thanks to the fact that $M_{\bar\q}$ is lower bounded pointwise on $\unitq$). The right-hand side of Eq. \eqref{eq:lax-milgram} is a linear form in $\varphi\in H^1_0(\unitq)$, which is continuous (thanks to Cauchy-Schwarz inequality, using again the pointwise upper bound for $M_{\bar\q}$).

We can therefore apply Lax-Milgram's theorem, which guarantees the existence of a unique solution $\psi_0\in H^1_0(\unitq)$ of Eq. \eqref{eq:for_psi}. We conclude noticing that any function $\psi$ in $H^1(\unitq)$ is a solution of Eq. \eqref{eq:for_psi} if and only if its zero-mean projection $\psi_0:=\psi - \frac{1}{2\pi^2} \int_\unitq \psi \,\ud \q$ is also a solution of Eq. \eqref{eq:for_psi}.

\end{proof}

From all this we conclude the following

\begin{corollary}\label{cor:set_GCI}
Let $\bar \q\in\unitq$. Then
\be
GCI(\bar \q) = \text{span} \left\{ 1,\, \cup_{\beta\in{\bar\q}^\perp} \tilde\psi^\beta\right\},
\ee
where, for $\beta\in{\bar\q}^\perp$, the function $\tilde\psi^\beta$ is the unique solution in $H^1_0(\unitq)$ of Eq. \eqref{eq:for_psi}.
\end{corollary}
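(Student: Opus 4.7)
The proof proceeds in four stages, establishing first the PDE reformulation underlying Cor. \ref{cor:set_GCI}, then the explicit ODE reduction, and finally the qualitative properties of $h$.

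Stage 1 (PDE reformulation): I dualize the integral condition in Definition \ref{definition:GCI}. Two integrations by parts give the formal $L^2$-adjoint $\mathcal{L}^*\psi = M_{\bar\q}^{-1}\nabla_\q\cdot(M_{\bar\q}\nabla_\q\psi)$ of $\mathcal{L}:=C(\cdot,\bar\q)$. The constraint $P_{\bar\q^\perp}[(\int \q\otimes\q\, f\,\ud\q)\bar\q]=0$ is exactly the $L^2$-orthogonality of $f$ to the finite-dimensional subspace $\mathscr{F}_{\bar\q} := \{(\beta\cdot\q)(\bar\q\cdot\q) : \beta\in\bar\q^\perp\}$, which is closed and isomorphic to $\bar\q^\perp$. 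Hence $(\mathscr{F}_{\bar\q}^\perp)^\perp=\mathscr{F}_{\bar\q}$, and $\psi\in GCI(\bar\q)$ iff $\mathcal{L}^*\psi\in\mathscr{F}_{\bar\q}$, i.e., iff there exists $\beta\in\bar\q^\perp$ such that $\nabla_\q\cdot(M_{\bar\q}\nabla_\q\psi)=(\beta\cdot\q)(\bar\q\cdot\q)M_{\bar\q}$.

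Stage 2 (existence and uniqueness): The weak formulation of this PDE is coercive and continuous on $H^1_0(\unitq)$, using the pointwise two-sided bounds on $M_{\bar\q}$ and the Poincaré inequality on $\unitq\cong\mathbb{S}^3$. Lax-Milgram then yields, for each $\beta\in\bar\q^\perp$, a unique mean-zero solution $\tilde\psi^\beta$. Together with the trivial solution $\beta=0$ (constants), this gives $GCI(\bar\q)=\mathrm{span}\{1\}\oplus\{\tilde\psi^\beta:\beta\in\bar\q^\perp\}$.

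Stage 3 (explicit form via reduction to \eqref{eq:ode_h}): By the left-invariance of the volume element on $\unitq$, the change of variable $\q\mapsto \bar\q^\ast\q$ reduces to the case $\bar\q=1$, where $\beta$ becomes purely imaginary, identified with $\vec\beta\in\R^3$. I parameterize $\q=\cos(\theta/2)+\sin(\theta/2)\mathbf{n}$ with $\theta\in[0,\pi]$, $\mathbf{n}\in S^2$ and measure $\sin^2(\theta/2)\,\ud\theta\,\ud\mathbf{n}/2$ (Prop. \ref{prop:volume_element}); then $\bar\q\cdot\q=\cos(\theta/2)$, $\beta\cdot\q=\sin(\theta/2)(\vec\beta\cdot\mathbf{n})$, and $M_{\bar\q}$ depends only on $\theta$. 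I make the ansatz $\psi(\theta,\mathbf{n})=\sin(\theta/2)(\vec\beta\cdot\mathbf{n})h(\cos(\theta/2))$. Because $\vec\beta\cdot\mathbf{n}$ is a spherical harmonic of degree one on $S^2$ with $\Delta_{S^2}(\vec\beta\cdot\mathbf{n})=-2(\vec\beta\cdot\mathbf{n})$, the splitting of the Laplace-Beltrami operator on $\mathbb{S}^3$ in $(\theta,\mathbf{n})$ coordinates makes the PDE separate: the factor $\vec\beta\cdot\mathbf{n}$ divides out and, after the substitution $r=\cos(\theta/2)$, the resulting equation on $h$ is precisely \eqref{eq:ode_h}. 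I expect the main technical obstacle to be the coordinate-level bookkeeping here: expressing $\nabla_\q\cdot(M_{\bar\q}\nabla_\q\cdot)$ on $\mathbb{S}^3$ in terms of the radial derivative in $\theta$ and $\Delta_{S^2}$, and then performing the chain rule through $r=\cos(\theta/2)$ to match coefficients of $1$, $h$, $h'$, and $h''$ against Eq. \eqref{eq:ode_h}. The uniqueness of Stage 2 then forces $\tilde\psi^\beta=(\beta\cdot\q)h(\bar\q\cdot\q)$ up to an additive constant, which is absorbed in the $\mathrm{span}\{1\}$ factor.

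Stage 4 (oddness and sign): The coefficients of $h$ and $h'$ in the left-hand side of \eqref{eq:ode_h} are even functions of $r$, while the right-hand side is odd; thus if $h$ solves \eqref{eq:ode_h} so does $r\mapsto -h(-r)$, and uniqueness (in the class of solutions that render $\psi^\beta$ in $H^1(\unitq)$, i.e., regular at both endpoints $r=\pm1$) forces $h(-r)=-h(r)$. For the sign, I rewrite \eqref{eq:ode_h} in Sturm-Liouville form $(ah')'-bh=g$ with $a(r)=(1-r^2)^{5/2}\exp(2r^2/d)>0$, $b(r)=(1-r^2)^{3/2}\exp(2r^2/d)(4r^2/d+3)>0$, and $g(r)=r(1-r^2)^{3/2}\exp(2r^2/d)\geq0$ on $[0,1]$. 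A maximum-principle argument rules out a positive interior maximum of $h$ on $(0,1)$: at such a point $r_0$ we would have $h'(r_0)=0$ and $h''(r_0)\le 0$, hence $(ah')'(r_0)-b(r_0)h(r_0)\le -b(r_0)h(r_0)<0$, contradicting $g(r_0)\ge 0$. Since $h(0)=0$ by oddness and $h$ is bounded up to $r=1$ (from the regularity imposed by $\psi^\beta\in H^1(\unitq)$), this forces $h\le 0$ on $[0,1]$.
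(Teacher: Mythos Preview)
Your Stages 1 and 2 are exactly the paper's route to Cor.~\ref{cor:set_GCI}: the adjoint reformulation via the closed, three-dimensional subspace $\mathscr{F}_{\bar\q}=\{(\beta\cdot\q)(\bar\q\cdot\q):\beta\in\bar\q^\perp\}$ (Prop.~\ref{prop:equivalent_definition_GCI}), followed by Lax--Milgram on $H^1_0(\unitq)$ (Prop.~\ref{prop:exist_unique_GCI}). The corollary is then immediate, and the paper's own proof is nothing more than the sentence ``From all this we conclude the following''.

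Your Stages 3 and 4, however, prove more than Cor.~\ref{cor:set_GCI} asks: the explicit form $\psi^\beta=(\beta\cdot\q)\,h(\bar\q\cdot\q)$, the ODE \eqref{eq:ode_h}, and the oddness/sign of $h$ are the content of the separate Prop.~\ref{prop:non constant GCI} (and hence of Prop.~\ref{prop:set_GCI}). For that material you take a genuinely different route. The paper never reduces to $\bar\q=1$ or introduces the $(\theta,\nvec)$ coordinates and the $\mathbb{S}^3$ Laplacian splitting; instead it works directly with coordinate-free identities on the $3$-sphere such as $\nabla_\q(\bar\q\cdot\q)=P_{\q^\perp}\bar\q$, $\Delta_\q(\beta\cdot\q)=-3(\beta\cdot\q)$, and $\nabla_\q(\bar\q\cdot\q)\cdot\nabla_\q(\beta\cdot\q)=-(\bar\q\cdot\q)(\beta\cdot\q)$, then sets $r=\bar\q\cdot\q$ to obtain \eqref{eq:ode_h}. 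Your coordinate approach is a legitimate alternative, and the separation via the degree-one spherical harmonic $\vec\beta\cdot\mathbf{n}$ is clean; the trade-off is the bookkeeping you anticipate versus the several ad hoc identities the paper must assemble. Your Stage 4 argument (uniqueness for oddness, maximum principle for sign) is at the same level of detail as the paper, which simply writes ``By uniqueness of $h$, $h$ is odd'' and ``By a maximum principle, $h(r)\le0$ for $r\ge0$''.
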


\begin{remark}
The linear mapping $\beta\in{\bar\q}^\perp \mapsto \tilde\psi^\beta \in H^1_0(\unitq)$ is injective (by Prop. \ref{prop:exist_unique_GCI}). Therefore, $GCI(\bar \q)$ is a 4-dimensional vector space.
\end{remark}

\subsubsection{The non-constant GCIs}\label{sec:non_constant_GCI}

\begin{proposition}
\label{prop:non constant GCI}
Let $\bar \q\in\unitq$. Let $\psi$ be a function of the form
$$\psi(\q) = (\beta \cdot \q) h(\q \cdot \bar \q),$$
for some $\beta \in \bar \q^\perp$ and some smooth ($C^2$) scalar function $h$.
Then $\psi$ is solution of Eq. \eqref{eq:for_psi} in $H^1_0(\unitq)$ if and only if $h$ is a solution of Eq. \eqref{eq:ode_h}. Furthermore, the solution $h$ of Eq. \eqref{eq:ode_h} exists and is unique, is an odd function, and satisfies for all $r\ge0$, $h(r)\le0$.

\end{proposition}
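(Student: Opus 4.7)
The plan is to establish the four assertions of the proposition in the natural order: $(1)$ reduction of the GCI equation to the ODE~\eqref{eq:ode_h}; $(2)$ existence and uniqueness of $h$; $(3)$ oddness; $(4)$ sign. For $(1)$, I substitute the ansatz $\psi(\q)=(\beta\cdot\q)\,h(r)$ with $r:=\bar\q\cdot\q$ into the characterising identity $\nabla_\q\cdot(M_{\bar\q}\nabla_\q\psi)=(\beta\cdot\q)(\bar\q\cdot\q)M_{\bar\q}$ from Prop.~\ref{prop:equivalent_definition_GCI} (by linearity I may take $|\beta|=1$). The computation becomes transparent in spherical coordinates on $\unitq\cong\mathbb{S}^3$ adapted to the frame $(\bar\q,\beta)$: write $\q=\cos\phi\,\bar\q+\sin\phi(\cos\alpha\,\beta+\sin\alpha\,\vec\gamma)$ with $\vec\gamma$ on the unit circle in $\mathrm{span}(\bar\q,\beta)^\perp$, so that $r=\cos\phi$, $\beta\cdot\q=\sin\phi\cos\alpha$, the volume element reads $\sin^2\phi\sin\alpha\,d\phi\,d\alpha\,d\gamma$, $M_{\bar\q}$ depends only on $\phi$, and $\cos\alpha$ is an eigenfunction of the Laplacian on the $\mathbb{S}^2$-fibres with eigenvalue $-2$. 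Testing the weak form of Eq.~\eqref{eq:for_psi} against $\varphi=(\beta\cdot\q)\,g(r)$ and integrating out $\alpha$ and $\gamma$ reduces it to a one-dimensional identity in $\phi$ which, after the change of variable $r=\cos\phi$ and expansion of the derivatives, is exactly Eq.~\eqref{eq:ode_h}; the same calculation read in reverse gives the converse.

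For $(2)$, equation \eqref{eq:ode_h} is a second-order linear ODE on $(-1,1)$ with regular singular points at $r=\pm 1$. A Frobenius analysis near $r=1$ yields the indicial equation $\alpha(2\alpha+3)=0$, producing one branch analytic at $r=1$ (exponent $0$) and one that blows up like $(1-r)^{-3/2}$; the same holds at $r=-1$. The $H^1(\unitq)$-integrability of $\psi=(\beta\cdot\q)h(r)$ translates into $\int_{-1}^{1}\bigl(h^2+(1-r^2)(h')^2\bigr)(1-r^2)^{3/2}\,dr<\infty$, which excludes the blow-up branch at each endpoint and singles out a unique solution (the additive constant is fixed by requiring $\psi$ mean-zero). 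Existence of such an $h$ follows from Prop.~\ref{prop:exist_unique_GCI}, which supplies a GCI $\tilde\psi^\beta\in H^1_0(\unitq)$: its uniqueness, combined with the rotational invariance of Eq.~\eqref{eq:for_psi} under the one-parameter group of rotations of $\unitq$ fixing both $\bar\q$ and $\beta$, and with linearity in $\beta$ (in particular $\tilde\psi^{-\beta}=-\tilde\psi^\beta$), forces $\tilde\psi^\beta$ to be of the form $(\beta\cdot\q)\,h(r)$. Mean-zero of $\psi$ follows from the $180^\circ$ rotation fixing $\bar\q$ and sending $\beta\mapsto-\beta$.

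For $(3)$, the coefficients $(1-r^2)^{k/2}$, $e^{2r^2/d}$ and $-4r^2/d-3$ in~\eqref{eq:ode_h} are even in $r$ while the right-hand side $r(1-r^2)^{3/2}e^{2r^2/d}$ is odd; a direct check then shows that $\tilde h(r):=-h(-r)$ solves the same ODE in the same class, and uniqueness yields $h(-r)=-h(r)$. For $(4)$, after dividing~\eqref{eq:ode_h} by $(1-r^2)^{3/2}e^{2r^2/d}$ I rewrite it in divergence form $(A h')'=B h+C$ with $A=(1-r^2)^{5/2}e^{2r^2/d}\ge 0$, $B=(3+4r^2/d)(1-r^2)^{3/2}e^{2r^2/d}\ge 0$ and $C(r)=r(1-r^2)^{3/2}e^{2r^2/d}>0$ on $(0,1)$. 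Since $h(0)=0$ by oddness and $h$ is analytic at $r=1$, the maximum of $h$ over $[0,1]$ is attained either in the interior or at $r=1$. At an interior positive maximum $r_0\in(0,1)$, $h'(r_0)=0$ and $h''(r_0)\le 0$ give $(Ah')'(r_0)=A(r_0)h''(r_0)\le 0$, contradicting $(Ah')'(r_0)=B(r_0)h(r_0)+C(r_0)>0$. At $r=1$, the expanded form $(1-r^2)h''+[4r(1-r^2)/d-5r]h'-(3+4r^2/d)h=r$ of~\eqref{eq:ode_h} evaluated at $r=1$ reads $-5\,h'(1)-(3+4/d)h(1)=1$; a positive maximum at $r=1$ would give $h(1)>0$ and $h'(1)\ge 0$, making the left-hand side non-positive, a contradiction. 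Hence $h\le 0$ on $[0,1]$.

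The main obstacle is twofold: the computation in step $(1)$, which couples the change of coordinate $r=\cos\phi$ with integration by parts against arbitrary test functions $g(r)$, is mostly mechanical but long and error-prone in the bookkeeping of Jacobians and weight factors; and the boundary analysis at the singular endpoints $r=\pm 1$ must be carried out with Frobenius-level care, both to pin down the unique GCI-compatible solution in step $(2)$ and to legitimately use the limiting form of the ODE at $r=1$ in the maximum-principle argument of step $(4)$.
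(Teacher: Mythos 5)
Your proof is correct in substance but takes a genuinely different route from the paper's at each of the three non-trivial steps, and the comparison is instructive. For the reduction to the ODE, the paper computes directly on $\unitq$ using explicit identities for $\nabla_\q$ and $\Delta_\q$ of functions of the form $(\beta\cdot\q)$ and $(\bar\q\cdot\q)$ (e.g.\ $\Delta_\q(\p\cdot\q)=-3(\p\cdot\q)$), whereas you pass to adapted spherical coordinates and project onto the $\cos\alpha$ spherical harmonic; both yield Eq.~\eqref{eq:ode_h}. For existence and uniqueness of $h$, the paper sets up the weighted $1$D Hilbert space $H_{(-1,1)}$, runs Lax--Milgram there, and then must verify \emph{a posteriori} (conditions~\eqref{cdt:H10}, the technical bulk of the paper's argument) that the resulting $\psi$ lands in $H^1_0(\unitq)$ so that Prop.~\ref{prop:exist_unique_GCI} identifies it with the GCI; you instead start from the GCI supplied by Prop.~\ref{prop:exist_unique_GCI}, use equivariance of Eq.~\eqref{eq:for_psi} under the $SO(2)$ of rotations of $\mathbb{S}^3$ fixing $\bar\q$ and $\beta$ together with linearity in $\beta$ to deduce the separated form $(\beta\cdot\q)h(\q\cdot\bar\q)$, and then use a Frobenius analysis at $r=\pm1$ (indicial roots $0$ and $-3/2$, correctly computed) together with the weighted integrability to single out the regular branch; this bypasses the paper's posterior $H^1_0$ verification entirely, which is a genuine simplification. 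For the sign of $h$, the paper only states ``by a maximum principle,'' and your divergence-form argument $(Ah')'=Bh+C$ with $A,B\ge0$, $C>0$ on $(0,1)$, $h(0)=0$ by oddness, plus the boundary evaluation $-5h'(1)-(3+4/d)h(1)=1$ at the regular singular point, is exactly the kind of detail the paper elides.

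Two small inaccuracies worth flagging, neither affecting validity: the weighted condition you state, $\int(h^2+(1-r^2)(h')^2)(1-r^2)^{3/2}\,dr<\infty$, is the $H_{(-1,1)}$ condition (equivalently the $L^2$ condition on $\psi$ plus the radial part of $\nabla_\q\psi$), not the full $H^1(\unitq)$ condition, which the paper shows is the stronger $\int(1-r^2)^{1/2}h^2\,dr<\infty$ plus $\int(1-r^2)^{5/2}(h')^2\,dr<\infty$; since the blow-up branch $h\sim(1-r)^{-3/2}$ fails even the weaker condition, the Frobenius exclusion still goes through. Second, the parenthetical ``the additive constant is fixed by requiring $\psi$ mean-zero'' is misleading: adding a constant to $h$ changes $\psi$ by $c(\beta\cdot\q)$, not by a constant, so the residual one-dimensional freedom in the GCI (span of $1$) does not show up as an additive constant in $h$; uniqueness of $h$ comes from the two boundary regularity constraints alone, as you in fact use.
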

\begin{proof}
Eq. \eqref{eq:for_psi} is equivalent to
\be \label{eq:aux_prop_GCI}
\nabla_\q (\log M_{\bar \q}) \cdot \nabla_\q \psi + \Delta_\q \psi = (\beta \cdot \q) (\q \cdot \bar \q),
\ee
where we compute
$$\nabla_\q (\log M_{\bar \q}) = \frac{2}{d}\nabla_\q (\q \cdot \bar \q)^2.$$
Next, we substitute $\psi = (\beta \cdot \q) h(\q\cdot \bar \q)$ into Eq. \eqref{eq:aux_prop_GCI}. To carry out the computations we will use the following expressions:
\beqar
\nabla_\q \psi &=& \lp \nabla_\q (\beta \cdot \q)\rp h(\q \cdot \bar \q) + (\beta \cdot \q) h'(\q \cdot \bar \q) \nabla_\q (\q \cdot \bar \q),\\
\Delta_\q [h(\q \cdot \bar \q)] &=&  \nabla_\q \cdot (\nabla_\q [h(\q \cdot \bar \q)]) = \nabla_\q \cdot (h'(\q \cdot \bar \q) \nabla_\q (\q \cdot \bar \q) )\\
&=& h''(\q \cdot \bar \q) |\nabla_\q (\q \cdot \bar \q)|^2 + h'(\q \cdot \bar \q) \Delta_\q (\q \cdot \bar \q),\\
\Delta_\q \psi &=& \Delta_\q (\beta \cdot \q ) h(\q \cdot \bar \q) + 2 \nabla_\q (\beta \cdot \q) \cdot \nabla_\q [h(\q \cdot \bar \q)] + (\beta \cdot \q) \Delta_\q [h(\q \cdot \bar \q)]\\
&=& \Delta_\q (\beta \cdot \q) h(\q \cdot \bar \q) + 2 \nabla_\q (\beta \cdot \q) \cdot \nabla_\q (\q \cdot \bar \q) h'(\q \cdot \bar \q) \\
&&+ (\beta \cdot \q) \left[ h''(\q \cdot \bar \q) |\nabla_\q (\q \cdot \bar \q)|^2 + h'(\q \cdot \bar \q) \Delta_\q (\q \cdot \bar \q) \right].
\eeqar
Substituting the previous expressions in Eq. \eqref{eq:aux_prop_GCI}, and grouping terms, we obtain that $\psi$ satisfies Eq. \eqref{eq:for_psi} if and only if
\beqarl
&&\left\{\frac{2}{d} \nabla_\q (\q\cdot \bar \q)^2 \cdot \nabla_\q (\beta \cdot \q) + \Delta_\q (\beta \cdot \q) \right\} h(\q \cdot \bar \q)\nn\\
&&+ \left\{\frac{2}{d}\nabla_\q(\q\cdot\bar \q)^2 \cdot \nabla_\q(\q\cdot\bar \q) (\beta \cdot \q) + 2 \nabla_\q (\beta \cdot \q) \cdot \nabla_\q (\q \cdot \bar \q) + \Delta_\q (\q\cdot \bar \q) (\beta \cdot \q) \right\} h'(\q \cdot \bar \q)\nn\\
&&+ \left\{|\nabla_\q (\q \cdot \bar \q)|^2 (\beta \cdot \q) \right\} h''(\q \cdot \bar \q)\nn\\
&&= (\beta \cdot \q) (\q \cdot \bar \q). \label{eq:aux_prop_GCI2}
\eeqarl
To compute this expression we will use the following identities:
\beqar
\nabla_\q (\q \cdot \bar \q)\cdot \nabla_\q(\beta \cdot  \q) &=& P_{\q^\perp}(\bar \q) \cdot P_{\q^\perp}(\beta) = - (\q \cdot \beta) (\q \cdot \bar \q),\\
\nabla_\q (\q \cdot \bar \q)^2 &=& 2 (\q \cdot \bar \q) P_{\q^\perp}\bar \q,\\
\nabla_\q (\q \cdot \bar \q)^2\cdot \nabla_\q(\beta \cdot \q) &=& -2(\q \cdot \bar \q)^2 (\beta \cdot \q),\\
\nabla_\q (\q \cdot \bar \q)^2 \cdot \nabla_\q (\q \cdot \bar \q) &=& 2 (\q \cdot \bar \q) \left[ 1- (\q \cdot \bar \q)^2 \right],\\
\Delta_\q(\beta \cdot \q) &=& -3 (\beta \cdot \q),\\
|\nabla_\q(\q\cdot \bar \q)|^2 &=& |P_{\q^\perp}(\bar \q)|^2 = 1- (\q \cdot \bar \q)^2,
\eeqar
where we used that $\beta \cdot \bar \q =0$ and that in the sphere $\mathbb{S}^3$ it holds $\Delta_\q(\mathbf{p} \cdot \q)= -3(\mathbf{p} \cdot \q)$. Substituting the previous expressions in Eq. \eqref{eq:aux_prop_GCI2} we obtain that $\psi$ solves Eq. \eqref{eq:for_psi} if and only if
\beqar
&&\left\{ -\frac{4}{d}(\beta \cdot \q) (\q \cdot \bar \q)^2 -3(\beta \cdot \q)\right\} h(\q \cdot \bar \q)\\
&& + \left\{ \frac{4}{d}(\beta \cdot \q)  (\q \cdot \bar \q) \left[ 1- (\q \cdot \bar \q)^2 \right] -3 (\beta \cdot \q) (\q \cdot \bar \q) -2 (\beta \cdot \q) (\q \cdot \bar \q)\right\} h'(\q \cdot \bar \q)\\
&& + \left\{(\beta \cdot \q) (1-(\q \cdot \bar \q)^2) \right\} h''(\q \cdot \bar \q)\\
&& = (\beta \cdot \q) (\q\cdot \bar \q).
\eeqar
When $\q$ ranges in $\unitq$, $r:= (\q \cdot \bar \q)$ ranges in $[-1,1]$. Therefore the previous equality can be rewritten (after factorizing out and simplifying the terms $(\beta \cdot \q$), using a continuity argument) as an equation in $r\in[-1,1]$:
\be\label{eq:ode_h_alt}\lp \frac{-4}{d}r^2-3\rp h+ \lp\frac{4}{d}(1-r^2)-5 \rp\,r\, h' + (1-r^2) h''=r. \ee

Finally, we recast this equation as shown in \eqref{eq:ode_h}.

\medskip

We define the Hilbert space
\beqar
H_{(-1,1)}:= \Bigg\{h:(-1,1)\longrightarrow\R, &\mbox{such that} &\int_{-1}^1 (1-r^2)^{3/2} h^2(r)\, \ud r <\infty \mbox{ and }\\
&&\int_{-1}^1 (1-r^2)^{5/2} \lp h'(r)\rp^2\, \ud r<\infty \Bigg\}.
\eeqar

By a Lax-Milgram argument, we obtain the existence and uniqueness of the solution $h$ in $H_{(-1,1)}$. By uniqueness of $h$, we see that $h$ is an odd function of its argument. By a maximum principle, we furthermore obtain that $h(r)\le0$ for $r\ge0$.

To conclude it only remains to show that the solution $h$ corresponds to a function $\psi\in H_0^1(\unitq)$. Since we know by Prop. \ref{prop:exist_unique_GCI} that the GCI exists and it is unique in $H_0^1(\unitq)$, this proves that $\psi$ given by Eq. \eqref{eq:psi_beta} is the GCI. For that, we first compute the $L^2(\unitq)$ norm of gradient of $\psi$ with
\beqar
\int_\unitq |\nabla_\q \psi |^2 \ud \q &\le & 2 \int_\unitq |P_{\q^\perp}\beta|^2 h^2(\q \cdot \bar \q) \ud \q + 2 \int_\unitq (\beta \cdot \q)^2 |P_{\q^\perp}\bar \q|^2 (h'(\q \cdot \bar \q))^2 \ud \q\\
&= & 2 \left[\int_\unitq |P_{(\q \bar \q)^\perp}\beta|^2 h^2(\Real \q) \ud \q + \int_\unitq (\beta \cdot (\q \bar \q))^2 |P_{(\q \bar \q)^\perp}\bar \q|^2 (h'(\Real \q))^2 \ud \q\right]\\
&= & 2 \left[\int_\unitq |P_{\q^\perp}(\beta\bar \q^\ast)|^2 h^2(\Real \q) \ud \q + \int_\unitq (\beta \bar \q^\ast \cdot \q)^2 (1-\Real^2\q) (h'(\Real \q))^2 \ud \q\right]\\
&= & 2 (\beta\bar \q^\ast)\cdot \int_\unitq (\Id - \q\otimes \q) h^2(\Real \q) \ud \q \,(\beta\bar \q^\ast)\\
&&+ 2(\beta\bar \q^\ast)\cdot \int_\unitq (\q\otimes \q) (1-\Real^2\q) (h'(\Real \q))^2 \ud \q \, (\beta\bar \q^\ast).
\eeqar
We see directly that a sufficient condition for the first term of the last expression above to be finite is that
\be \label{cdt:h2finite}
 \int_\unitq h^2(\Real \q) \ud \q <\infty.
\ee
Since $\Real(\beta \bar\q^\ast) =0$ (remember that by definition $\beta\in\bar\q^\perp$), the second term is finite as soon as the $3\times 3$ submatrix corresponding to the imaginary coordinate of the integral $\int_\unitq (\q\otimes \q) (1-\Real^2\q) (h'(\Real \q))^2 \ud \q$ is finite, that is, as soon as
\beqar
\int_\unitq (\Ima\q\otimes\Ima \q) (1-\Real^2\q) (h'(\Real \q))^2 \ud \q <\infty,
\eeqar
that is, when the diagonal terms are finite (the off-diagonal elements being null by the changes of variables which change the sign of the coordinate $q_i$ for $i=1,\,2,\,3$), i.-e.
\beqar
 \int_\unitq q_i^2 (1-\Real^2\q) (h'(\Real \q))^2 \ud \q <\infty.
\eeqar
Summing for $i=1,\,2,\,3$ (all terms being nonnegative), this is true when
\be\label{cdt:dh2finite}
\int_\unitq (1-\Real^2\q)^2 (h'(\Real \q))^2 \ud \q <\infty.
\ee
After a change of variable $r=\cos(\theta/2)=\Real \q$, using Prop. \ref{prop:volume_element}, conditions \eqref{cdt:h2finite} and \eqref{cdt:dh2finite} are rewritten as
\be\label{cdt:H10}
 \int_{-1}^1 (1-r^2)^{1/2} h^2(r) \ud r <\infty \quad \text{and} \quad \int_{-1}^1 (1-r^2)^{5/2} (h'(r))^2 \ud r <\infty.
\ee
Since $h$ is in $H_{(-1,1)}$, the second condition is true. Let us check the first condition. Using that $h$ is in $H_{(-1,1)}$, we have that $h\in H^1(a,b)$ for all $-1<a<b<1$.
By a Sobolev injection, this implies that $h$ is continous on $(-1,1)$. Now, since $h$ is an odd and continuous function on $(-1,1)$, to obtain the first condition of \eqref{cdt:H10} it is enough to show that
\beqar
I_h&:=& \int_{1/2}^1 r(1-r^2)^{1/2} h^2(r) \ud r <\infty.
\eeqar
We compute, for some $\delta\in(0,1/2),$ using an integration by parts and the inequality $2 ab\le a^2 + b^2$ for real numbers $a$ and $b$,
\beqar
I_h(\delta)&:=& \int_{1/2}^{1-\delta} r(1-r^2)^{1/2} h^2(r) \ud r\\
  &=& \left[ -\frac{1}{3} (1-r^2)^{3/2} h^2 \right]_{1/2}^{1-\delta} + \int_{1/2}^{1-\delta} \frac{2}{3} (1-r^2)^{3/2} h h'\\
 &\le& \frac{1}{3} \lp\frac{3}{4}\rp^{3/2} h^2(1/2) + \int_{1/2}^{1-\delta} \frac{2}{3} (1-r^2)^{1/4} h (1-r^2)^{5/4}h'\\
 &\le& \frac{\sqrt{3}}{8} h^2(1/2) + \int_{1/2}^{1-\delta} \frac{1}{3} (1-r^2)^{1/2} h^2+ \int_{1/2}^{1-\delta} \frac{1}{3} (1-r^2)^{5/2}(h')^2\\
  &\le& \frac{\sqrt{3}}{8} h^2(1/2) +\frac{2}{3}I_h(\delta)+ \int_{1/2}^{1} \frac{1}{3} (1-r^2)^{5/2}(h')^2.
\eeqar
Therefore, taking the limit $\delta\longrightarrow0$, the integral on $(0,1)$ is finite: $I_h=I_h(0)<\infty$.
This proves conditions \eqref{cdt:H10}, so that $\nabla_\q\psi\in L^2(\unitq)$. Note that we have proved in particular that
\beqar
 (1-r^2)^{5/4} h \in H^1(-1,1).
 \eeqar
By a similar computation as for $\nabla_\q\psi$, we see that
\beqar
\int_\unitq  \psi  \ud \q &= & \int_\unitq (\beta\cdot\q) \,h(\q \cdot \bar \q) \ud \q \\
&=& (\beta\bar \q^\ast)\cdot \int_\unitq \q\,h(\Real \q) \ud \q = \Real\lp\beta\bar\q^\ast\rp \int_\unitq \Real \q\,h(\Real \q) \ud \q
\eeqar
which is null since $\beta\in\bar\q^\perp$, so that $\psi$ has mean zero on $\unitq$.

\end{proof}

We are now ready to prove Prop. \ref{prop:set_GCI}.

\begin{proof}[Proof of Prop. \ref{prop:set_GCI}]
The statement is a direct consequence of Prop. \ref{prop:exist_unique_GCI}, Cor.  \ref{cor:set_GCI} and Prop. \ref{prop:non constant GCI}.

\end{proof}

\subsection{The macroscopic limit}
\label{sec:macro_limit}

This section is devoted to the proof of Th. \ref{th:macro_limit}. We will use the following:
\begin{lemma} 
\label{lem:computation integral continuity equation}
It holds that
\be \label{eq:continuity_integral}
\int_\unitq  \vezero(\q)M_{\bar \q}(\q) \ud \q = c_1 \vezero(\bar\q),
\ee
where the positive constant $c_1$ is given in Eq. \eqref{eq:c1}.
\end{lemma}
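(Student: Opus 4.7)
My plan is to proceed in three steps: reduce the case of general $\bar\q$ to $\bar\q=1$ by a left translation on $\unitq$; parametrize the remaining integral using the exponential form~\eqref{eq:exponential form unitary quaternion}; and identify the resulting constant as $c_1$.

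For the reduction, I would apply the change of variable $\q = \bar\q\q'$. Since $\unitq\cong\mathbb{S}^3$, the Lebesgue measure is left-invariant, so $\ud\q=\ud\q'$. Using $\Real(ab)=\Real(ba)$, we have $\bar\q\cdot\q = \Real(\q\bar\q^*) = \Real(\bar\q\q'\bar\q^*) = \Real(\q') = 1\cdot\q'$, whence $M_{\bar\q}(\q) = M_1(\q')$ by Eq.~\eqref{eq:M_and_m}. Moreover $\q\vezero\q^* = \bar\q(\q'\vezero(\q')^*)\bar\q^* = \bar\q\,\vezero(\q')\,\bar\q^*$ (using that $\q'\vezero(\q')^*$ is pure imaginary, hence equals $\vezero(\q')$ under the identification of Rem.~\ref{rem:abuse of notation}), so $\vezero(\q)$ is the image of $\vezero(\q')$ under the rotation represented by $\bar\q$. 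Pulling this rotation out of the integral by linearity, and noting that it sends $\vezero$ to $\vezero(\bar\q)$, the identity reduces to proving
$$\int_\unitq \vezero(\q')\,M_1(\q')\,\ud\q' = c_1\,\vezero.$$

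For the explicit computation, I would parametrize $\q'= \cos(\theta/2) + \sin(\theta/2)\nvec$ with $\theta\in[0,\pi]$ and $\nvec\in\mathbb{S}^2$. The volume element on $\unitq$ is then $\sin^2(\theta/2)\ud\theta\,\ud\nvec$ (Prop.~\ref{prop:volume_element}), and $M_1(\q') = m(\theta)/Z$ with $Z = 4\pi\int_0^\pi m(\theta')\sin^2(\theta'/2)\ud\theta'$ by Eq.~\eqref{eq:M_and_m}. Rodrigues' formula gives
$$\vezero(\q') = \cos\theta\,\vezero + \sin\theta\,(\nvec\times\vezero) + (1-\cos\theta)(\nvec\cdot\vezero)\,\nvec.$$
Integrating over $\nvec\in\mathbb{S}^2$ first: the middle term vanishes by odd symmetry, and $\int_{\mathbb{S}^2}(\nvec\cdot\vezero)\nvec\,\ud\nvec = \tfrac{4\pi}{3}\vezero$, so
$$\int_{\mathbb{S}^2}\vezero(\q')\,\ud\nvec \;=\; \frac{4\pi}{3}\,(1+2\cos\theta)\,\vezero.$$

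Integrating against $\sin^2(\theta/2)\,m(\theta)/Z$ over $\theta$ and recognizing the average defined in Eq.~\eqref{eq:definition_angles} then gives
$$\int_\unitq \vezero(\q')\,M_1(\q')\,\ud\q' \;=\; \vezero\int_0^\pi \frac{m(\theta)\sin^2(\theta/2)}{\int_0^\pi m(\theta')\sin^2(\theta'/2)\ud\theta'}\,\frac{1+2\cos\theta}{3}\,\ud\theta \;=\; \frac{2}{3}\,\la 1/2 + \cos\theta\ra_{m\sin^2(\theta/2)}\,\vezero \;=\; c_1\,\vezero,$$
which closes the proof. I do not foresee any serious obstacle: the only care required is to keep track of the pure-imaginary/vector identification in the reduction step and to correctly use the $\mathbb{S}^3$ volume element; the fact that the $\nvec\times\vezero$ contribution cancels by symmetry is what guarantees that the integral points along $\vezero$ (and hence along $\vezero(\bar\q)$ after applying the rotation represented by $\bar\q$).
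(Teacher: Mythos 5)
Your proof is correct, and it takes a somewhat different route from the paper's. The paper uses the \emph{right}-translation change of variable $\q' = \q\bar\q^*$, so that the integrand becomes $\vezero(\q'\bar\q)=\Ima(\q'\bar{\mathbf{e}_1}\q'^*)$ with $\bar{\mathbf{e}_1}=\bar\q\vezero\bar\q^*$; this keeps the dependence on $\bar\q$ inside the integral, which is then handled by expanding $\Ima(\q\bar{\mathbf{e}_1}\q^*)$ via the explicit quaternion product formulas $\Real(\q\rvec)=\Real(\q)\Real(\rvec)-\Ima(\q)\cdot\Ima(\rvec)$ and $\Ima(\q\rvec)=\Real(\q)\Ima(\rvec)+\Real(\rvec)\Ima(\q)+\Ima(\q)\times\Ima(\rvec)$, followed by parity and permutation-symmetry arguments on the components $q_i$. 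You instead use the \emph{left}-translation $\q=\bar\q\q'$, which is a genuine simplification: because $\q'\vezero\q'^*$ is purely imaginary, the rotation $\Phi(\bar\q)$ factors out of the integral at once, so you only need to compute $\int_\unitq \vezero(\q')M_1(\q')\ud\q'$ for the reference case $\bar\q=1$. You then compute this via the axis--angle parametrization and Rodrigues' formula, integrating over the axis $\nvec\in\mathbb{S}^2$ first (the cross term vanishes by odd symmetry and $\int_{\mathbb{S}^2}\nvec\otimes\nvec\,\ud\nvec=\tfrac{4\pi}{3}\Id$ handles the tensor term), which is arguably more geometric than the paper's direct quaternion-algebra expansion. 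Both yield $c_1=\tfrac{2}{3}\la 1/2+\cos\theta\ra_{m\sin^2(\theta/2)}$ and your bookkeeping of the normalizing constant $Z$ and the volume element from Prop.~\ref{prop:volume_element} is correct. The only cosmetic caveat is that your decomposition $\Ima(\q\vezero\q^*)$ via Rodrigues' formula should be understood as a consequence of Eq.~\eqref{eq:rodrigues_formula} together with the correspondence \eqref{eq:correspondance_rotation_quaternion}, which you implicitly invoke; stating that explicitly would tie the argument back to the machinery the paper already sets up.
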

\begin{remark}[Comments on the constant $c_1$] \mbox{}
\label{rem:constant_c1}
\begin{itemize}
\item[i)] The value for the constant $c_1$ obtained here is the same one as in the body attitude coordination model based on rotation matrices in Ref. \cite{bodyattitude}. This will allow us to prove the equivalence between the respective continuity equations (see Sec. \ref{sec:comparison_macro}).
\item[ii)] In the case of the Vicsek model in Ref. \cite{degond2008continuum} and in the body attitude coordination model based on rotation matrices in Ref. \cite{bodyattitude}), the constant $c_1$ played a role of `order parameter'. Particularly, it holds that $c_1\in [0,1]$ and the larger its value, the more organised (coordinated/aligned) the dynamics are (and the other way around, the smaller $c_1$, the more disordered the dynamics are). The extreme cases take place, for example, when $D\to \infty$, and then $c_1=0$, and when $D\to\infty$, giving $c_1=1$. Here we have the same properties and interpretations for $c_1$. 
\end{itemize}
\end{remark}

\begin{proof}[Proof of Lem. \ref{lem:computation integral continuity equation}]
We first make the change of variable $\q'=\q \bar \q^\ast$:
\beqar
\int_\unitq  \vezero(\q)M_{\bar \q}(\q) \ud \q &=& \int_\unitq  \vezero(\q\bar\q)M_{1}(\Real(\q)) \ud \q,
\eeqar
where, for $\q=\cos(\theta/2)+\sin(\theta/2)\nvec\in \unitq$,
\beqarl 
M_1(\Real(\q)) &=& \frac{1}{Z}\exp\lp\frac{2}{d}\lp (1\cdot \q)^2-1/4\rp\rp= \frac{1}{Z}\exp\lp \frac{2}{d}\lp \Real(\q)^2-1/4\rp\rp \nn\\
&=& \frac{1}{Z}\exp\lp\frac{2}{d}\lp\cos^2(\theta/2)-\frac{1}{4}\rp\rp= \frac{1}{Z}\exp\lp d^{-1}\lp\frac{1}{2}+\cos\theta\rp \rp.\label{eq:equilibrium_change_variables}
\eeqarl
 
Then, defining $\mathbf{\bar{e}_1}=\bar \q \vezero \bar \q^\ast$, we decompose
\be \begin{split}
\vezero(\q\bar\q) &= \Ima (\q \mathbf{\bar{e}_1} \q^\ast)\\
&= \lp2\Real^2 (\q)-1\rp \Ima (\mathbf{\bar{e}_1}) + 2 \Real(\q) \lp \Ima(\q) \times \Ima (\mathbf{\bar{e}_1}) \rp + 2 \lp \Ima(\q) \otimes \Ima(\q)\rp \Ima (\mathbf{\bar{e}_1}) \\
&= \lp2\Real^2 (\q)-1\rp \vezero(\bar\q) + 2 \Real(\q) \lp\Ima(\q) \times \vezero(\bar\q) \rp + 2 \lp \Ima(\q) \otimes \Ima(\q)\rp \vezero(\bar\q),
 \label{eq:e_1_bar} \end{split} \ee
 where we used that for $\q, \rvec\in\unitq$ it holds that $\Real(\q\rvec) = \Real(\rvec)\Real(\q) -\Ima(\q)\cdot\Ima(\rvec)$ and $\Ima(\q\rvec)= \Real(\q)\Ima(\rvec)+\Real(\rvec)\Ima(\q)+\Ima(\q)\times\Ima(\rvec)$.
 
We integrate against $M_1(\Real(\q))$: by arguments of parity the contribution of the second term vanishes (with a change of variable $\q'=\q^\ast$), and the contribution of the last term is diagonal (with the changes of variable which change the sign of $q_i$ for $i=1,\,2,\,3$), so that
\beqar
\int_\unitq  \vezero(\q)M_{\bar \q}(\q) \ud \q &=& \int_\unitq  \left[ \lp2\Real^2 (\q)-1\rp \Id + 2 \text{diag}(q_1^2,\,q_2^2,\,q_3^2) \right]  \, \vezero(\bar\q) \, M_{1}(\Real(\q)) \ud \q.
\eeqar
Using the changes of variable that switch the coordinates $q_i$ and $q_j$ for $i\neq j$, we see that the diagonal elements corresponding to $q_i^2$ for $i=1,2,3$ give rise to the same value for the integral, therefore
\beqar
\int_\unitq  \vezero(\q)M_{\bar \q}(\q) \ud \q &=& \int_\unitq  \left[ \lp2\Real^2 (\q)-1\rp \Id + \frac{2}{3} |\Ima (\q)|^2 \Id \right]  \, \vezero(\bar\q) \, M_{1}(\Real(\q)) \ud \q\\
&=& \int_\unitq  \left[ \lp2\Real^2 (\q)-1\rp + \frac{2}{3} (1-\Real^2(\q)) \right]  \, \vezero(\bar\q) \, M_{1}(\Real(\q)) \ud \q,
\eeqar
so that the equality \eqref{eq:continuity_integral} holds for
\beqar
c_1&=&\frac{1}{3} \lp \int_\unitq  \lp 4 \lp \Real(\q)\rp^2 -1 \rp M_1(\Real (\q)) \ud \q \rp\\
&=& \frac{2}{3} 4\pi \int_{0}^\pi  \lp \frac{1}{2}+\cos\theta\rp \frac{m(\theta)}{Z} \sin^2(\theta/2) \ud \theta \\
&=& \frac{2}{3} \la \frac{1}{2}+\cos\theta\ra_{m \sin^2(\theta/2)},
\eeqar
where we used Prop. \eqref{prop:volume_element} on the volume element.
\end{proof}

We are now ready to prove Th. \ref{th:macro_limit}.

\begin{proof}[Proof of Th. \ref{th:macro_limit}]
By Eq. \eqref{eq:f_eps}, we have that $\Gamma(f^\eps)=\mathcal{O}(\eps)$. Formally, the limit of $f^\eps$ as $\epsilon \to 0$ (if the limit exists) is in the kernel of $\Gamma$. Therefore, by Prop. \ref{prop:properties_Gamma}, the limit has the form 
\beqarl
\label{ansatz}
f(t,x,\q) = \rho(t,x) M_{\bar \q(t,x)}(\q),
\eeqarl
for some $\rho=\rho(t,x)\geq 0$ and $\bar \q=\bar\q(t,x)\in\unitq$.
We integrate the kinetic Eq. \eqref{eq:f_eps} on $\unitq$ to obtain
\beqarl
\pa_t \rho^\eps + \nabla_x \cdot \left(\int_\unitq   \vezero(\q) f^\eps \ud \q\right)  &=& \mathcal{O}(\eps). \label{eq:eps_continuity}
\eeqarl
Taking the limit $\eps\to 0$ and substituting the value of $f$ with expression \eqref{ansatz}, we obtain the equation
\beqarl \label{comp:eps0}
\pa_t \rho + \nabla_x \cdot \left(   \rho(t,x) \int_\unitq  \vezero(\q)M_{\bar \q(t,x)}(\q) \ud \q\right) = 0.
\eeqarl
Lem. \ref{lem:computation integral continuity equation} gives us the value of the integral in the previous expression, from which we conclude the continuity equation \eqref{eq:continuity_equation}.

\bigskip

We compute next the evolution equation for $\bar \q = \bar\q(t,x)$. We multiply the rescaled kinetic equation \eqref{eq:f_eps} by the GCI $\psi$ associated with $\bar \q_{f^\eps}$, that is, by Prop. \ref{prop:set_GCI}, 
$$\psi(\q) = (\beta \cdot \q)\, h(\q \cdot \bar \q_{f^\eps}), \quad \mbox{for } \beta \in \bar\q^\perp,$$
and integrate over $\unitq$. We obtain (using Eq. \eqref{eq:GCIworks}) 
$$\int_\unitq \left[ \partial_t f^\eps + \nabla_x \cdot ( \vezero(\q) f^\eps) \right] (\beta \cdot \q)\, h(\q \cdot \bar \q_{f^\eps})\, \ud\q = \mathcal{O}(\eps).$$
Making $\eps \to 0$ and using that (formally) $\bar\q_{f^\eps} \to \bar \q$, the previous expression gives:
$$\int_\unitq \left[ \partial_t (\rho M_{\bar \q}) + \nabla_x \cdot (  \vezero(\q) \rho M_{\bar \q}) \right] (\beta \cdot \q)\, h(\q \cdot \bar \q)\, \ud\q =0 \quad \mbox{for all } \beta \in \bar \q^\perp.$$
Particularly, this implies that
$$\beta\cdot Y = 0, \quad \mbox{for all }\beta \in \bar\q^\perp,$$
for
$$Y= \int_\unitq \left[ \partial_t (\rho M_{\bar \q}) + \nabla_x \cdot (  \vezero(\q) \rho M_{\bar \q}) \right] \, h(\q \cdot \bar \q)\, \q\, \ud\q.$$
This is equivalent to
$$X:= P_{\bar\q^\perp}Y=P_{\bar \q^\perp} \int_\unitq \left[ \partial_t (\rho M_{\bar \q}) + \nabla_x \cdot (\vezero(\q) \rho M_{\bar \q}) \right]  h(\q \cdot \bar \q)\, \q\, \ud\q =0. $$

Next we compute each term in the previous expression. We have that
$$\partial_t (\rho M_{\bar \q}) = (\partial_t \rho) M_{\bar \q}+ \rho M_{\bar \q} \frac{4}{d}(\q \cdot \bar \q) (\q \cdot \partial_t \bar \q),$$
and
$$\nabla_x \cdot (  \vezero(\q) \rho M_{\bar \q})=   \vezero(\q) \cdot \lp(\nabla_x \rho)\, M_{\bar\q} + \rho M_{\bar \q} \frac{4}{d}(\q \cdot \bar \q)\nabla_x (\q \cdot \bar \q) \rp.$$
We define
\be \label{def:bigH} H(\q \cdot \bar \q) := M_{\bar \q}(\q) h(\q \cdot \bar \q) (\q \cdot \bar \q),
\ee
which is even in its argument (recall that $h$ is odd).
We have that
$$X=: X_1+X_2+X_3+X_4=0,$$
where
\beqar
X_1 &=& P_{\bar \q^\perp} \int_{\unitq} (\partial_t \rho) M_{\bar \q} h(\q \cdot \bar \q)\, \q\, \ud\q ,\\
X_2 &=& P_{\bar \q^\perp} \int_{\unitq}\rho M_{\bar \q} \frac{4}{d}(\q \cdot \bar \q) (\q \cdot \partial_t \bar \q)  h(\q \cdot \bar \q)\, \q\, \ud\q \\
&=& P_{\bar \q^\perp} \int_{\unitq}\rho  \frac{4}{d} (\q \cdot \partial_t \bar \q)  H(\q \cdot \bar \q)\, \q\, \ud\q ,\\
X_3 &=& P_{\bar \q^\perp} \int_{\unitq}   \vezero(\q) \cdot (\nabla_x \rho)\, M_{\bar\q}  h(\q \cdot \bar \q)\, \q\, \ud\q ,\\
X_4 &=& P_{\bar \q^\perp} \int_{\unitq}   \vezero(\q) \cdot \lp \rho M_{\bar \q} \frac{4}{d}(\q \cdot \bar \q)\nabla_x (\q \cdot \bar \q) \rp  h(\q \cdot \bar \q)\, \q\, \ud\q \\
&=& P_{\bar \q^\perp} \int_{\unitq}   \rho \frac{4}{d}\Big( \q \cdot \lp \vezero(\q) \cdot \nabla_x \rp \bar \q\Big) H(\q \cdot \bar \q)\, \q\, \ud\q .
\eeqar
Using the change of variables $\q'=\q \bar \q^*$ (and skipping the primes) we obtain:
\beqar
X_1 &=& P_{\bar \q^\perp} \int_{\unitq} (\partial_t \rho) M_{1}(\Real(\q)) h(\Real(\q))\, \q\, \ud\q\, \bar \q, \\
X_2 &=& P_{\bar \q^\perp} \int_{\unitq}\rho  \frac{4}{d} (\q \cdot (\partial_t \bar \q)\bar \q^*)  H(\Real(\q)))\, \q\, \ud\q\, \bar \q,\\
X_3 &=& P_{\bar \q^\perp} \int_{\unitq}    \vezero(\q\bar\q) \cdot (\nabla_x \rho)\, M_{1}(\Real(\q))  h(\Real(\q))\, \q\, \ud\q \, \bar \q,\\
X_4 
&=&\rho \frac{4}{d}   P_{\bar \q^\perp} \int_{\unitq} \Big( \q \bar \q \cdot \lp \vezero(\q \bar \q) \cdot \nabla_x \rp \bar \q\Big) H(\Real(\q))\, \q\, \ud\q \bar \q,
\eeqar
with $M_1(\q)$ given by Eq. \eqref{eq:equilibrium_change_variables}.

Firstly notice the following: for any $\q \in \mathbb{H}$ and $\bar \q \in \unitq$ it holds
\be \label{eq:projection_ofproduct}
P_{\bar\q^\perp}(\q\bar\q)= \q\bar\q - (\q\bar\q \cdot \bar \q) \bar \q = \Ima (\q) \bar \q.
\ee

\paragraph{The term $X_1$.}
We apply Eq. \eqref{eq:projection_ofproduct} on $X_1$:
\beqar
X_1 &=& \int_{\unitq} (\partial_t \rho) M_{1}(\Real(\q)) h(\Real(\q))\, \Ima(\q)\, \ud\q\, \bar \q ,
\eeqar
which gives an odd integral in $\Ima (\q)$, hence $X_1=0$. 

\paragraph{The term $X_2$.}
To compute $X_2$ we firstly apply Eq. \eqref{eq:projection_ofproduct} again:
\beqar
X_2 &=&  \int_{\unitq}\rho  \frac{4}{d} (\q \cdot (\partial_t \bar \q)\bar \q^*)  H(\Real(\q)))\, \Ima(\q)\, \ud\q\, \bar \q.
\eeqar
Now, it holds that
\be \label{eq:image_product}
(\partial_t \bar \q) \bar \q^* = \Ima \lp (\partial_t \bar \q) \bar \q^* \rp,
\ee
since $\Real\lp (\partial_t \bar \q) \bar \q^* \rp= \partial_t \bar \q \cdot \bar \q =0$ given that $\bar \q \perp \partial_t \bar \q$. Eq. \eqref{eq:image_product} implies, in particular, that $\q \cdot (\partial\bar \q) \bar \q^* = \Ima (\q) \cdot \Ima((\partial_t \bar \q) \bar \q^*)$. With all these considerations we conclude that
\be
X_2 =\frac{4}{d}\rho \left[ \lp \int_{\unitq} \Ima(\q) \otimes \Ima(\q) \,   H(\Real(\q))\, \ud\q \rp  \Ima\lp (\partial_t \bar \q)\bar\q^* \rp \right] \bar \q. \label{eq:X2_2}
\ee 
Now observe that the off-diagonal elements in $\Ima(\q)\otimes \Ima (\q)$ are odd in the components $q_1, q_2, q_3$ (where $\q=(q_0, q_1,q_2,q_3)$), therefore the off-diagonal elements give integral zero. The diagonal elements corresponding to $q_i^2$ for $i=1,2,3$ can be permuted giving the same value for the integral. With these considerations in mind, we have that
\beqarl
\int_{\unitq} \Ima(\q) \otimes \Ima(\q)\, H(\Real(\q))\, \ud\q
&= & \int_{\unitq} \frac{q_1^2+q_2^2+q_3^2}{3} \Id \, H(\Real(\q))\, \ud\q \nonumber\\
&=& \int_{\unitq} \frac{1-\Real^2(\q)}{3} \, H(\Real(\q))\, \ud\q \, \Id \nonumber\\
&=:& C_2 \Id. \label{eq:integral_tensor_product}
\eeqarl 
We substitute Eq. \eqref{eq:integral_tensor_product} into Eq. \eqref{eq:X2_2} and conclude that
\be \label{eq:X2_final}
X_2 =\frac{4}{d}C_2\rho   \partial_t \bar \q\, ,
\ee
with $C_2$ given in Eq. \eqref{eq:integral_tensor_product}.

\paragraph{The term $X_3$.} 

We apply Eq. \eqref{eq:projection_ofproduct} to obtain
$$X_3 =   \int_{\unitq} \vezero(\q\bar\q) \cdot (\nabla_x \rho)\, M_{1}(\Real(\q))  h(\Real(\q))\, \Ima(\q)\, \ud\q \, \bar \q.$$
Then, we use the decompositon of $\vezero(\q\bar\q)$ in \eqref{eq:e_1_bar} 
to compute 
$$X_3:= X_{3,1}+X_{3,2}+X_{3,3},$$ 
where
\beqar
X_{3,1} &=&    \int_{\unitq} (2\Real^2(\q)-1) \left[\vezero(\bar\q) \cdot \nabla_x \rho\right]\, M_{1}(\Real(\q))  h(\Real(\q))\, \Ima(\q)\, \ud\q \, \bar \q \, ,\\
X_{3,2} &=&     \int_{\unitq}  2 \Real(\q)\lp \Ima (\q) \times \vezero(\bar\q) \rp \cdot (\nabla_x \rho)\, M_{1}(\Real(\q))  h(\Real(\q))\, \Ima(\q)\, \ud\q \, \bar \q \, ,\\
X_{3,3} &=&     \int_{\unitq}  2\lp (\Ima(\q) \otimes \Ima(\q))\vezero(\bar\q) \rp \cdot (\nabla_x \rho)\, M_{1}(\Real(\q))  h(\Real(\q))\, \Ima(\q)\, \ud\q \, \bar \q.\\
\eeqar
The integrands in the terms $X_{3,1}$ and $X_{3,3}$ are odd in $\Ima(\q)$ so $X_{3,1}=X_{3,3}=0$. Next, using that $(\Ima (\q) \times \vezero(\bar\q)) \cdot \nabla_x\rho= ( \vezero(\bar\q) \times \nabla_x\rho )\cdot \Ima(\q)$  we have
\beqar
X_3 =X_{3,2} &=& 2   \left[ \lp \int_{\unitq}   \Ima (\q) \otimes \Ima(\q) \,H(\Real(\q))\, \ud\q \rp \, ( \vezero(\bar\q) \times \nabla_x\rho ) \right] \bar \q \\
&=&2  C_2 (\vezero(\bar\q) \times \nabla_x \rho ) \bar \q,
\eeqar
by using Eq. \eqref{eq:integral_tensor_product}.

\paragraph{The term $X_4$.}
We first apply the projection
\beqar
X_4 
&=&\rho \frac{4}{d}   P_{\bar \q^\perp} \int_{\unitq}   \Big( \q \bar \q \cdot \lp \vezero(\q \bar \q) \cdot \nabla_x \rp \bar \q\Big) \,   H(\Real(\q))\, \q\, \ud\q \bar \q \\
&=&\rho \frac{4}{d}    \int_{\unitq}   \Big( \q \bar \q \cdot \lp \vezero(\q \bar \q) \cdot \nabla_x \rp \bar \q\Big) \,   H(\Real(\q))\, P_{\bar \q^\perp}\lp\q\,  \bar \q\rp \ud\q \\
&=&\rho \frac{4}{d}    \int_{\unitq}   \Big( \q \bar \q \cdot \lp \vezero(\q \bar \q) \cdot \nabla_x \rp \bar \q\Big) \,   H(\Real(\q))\, \Ima \lp\q\, \rp \ud\q  \bar \q.
\eeqar
Now, note that, since $\partial_{x_j} \bar \q \in \bar \q^\perp$, we have, for all $\q\in\unitq$, that
$$\lp\lp\Real(\q)\rp \bar \q\rp \cdot \lp \lp \vezero(\q \bar \q) \cdot \nabla_x \rp \bar \q \rp = 0.$$
Therefore
\beqar
X_4 
&=&\rho \frac{4}{d}    \int_{\unitq}   \Big(\lp\lp\Ima(\q)\rp \bar \q \rp \cdot \lp \lp \vezero(\q \bar \q) \cdot \nabla_x \rp \bar \q \rp\Big) \,   H(\Real(\q))\, \Ima \lp\q\, \rp \ud\q  \bar \q.\\
\eeqar
Then, using again the decomposition \eqref{eq:e_1_bar},
we can write
\beqar
X_4 
&=&X_{4,a}+X_{4,b}+X_{4,c}.
\eeqar
We  compute next
\beqar
X_{4,a} 
&=&\rho \frac{4}{d}    \int_{\unitq}   \lp 2\Real^2 \q-1\rp  \Big(\Ima\lp \q\rp \bar \q \cdot \lp \vezero(\bar \q) \cdot \nabla_x \rp \bar \q\Big)  \,   H(\Real(\q))\, \Ima \lp\q\, \rp \ud\q \, \bar \q\\
&=&\rho \frac{4}{d}    \int_{\unitq}   \lp 2\Real^2 \q-1\rp  \Big(\Ima\lp \q\rp \cdot [\lp \vezero(\bar \q) \cdot \nabla_x \rp \bar \q]  \bar \q^\ast\Big)  \,   H(\Real(\q))\, \Ima \lp\q\, \rp \ud\q \, \bar \q\\
&=&\rho \frac{4}{d}    \int_{\unitq}    \lp2\Real^2 \q-1\rp \, H(\Real(\q))   (\Ima(\q)) \otimes (\Ima(\q))  [\lp \vezero(\bar \q) \cdot \nabla_x \rp \bar \q]  \bar \q^\ast \ud\q\,\bar \q\\
&=&\rho \frac{4}{3d}    \int_{\unitq}    \lp2\Real^2 \q-1\rp \lp1-\Real^2 \q\rp \, H(\Real(\q)) \ud\q  \, \lp \vezero(\bar \q) \cdot \nabla_x \rp \bar \q,
\eeqar
where we use that the expression is odd in $\q_i$, $i=1,\hdots, 3$ on the off-diagonal terms and the symmetry in $\q_i$ to group the diagonal terms, analogously to the computation of the term \eqref{eq:integral_tensor_product}. Next, we have that
\beqar
X_{4,b} 
&=&\rho \frac{4}{d}    \int_{\unitq}    2 \Real(\q)  \Big(\lp \lp\Ima(\q)\rp \bar \q\rp \cdot \lp \lp \Ima(\q) \times  \vezero(\bar\q) \rp \cdot \nabla_x \rp \bar \q\Big)  \,   H(\Real(\q))\, \Ima \lp\q\, \rp \ud\q  \bar \q\\
&=&0,
\eeqar
since the integrand is odd in $\Ima(\q)$). Finally, we compute
\beqar
X_{4,c} 
&=&\rho \frac{4}{d}    \int_{\unitq}    2 \Big(\lp\lp\Ima(\q)\rp \bar \q \rp \cdot \lp \lp\lp \Ima(\q) \otimes \Ima(\q)\rp \vezero(\bar\q) \cdot \nabla_x \rp \bar \q \rp\Big) \,   H(\Real(\q))\, \Ima \lp\q\, \rp \ud\q  \bar \q\\
&=&\rho \frac{4}{d}    \int_{\unitq}    2 (\Ima(\q)\cdot \vezero(\bar\q)) \Big(\lp\lp\Ima(\q)\rp \bar \q \rp \cdot \lp \lp \Ima(\q) \cdot \nabla_x \rp \bar \q \rp\Big) \,   H(\Real(\q))\, \Ima \lp\q\, \rp \ud\q  \bar \q\\
&=&\rho \frac{4}{d}    \int_{\unitq}    2 (\Ima(\q)\cdot \vezero(\bar\q)) \Big(\lp\Ima(\q) \rp \cdot \lp \lp  \lp\Ima(\q) \cdot\nabla_x  \rp \bar \q\rp \bar \q^\ast\rp\Big) \,   H(\Real(\q))\, \Ima \lp\q\, \rp \ud\q  \bar \q\\
&=&\rho \frac{4}{d}    \int_{\unitq}    2 (\Ima(\q)\cdot \vezero(\bar\q)) \Big(\lp\Ima(\q) \rp \cdot \lp A(\bar \q) \Ima(\q)\rp\Big) \,   H(\Real(\q))\, \Ima \lp\q\, \rp \ud\q  \bar \q,
\eeqar
where $A(\bar \q)=\Ima \lp(\nabla_x \bar \q) \bar \q^\ast\rp$, seen as a $3\times 3$ matrix, that is
\begin{equation} \label{def:A} A(\bar \q)_{i,j}:=\lp(\pa_{x_j} \bar \q) \bar \q^\ast\rp_i = \lp\pa_{x_j, \text{rel}} \bar \q\rp_i, \quad \text{ for } i,\,j=1,\,2,\,3.
\end{equation}
We replace $A(\bar \q)$ by its symmetrization $A^S(\bar \q)=\frac{1}{2}\lp A(\bar \q) + A^t(\bar \q)\rp$, and diagonalize it in a orthogonal basis $A^S(\bar \q) = O^t D O$ with $O$ an orthogonal $3\times 3$ matrix and $D$ a diagonal $3\times 3$ matrix. Then, using a change of variable $\Ima(\q)' = O \Ima(\q)$, we have
\beqar
X_{4,c}
&=&\rho \frac{4}{d}    \int_{\unitq}    2 (\Ima(\q)\cdot \vezero(\bar\q)) \lp\Ima(\q)  \cdot (O^t D O\Ima(\q))   \rp \,   H(\Real(\q))\, \Ima \lp\q\, \rp \ud\q  \bar \q\\
&=&\rho \frac{4}{d}    \int_{\unitq}    2 (\Ima(\q)\cdot O \vezero(\bar\q)) \lp\Ima(\q)  \cdot (D \Ima(\q))   \rp \,   H(\Real(\q))\, O^t \Ima \lp\q\, \rp \ud\q  \bar \q\\
&=&\rho \frac{4}{d}    \lp O^t  \int_{\unitq}    2 (\Ima(\q)\otimes \Ima(\q) ) \lp\Ima(\q)  \cdot (D \Ima(\q))   \rp \,   H(\Real(\q))\, \ud\q O \vezero(\bar\q)  \rp \bar \q.
\eeqar
Again, since the integrand is odd in $q_i$, $i=1,\hdots,3$, the off-diagonal terms in the integral are zero. We compute the $i^{th}$ diagonal term of the integral for $i\in\{1,2,3\}$:
\beqar
&&\lp\int_{\unitq}    2 (\Ima(\q)\otimes \Ima(\q) ) \lp\Ima(\q)  \cdot (D \Ima(\q))   \rp \,   H(\Real(\q))\, \ud\q\rp_{i,i}\\
&=&\int_{\unitq}    2 \q_i^2 \lp\Ima(\q)  \cdot (D \Ima(\q))   \rp \,   H(\Real(\q))\, \ud\q\\
&=&\int_{\unitq}    2 \lp d_i \q_i^4 + \Sigma_{j\neq i} d_j \q_i^2 \q_j^2 \rp \,   H(\Real(\q))\, \ud\q\\
&=& 2 d_i \int_{\unitq}    \q_1^4 \,   H(\Real(\q))\, \ud\q +  2 \Sigma_{j\neq i} d_j \int_{\unitq}  \lp \q_1^2 \q_2^2 \rp \,   H(\Real(\q))\, \ud\q\\
&=& 2 d_i \int_{\unitq}    \q_1^2(\q_1^2-\q_2^2) \,   H(\Real(\q))\, \ud\q +  2 \text{Tr}(D) \int_{\unitq}  \q_1^2 \q_2^2 \,   H(\Real(\q))\, \ud\q\\
&=& \lp 2 D \int_{\unitq}    \q_1^2(\q_1^2-\q_2^2) \,   H(\Real(\q))\, \ud\q +  2 \text{Tr}(D)\Id  \int_{\unitq}  \q_1^2 \q_2^2 \,   H(\Real(\q))\, \ud\q\rp_i.
\eeqar
Inserting this expression into $X_{4,c}$, we have
\beqar
X_{4,c} &=&\rho \frac{4}{d}    \Bigg( O^t  \Bigg( 2 D \int_{\unitq}    \q_1^2(\q_1^2-\q_2^2) \,   H(\Real(\q))\, \ud\q \\
&& \qquad \quad+\,  2 \text{Tr}(D)\Id  \int_{\unitq}  \q_1^2 \q_2^2 \,   H(\Real(\q))\, \ud\q\Bigg) O \vezero(\bar\q)  \Bigg) \bar \q\\
&=&\rho \frac{4}{d}    \Bigg(  \Bigg( 2 A^S(\bar\q) \int_{\unitq}    \q_1^2(\q_1^2-\q_2^2) \,   H(\Real(\q))\, \ud\q\\
&& \qquad\quad+ \, 2 \text{Tr}(A(\bar\q))\Id  \int_{\unitq}  \q_1^2 \q_2^2 \,   H(\Real(\q))\, \ud\q\Bigg)\vezero(\bar\q)  \Bigg) \bar \q.
\eeqar

Finally, we obtain
\beqar
X_{4} 
&=&\rho \frac{4}{d}  \Big[ C_3 \, \lp \vezero(\bar \q) \cdot \nabla_x \rp \bar\q+  \lp 2 C_4 A^S(\bar\q) \vezero(\bar\q)+  2 C_5 \text{Tr}(A(\bar\q)) \vezero(\bar\q)  \rp \bar \q \Big].
\eeqar
with
\beqar
C_3&=&\int_{\unitq}    \frac{\lp2\Real^2 \q-1\rp}{3} \lp1-\Real^2 \q\rp \, H(\Real(\q)) \ud\q, \\
C_4&=&\int_{\unitq}    \q_1^2(\q_1^2-\q_2^2) \,   H(\Real(\q))\, \ud\q,\\
C_5&=&\int_{\unitq}  \q_1^2 \q_2^2 \,   H(\Real(\q))\, \ud\q.
\eeqar

Recall that $2 A^S(\bar \q) = A (\bar \q)+ A(\bar \q)^t$ with $A(\bar \q)$ defined in Eq. \eqref{def:A}. We compute
\beqar
A(\bar \q) \vezero(\q) &=& \lp \sum_j \lp\pa_{x_j,\text{rel}} \bar \q \rp_i (\vezero(\bar\q))_j \rp_{i=1,\dots,3} = \lp\lp\vezero(\bar\q)\cdot\nabla_x\rp \bar\q\rp \bar\q^\ast,\\
A^t(\bar \q) \vezero(\q) &=& \lp \sum_j \lp\pa_{x_i,\text{rel}} \bar \q \rp_j (\vezero(\bar\q))_j \rp_{i=1,\dots,3} = \nabla_{x,\text{rel}} \bar \q \, \vezero (\bar \q),\\
\text{Tr}(A(\bar\q)) &=& \sum_i \lp\pa_{x_i,\text{rel}} \bar \q \rp_i = \nabla_{x,\text{rel}} \cdot \bar \q,
\eeqar
thanks to Eqs. \eqref{eq:def_gradient_divergence_rel}--\eqref{def:Drel_times_e1}. Therefore, we have that
\beqar
X_{4} 
&=&\rho \frac{4}{d}  \Big[ (C_3+C_4) \, \lp \vezero(\bar \q) \cdot \nabla_x \rp \bar\q+  \lp C_4 \, \nabla_{x,\text{rel}} \bar \q \, \vezero (\bar \q)+  2 C_5 \, \lp\nabla_{x,\text{rel}} \cdot \bar \q\rp \, \vezero(\bar\q)  \rp \bar \q \Big].
\eeqar

\paragraph{End of the proof.}
Finally, we conclude that $X=X_1+X_2+X_3+X_4=0$ is equivalent to
\begin{equation*} \begin{split}
0=X =&\frac{4}{d}C_2\rho \partial_t \bar \q + 2  C_2 (\vezero(\bar\q) \times \nabla_x \rho ) \bar \q \\
&+\rho \frac{4}{d}  \Big[ (C_3+C_4) \, \lp \vezero(\bar \q) \cdot \nabla_x \rp \bar\q+  \lp C_4  \, \nabla_{x,\text{rel}} \bar \q \, \vezero(\bar\q)+  2 C_5  \, \lp\nabla_{x,\text{rel}} \cdot \bar \q\rp \, \vezero(\bar\q)  \rp \bar \q \Big],
\end{split}\end{equation*}
so that
\begin{equation*} \begin{split}
&\rho  \left[ \partial_t \bar \q + \frac{C_3+C_4}{C_2} \, \lp \vezero(\bar \q) \cdot \nabla_x \rp \bar\q \right] + \frac{d}{2}  (\vezero(\bar\q) \times \nabla_x \rho ) \bar \q \\
& \qquad+\rho \left[  \lp \frac{C_4}{C_2}  \, \nabla_{x,\text{rel}} \bar \q \,\vezero(\bar\q)+  2 \frac{C_5}{C_2} \, \lp\nabla_{x,\text{rel}} \cdot \bar \q\rp \,\vezero(\bar\q)  \rp \bar \q \right]=0.
\end{split}\end{equation*}

It remains to compute each one of the constants $C_i$, $i=1,\hdots,5$. For this we will use repeatedly the change of variable of Prop. \ref{prop:volume_element} and the following:
$$ 4\pi \int_0^\pi f(\theta) H(\cos(\theta/2))\sin^2 (\theta/2)\ud \theta = \la f(\theta) \cos(\theta/2) h(\cos(\theta/2)) \rangle_{m\, \sin^2(\theta/2)} \nn,$$
which is a direct consequence of the definitions \eqref{def:bigH}, \eqref{eq:definition_angles}, and of Eq. \eqref{eq:M_and_m}.

We now compute
\beqar
C_2&=&\int_{\unitq} \frac{1-\Real^2(\q)}{3} \, H(\Real(\q))\, \ud\q \nonumber\\
&=& 4\pi \int_0^\pi \frac{\sin^2(\theta/2)}{3} H(\cos(\theta/2)) 
\sin^2(\theta/2)\, \ud\theta \nonumber\\
&=& \frac{1}{3} \la \sin^2(\theta/2) \cos(\theta/2) h(\cos(\theta/2)) \rangle_{m\, \sin^2(\theta/2)} \nn;\\
C_3 &=&\int_{\unitq}    \frac{\lp2\Real^2 \q-1\rp}{3} \lp1-\Real^2 \q\rp \, H(\Real(\q)) \ud\q \\
&=& 4\pi \int_0^\pi \frac{(2\cos^2(\theta/2)-1)}{3} \sin^2(\theta/2) H(\cos(\theta/2)) \sin^2(\theta/2)\, \ud \theta\\
&=& \frac{1}{3}\la (2\cos^2(\theta/2)-1) \sin^2(\theta/2) \cos(\theta/2)h(\cos(\theta/2)) \ra_{m\, \sin^2(\theta/2)};\\
C_5 &=& \int_{\unitq}  \q_1^2 \q_2^2 \,   H(\Real(\q))\, \ud\q\\ 
 &=& \int_{\mathbb{S}^2}\int^\pi_0 (\sin(\theta/2) n_1)^2 (\sin(\theta/2) n_2)^2 H(\cos(\theta/2))\sin^2(\theta/2)\ud\theta \ud\nvec\\
 &=& \lp\int^\pi_0 \sin^4(\theta/2) H(\cos(\theta/2))\sin^2(\theta/2)\, \ud\theta\rp \lp\int^\pi_0\int^{2\pi}_0 \cos^2\theta_2\sin^2\theta_2\cos^2\theta_3\sin\theta_2\ud\theta_3\ud\theta_2\rp\\
 &=&\frac{4 \pi}{15}\lp\int^\pi_0 \sin^4(\theta/2) H(\cos(\theta/2))\sin^2(\theta/2)\, \ud\theta\rp \\
 &=&\frac{1}{15}\la \sin^4(\theta/2) \cos(\theta/2)h(\cos(\theta/2))\ra_{m\, \sin^2(\theta/2)} ;\\
 C_4 &=& \int_{\unitq}    \q_1^2(\q_1^2-\q_2^2) \,   H(\Real(\q))\, \ud\q \\
 &=& \int_{\unitq} \q_1^4 H(\Real(\q))\, \ud\q - C_5\\
 &=& \lp\int_{\mathbb{S}^2} n_1^4 \ud\nvec\rp \lp \int^\pi_0 \sin^4(\theta/2) H(\cos(\theta/2)) \sin^2(\theta/2)\, \ud\theta\rp -C_5\\
 &=& \lp \frac{4\pi}{5}-\frac{4\pi}{15}\rp\int^\pi_0 \sin^4(\theta/2) H(\cos(\theta/2)) \sin^2(\theta/2)\, \ud\theta\\
 &=& \frac{8\pi}{15} \int^\pi_0 \sin^4(\theta/2) H(\cos(\theta/2)) \sin^2(\theta/2)\, \ud\theta\\
 &=& 2 C_5.
\eeqar
We finally compute
\beqar
c_2:=\frac{C_3+C_4}{C_2} &=& 3 \la \frac{2 \cos^2(\theta/2)-1}{3} + \frac{2}{15} \sin^2(\theta/2) \ra_{m \sin^4(\theta/2) h(\cos(\theta/2)) \cos(\theta/2)}\\
&=& \frac{1}{5} \la 1 + 4 \cos\theta \ra_{m \sin^4(\theta/2) h(\cos(\theta/2)) \cos(\theta/2)};\\
c_4:=\frac{C_4}{C_2}=2\frac{C_5}{C_2}&=&\frac{2}{5}\la \sin^{2}(\theta/2) \ra_{\sin^4(\theta/2)m h(\cos(\theta/2)) \cos(\theta/2)} \\
&=& \frac{1}{5}\la 1-\cos\theta \ra_{\sin^4(\theta/2)m h(\cos(\theta/2)) \cos(\theta/2)}.
\eeqar
\end{proof}

\section{Comparison with the results in Ref. \cite{bodyattitude}}
\label{sec:comparison}

In this section we compare the models presented here with the ones obtained for the body attitude coordination model in Ref. \cite{bodyattitude}. The crucial differnce between the two approaches is that, while in Ref. \cite{bodyattitude} the representation of body attitudes relies on rotation matrices in $SO(3)$, here it relies on unitary quaternions in $\unitq$ (which are more computationnally efficient).

After an introductory prensentation of the links between $SO(3)$ and $\unitq$ (Sec. \ref{sec:unitq_matr}), we present the two main results of this section: the equivalence between the individual based models (Th. \ref{cor:equivalence_process}, in Sec. \ref{sec:compare_micro}), and the equivalence between the macroscopic models (Th. \ref{th:equivalence_macro_equations}, in Sec. \ref{sec:comparison_macro}).

\subsection{Relation between unitary quaternions and rotation matrices}\label{sec:unitq_matr}

\medskip
  We first introduce some notations.
Rotations in $\mathbb{R}^3$ can be described mathematically in different ways. In this section we consider three particular descriptions, namely, the group of orthonormal matrices corresponding to the rotation group $SO(3)$; the description via unitary quaternions $\q\in \unitq$; and, finally, rotations  described by the pair $(\theta, \nvec)\in[0,\pi]\times \mathbb{S}^2$ where $\nvec$ indicates the axis of rotation and $\theta$ the angle of rotation anti-clockwise around $\nvec$. For $A\in SO(3)$, $\q \in \unitq$ and $(\theta, \nvec)\in[0,\pi]\times \mathbb{S}^2$ corresponding to the same rotation, we have the following identities
\beqarl
A &=& A(\theta,\nvec)=\Id+ \sin(\theta)[\nvec]_\times + (1-\cos\theta)[\nvec]_\times^2= \exp(\theta[\nvec]_\times) \mbox{ (Rodrigues' formula),}\nonumber\\ \label{eq:rodrigues_formula}\\
\q &=& \q(\theta,\nvec)= \cos(\theta/2)+\sin(\theta/2) \nvec, \label{eq:q_theta_n}\\
\label{eq:correspondance_rotation_quaternion}
A{\bf v} &=& \Ima(\q\bar {\bf v}\q^*), \quad \mbox{(rotation of ${\bf v}$) for any ${\bf v}\in \mathbb{R}^3$}, 
\eeqarl
where $\bar {\bf v}\in \mathbb{H}$ with $\Real(\bar {\bf v})=0$ and $\Ima(\bar {\bf v})={\bf v}$; and
where we abuse notation in Eq. \eqref{eq:q_theta_n} and understand $\nvec$  as written in the Hamiltonian basis $\nvec = n_1 \vec{\imath}+ n_2 \vec{\jmath}+n_3 \vec{k}$ rather than in the canonical basis $\nvec=(n_1, n_2, n_3)$. Notice that when $\theta=0$, the vector $\nvec$ is not defined, but this does not pose a problem in the sense that 
 there is an unambiguous correspondence with $A=\Id$ and $\q=1$.

Define the operator $\Phi: \unitq \to SO(3)$
 by
\be \label{eq:def_Phi}
\Phi: \unitq \longrightarrow SO(3), \, \q\mapsto (\Phi (\q): \uu\in\R^3\mapsto \Ima (\q \uu \q^\ast)\in\R^3).
\ee
This operator associates to each unitary quaternion $\q \in \unitq$, the corresponding rotation matrix $A= \Phi(\q) \in SO(3)$.
 In particular, the following identities hold for  any $\q, \mathbf{r} \in \unitq$:
\beqarl
 i)&& \Id = \Phi(1);\\
 ii)&&\Phi(\q)=\Phi(-\q) ;\\
 iii)&&\Phi(\q^*) = \left[\Phi(\q)\right]^t; \label{eq:transpose_matrix}\\
 iv)&& \Phi(\q)\Phi(\mathbf{r})= \Phi(\q \mathbf{r}). \label{eq:product_matrix}
\eeqarl
Identities $ii)$ and $iv)$ are consequences of Eq. \eqref{eq:correspondance_rotation_quaternion}; identity $iii)$ is a consequence of $iv)$, noticing that $\q^\ast=\q^{-1}$ and $A^t=A^{-1}$.

\bigskip
First, we show the relation between the inner products in $SO(3)$ and $\unitq$:
\begin{lemma} \label{lem:equivalence_scalar_product}
Let $A,B\in SO(3)$ and $\q,\mathbf{r}\in \unitq$ to be such that
$A=\Phi(\q)$ and $B=\Phi(\mathbf{r})$, then
\be \label{eq:equivalence_scalar_product}
\frac{1}{2}A \cdot B= (\q \cdot \mathbf{r})^2 -\frac{1}{4} = \q \cdot \lp \mathbf{r}\otimes \mathbf{r}- \frac{1}{4}\Id \rp \q.
\ee
where $A \cdot B = \tr(AB^t)/2$, with $\tr$ denoting the trace.
\end{lemma}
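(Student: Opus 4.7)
The plan is to exploit the homomorphism properties of $\Phi$ listed in the excerpt (items $i)$--$iv)$) to reduce the identity to a single trace computation on one unitary quaternion, and then use Rodrigues' formula to evaluate that trace.

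First, I would handle the second equality $(\q\cdot\mathbf{r})^2 - \frac14 = \q\cdot\bigl(\mathbf{r}\otimes\mathbf{r}-\frac14\Id\bigr)\q$ as a one-liner: by definition of the tensor product $\q\cdot(\mathbf{r}\otimes\mathbf{r})\q = (\q\cdot\mathbf{r})^2$, and $\q\cdot\frac14\Id\,\q = \frac14|\q|^2 = \frac14$ since $\q\in\unitq$. So the real content lies in the first equality.

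For the first equality, the key reduction is $AB^t = \Phi(\q)\Phi(\mathbf{r}^\ast) = \Phi(\q\mathbf{r}^\ast)$ using identities $iii)$ and $iv)$. Setting $\mathbf{s}:=\q\mathbf{r}^\ast\in\unitq$, the statement becomes
\begin{equation*}
\tfrac14\tr\bigl(\Phi(\mathbf{s})\bigr) = (\q\cdot\mathbf{r})^2 - \tfrac14.
\end{equation*}
Writing $\mathbf{s} = \cos(\theta/2)+\sin(\theta/2)\nvec$ as in Eq. \eqref{eq:q_theta_n}, Rodrigues' formula \eqref{eq:rodrigues_formula} yields $\Phi(\mathbf{s}) = \Id + \sin\theta[\nvec]_\times + (1-\cos\theta)[\nvec]_\times^2$. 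Using $\tr([\nvec]_\times)=0$ and the standard identity $\tr([\nvec]_\times^2) = -2|\nvec|^2 = -2$, I obtain $\tr(\Phi(\mathbf{s})) = 1+2\cos\theta$. The double-angle formula $\cos\theta = 2\cos^2(\theta/2)-1 = 2\,\Real(\mathbf{s})^2 - 1$ then gives $\tr(\Phi(\mathbf{s})) = 4\,\Real(\mathbf{s})^2 - 1$.

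It remains to identify $\Real(\mathbf{s}) = \Real(\q\mathbf{r}^\ast)$ with $\q\cdot\mathbf{r}$, which is exactly formula \eqref{eq:inner product quaternions}. Combining everything gives $\tfrac14\tr(\Phi(\mathbf{s})) = \Real(\mathbf{s})^2 - \tfrac14 = (\q\cdot\mathbf{r})^2 - \tfrac14$, concluding the proof. I do not anticipate any real obstacle here: the proof is essentially a bookkeeping exercise once one recognizes that conjugation and composition on $SO(3)$ correspond respectively to quaternionic conjugation and product under $\Phi$, reducing the two-quaternion identity to the single computation of $\tr\Phi$ on $\q\mathbf{r}^\ast$.
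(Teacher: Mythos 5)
Your proof is correct and follows essentially the same route as the paper: reduce $AB^t$ to $\Phi(\q\mathbf{r}^\ast)$ via the homomorphism identities, then evaluate the trace using Rodrigues' formula and the half-angle identity, with $\Real(\q\mathbf{r}^\ast)=\q\cdot\mathbf{r}$ from Eq. \eqref{eq:inner product quaternions}. The only minor presentational difference is that you compute $\tr([\nvec]_\times^2)=-2$ directly rather than reading off $\Id\cdot AB^t=\tfrac12+\cos\theta$ as the paper does, and you also spell out the (trivial) second equality; the substance is identical.
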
 
\begin{proof}[Proof of Lem. \ref{lem:equivalence_scalar_product} \nameref{lem:equivalence_scalar_product}]
To check Eq. \eqref{eq:equivalence_scalar_product}, we  recast the inner products in $SO(3)$ and $\unitq$  in the variables $(\theta,\nvec) \in [0,\pi]\times \mathbb{S}^2$. 
By Eqs. \eqref{eq:transpose_matrix}-\eqref{eq:product_matrix}, it holds  that $AB^t = \Phi(\q\mathbf{r}^*)\in SO(3)$. Let $(\theta,\nvec) \in [0,\pi]\times \mathbb{S}^2$ be the angle and rotation axis representing the same rotation as $AB^t$ (and $\q \mathbf{r}^*$). We have that
\beqar
\q \cdot \mathbf{r}& = &\Real(\q \mathbf{r}^*)=1\cdot \q\mathbf{r}^*= \cos(\theta/2),\\
A \cdot B & = &\frac{1}{2}\tr(AB^t)=\Id\cdot AB^t= \Id \cdot \lp \Id + \sin\theta[\nvec]_\times + (1-\cos\theta)[\nvec]_\times^2 \rp = \frac{1}{2}+ \cos\theta.
\eeqar
so
$$(\q\cdot \mathbf{r})^2 - \frac{1}{4}= \cos^2(\theta/2)-\frac{1}{4}= \frac{1+\cos\theta}{2}-\frac{1}{4}= \frac{1}{4}+\frac{\cos\theta}{2}= \frac{1}{2}A\cdot B.$$
The second equality in Eq. \eqref{eq:equivalence_scalar_product} is obtained directly using that $|\q|=1$.
\end{proof}

Next, we establish the correspondence between integrals in $SO(3)$ and $\unitq$:
\begin{lemma}[Comparison of volume elements]
\label{lem:volume_elements}
Consider $g:SO(3)\to \R$, then
\be \label{eq:equivalence integrals}
\int_{SO(3)}g(A)\, \ud A= \frac{1}{2\pi^2} \int_{\mathbb{H}_1}g(\Phi(\q))\,\ud\q,
\ee
where $\ud\q$ is the Lebesgue measure on the hypersphere $\unitq$ and $\ud A$ is the normalized Lebesgue measure on $SO(3)$.
\end{lemma}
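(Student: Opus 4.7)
The plan is to invoke uniqueness of the Haar measure on $SO(3)$. The map $\Phi:\unitq\to SO(3)$ is a smooth $2$-to-$1$ surjective covering (with fibres $\{\q,-\q\}$ by property $ii)$) and, by property $iv)$, it is a group homomorphism from $(\unitq,\cdot)$ onto $(SO(3),\cdot)$. Consider the pushforward measure $\mu$ on $SO(3)$ defined by
\be
\int_{SO(3)} g(A)\, \ud\mu(A) := \frac{1}{2\pi^2}\int_{\unitq} g(\Phi(\q))\, \ud\q,
\ee
for every continuous (or nonnegative measurable) function $g$. Our goal is to identify $\mu$ with the normalized Lebesgue (Haar) measure $\ud A$ on $SO(3)$; Eq.~\eqref{eq:equivalence integrals} then follows.

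First I would check that $\mu$ is left-invariant. Given any $B\in SO(3)$, pick $\mathbf{r}\in\unitq$ with $\Phi(\mathbf{r})=B$ (possible by surjectivity of $\Phi$). The left multiplication $\q\mapsto \mathbf{r}\q$ on $\mathbb{H}\cong\R^4$ is linear and an isometry, because $|\mathbf{r}\q|=|\mathbf{r}||\q|=|\q|$; hence it is an orthogonal transformation of $\R^4$, so it preserves both the sphere $\unitq$ and its $3$-dimensional Lebesgue (surface) measure. Using property $iv)$ and the change of variable $\q'=\mathbf{r}\q$, we get
\begin{eqnarray*}
\int_{SO(3)} g(BA)\, \ud\mu(A)
&=& \frac{1}{2\pi^2}\int_{\unitq} g(\Phi(\mathbf{r})\Phi(\q))\, \ud\q \\
&=& \frac{1}{2\pi^2}\int_{\unitq} g(\Phi(\mathbf{r}\q))\, \ud\q \\
&=& \frac{1}{2\pi^2}\int_{\unitq} g(\Phi(\q'))\, \ud\q'
= \int_{SO(3)} g(A)\, \ud\mu(A),
\end{eqnarray*}
which proves left-invariance. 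By uniqueness of the Haar measure on the compact group $SO(3)$ up to a multiplicative constant, $\mu = c\, \ud A$ for some $c>0$.

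To determine $c$, I take $g\equiv 1$: on the one hand $\int_{SO(3)} 1\, \ud A = 1$ (normalization of $\ud A$), while on the other
$$\mu(SO(3)) = \frac{1}{2\pi^2}\int_{\unitq}\ud\q = \frac{\mathrm{vol}(\mathbb{S}^3)}{2\pi^2}=1,$$
using that $\unitq$ coincides with the unit sphere $\mathbb{S}^3\subset\R^4$ of volume $2\pi^2$. Hence $c=1$ and $\mu = \ud A$, giving Eq.~\eqref{eq:equivalence integrals}.

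The only delicate point is the invariance of the surface measure on $\unitq$ under left multiplication by a unit quaternion; this is the one place where the specific geometry of $\unitq\subset\R^4$ (and not just its structure as a Lie group) enters the argument. Everything else reduces to the general Haar-measure machinery together with the already-established properties $i)$--$iv)$ of $\Phi$. An alternative but more tedious route would be to fix an explicit local parametrization (e.g.\ via $(\theta,\mathbf{n})$ as in Eq.~\eqref{eq:q_theta_n} and Rodrigues' formula~\eqref{eq:rodrigues_formula}) and compare Jacobians directly; I would avoid it in favor of the Haar-uniqueness argument.
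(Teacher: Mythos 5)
Your proof is correct, and it takes a genuinely different route from the paper's. The paper proceeds by explicit coordinates: it applies Prop.~\ref{prop:volume_element} to the pulled-back function $f(\q)=g(\Phi(\q))$ to express the $\unitq$-integral in the $(\theta,\nvec)$ variables, uses Rodrigues' formula to rewrite $g(\Phi(\cos(\theta/2)+\sin(\theta/2)\nvec))$ as $g(\exp(\theta[\nvec]_\times))$, and then matches the result against the corresponding formula Eq.~\eqref{eq:volume_A} for the $SO(3)$-integral quoted from Ref.~\cite{bodyattitude}. Your argument instead invokes uniqueness of Haar measure on the compact group $SO(3)$: the pushforward of the (normalized) surface measure on $\unitq$ under the surjective homomorphism $\Phi$ is left-invariant because left multiplication by a unit quaternion is a restriction of an orthogonal map of $\R^4$, and fixing the constant amounts to recalling $\mathrm{vol}(\mathbb{S}^3)=2\pi^2$. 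Both proofs are sound. Yours is cleaner and more conceptual, sidestepping the external formula Eq.~\eqref{eq:volume_A} and the explicit Jacobian bookkeeping; it does, however, quietly rely on the fact that the ``normalized Lebesgue measure'' $\ud A$ on $SO(3)$ (i.e.\ the Riemannian volume for the bi-invariant metric $A\cdot B=\tr(AB^t)/2$) really is the Haar measure — a standard but worth-stating fact, since Haar uniqueness is the linchpin of the argument. The paper's computational route has the advantage of producing the intermediate $(\theta,\nvec)$ disintegration formulas, which are reused later when the constants $c_1,\dots,c_4$ are evaluated.
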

\begin{proof}[Proof of Lem. \ref{lem:volume_elements}  \nameref{lem:volume_elements}]
We apply Prop. \ref{prop:volume_element} to the (even) function $f(\q):=g(\Phi(\q))$, to get
$$\int_{\unitq} g(\Phi(\q))\, \ud\q = \int^{\pi}_0 \sin^2(\theta/2) \int_{\mathbb{S}^2}g(\Phi(\cos(\theta/2)+\sin(\theta/2)\nvec)) \, \ud\theta \ud\nvec.$$
Using Rodrigues' formula \eqref{eq:rodrigues_formula}, it yields
$$\int_{\unitq} g(\Phi(\q))\, \ud\q = \int^{\pi}_0 \sin^2(\theta/2) \int_{\mathbb{S}^2}\tilde g (\theta,\nvec) \, \ud\nvec \ud\theta,$$
where $\tilde g (\theta,\nvec):=g(\exp(\theta[\nvec]_\times))$.

On the other hand, from Ref. \cite{bodyattitude}, we know that it holds:
\be \label{eq:volume_A}
\int_{SO(3)} g(A)\, \ud A= \frac{1}{2\pi^2}\int^\pi_0 \sin^2({\theta}/{2})\int_{\mathbb{S}^2} \tilde g(\theta,\nvec ) \ud\nvec \ud\theta,
\ee
and this concludes the proof.
\end{proof}

\bigskip
Finally, one can check that $\Phi$ is continuously differentiable on $\unitq$ given that it is a quadratic function on $\unitq$. It holds the following:
\begin{proposition}
\label{prop:DPhi}
Noting $\textnormal{D}_\q \Phi: \q^\perp \longrightarrow T_{\Phi(\q)}$ the differential of $\Phi$ at $\q\in \unitq$, we have that for any $\q\in\unitq$ and any vector $\uu\in\R^3$,
\begin{equation}\label{eq:DPhi_is_uPhi0}
 \textnormal{D}_\q\Phi (\uu \q) = 2 \left[ \uu \right]_\times \Phi(\q).
 \end{equation}
Equivalently, since the tangent space at $\q\in\unitq$ is exactly the set $\q^\perp=\{\uu\q, \, \uu\in\R^3\}$ (see Prop. \ref{prop:tangent space}), we have that 
\begin{equation}\label{eq:DPhi_is_uPhi} 
 \textnormal{D}_\q\Phi (\mathbf{p}) = 2 \left[ \mathbf{p} \q^\ast \right]_\times \Phi(\q), \qquad \mbox{for all }\mathbf{p}\in \q^\perp.
 \end{equation}
  \end{proposition}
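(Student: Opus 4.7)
The plan is to compute $D_\q\Phi$ by differentiating directly the defining formula $\Phi(\q)\mathbf{v} = \Ima(\q \mathbf{v}\q^\ast) = \q\mathbf{v}\q^\ast$ (the second equality holding because $\mathbf{v}$ purely imaginary implies $\q\mathbf{v}\q^\ast$ purely imaginary), applying Leibniz' rule on the quaternion product. For any $\mathbf{p}\in\mathbb{H}$, viewed as an arbitrary tangent direction in the ambient space $\mathbb{H}\simeq\R^4$,
\begin{equation*}
\textnormal{D}_\q\Phi(\mathbf{p})\mathbf{v} = \mathbf{p}\mathbf{v}\q^\ast + \q\mathbf{v}\mathbf{p}^\ast, \qquad \mathbf{v}\in\R^3.
\end{equation*}
This formula makes sense on all of $\mathbb{H}$; I will restrict it afterwards to $\mathbf{p}\in\q^\perp$, which by Prop.~\ref{prop:tangent space} is the true tangent space $T_\q\unitq$.

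Next, I use the tangent-space characterization $\q^\perp=\{\uu\q : \uu\in\R^3\}$ to write $\mathbf{p}=\uu\q$ with $\uu$ purely imaginary. Then $\mathbf{p}^\ast = (\uu\q)^\ast = \q^\ast \uu^\ast = -\q^\ast\uu$, since $\uu^\ast=-\uu$ for purely imaginary $\uu$. Substituting, and using $\q\q^\ast=1$,
\begin{equation*}
\textnormal{D}_\q\Phi(\uu\q)\mathbf{v} = \uu\,(\q\mathbf{v}\q^\ast) - (\q\mathbf{v}\q^\ast)\,\uu = \uu\mathbf{w}-\mathbf{w}\uu, \qquad \mathbf{w}:=\Phi(\q)\mathbf{v}\in\R^3.
\end{equation*}

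The last step is to identify this commutator with the cross product. For two purely imaginary quaternions $\uu,\mathbf{w}$, the standard identity $\uu\mathbf{w}=-\uu\cdot\mathbf{w}+\uu\times\mathbf{w}$ gives $\uu\mathbf{w}-\mathbf{w}\uu = 2\,\uu\times\mathbf{w} = 2[\uu]_\times\mathbf{w}$. Therefore $\textnormal{D}_\q\Phi(\uu\q)\mathbf{v} = 2[\uu]_\times\Phi(\q)\mathbf{v}$ for every $\mathbf{v}\in\R^3$, which is Eq.~\eqref{eq:DPhi_is_uPhi0}. Equation~\eqref{eq:DPhi_is_uPhi} then follows immediately by setting $\uu:=\mathbf{p}\q^\ast$, which is purely imaginary precisely because $\mathbf{p}\perp\q$ means $\Real(\mathbf{p}\q^\ast)=\mathbf{p}\cdot\q=0$.

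There is no real obstacle here; the calculation is elementary once one keeps track of the abuses of notation (identifying $\R^3$ with purely imaginary quaternions, so that $[\uu]_\times\mathbf{w}$ is the vector $\uu\times\mathbf{w}$ seen again as a purely imaginary quaternion). The only subtle point is the consistency check that both sides do take values in the tangent space $T_{\Phi(\q)}SO(3)$ of antisymmetric-times-$\Phi(\q)$ matrices, which is visible from the right-hand side $2[\uu]_\times\Phi(\q)$; it is a reassuring verification but not logically necessary for the proof.
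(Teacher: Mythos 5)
Your proof is correct and follows essentially the same route as the paper: both treat $\Phi$ as a quadratic function of $\q$, differentiate by the product rule, and then use the dot--cross decomposition of a product of purely imaginary quaternions. The paper packages the product rule via the associated symmetric bilinear form $\Phi_{\textnormal{BL}}$ so that $\textnormal{D}_\q\Phi(\p)=2\Phi_{\textnormal{BL}}(\p,\q)$ and then takes $\Ima$ of a single term, while you keep both Leibniz terms, notice they form a commutator $\uu\mathbf{w}-\mathbf{w}\uu$, and identify the commutator with $2\,\uu\times\mathbf{w}$; these are equivalent ways of writing the same algebra.
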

 \begin{remark} From this relation, we can deduce the links between the gradient, divergence and laplacian operators in $SO(3)$ and $\unitq$: see Prop. \ref{lem:equivalent_equations}--\ref{prop:laplacians} in Annex \ref{annex:Phi}.
 \end{remark}
 
 \begin{proof}
The operator $\Phi$ in Eq. \eqref{eq:def_Phi} is quadratic and associated to the symmetric bilinear operator $\Phi_{\textnormal{BL}}$ defined by, for $\p_1,\,\p_2 \in \mathbb{H}$ and for $\vv\in\R^3$,
\beqar
\Phi_{\textnormal{BL}} (\p_1,\p_2) (\vv)= \Ima (\p_1 \vv\p_2^*).
\eeqar
Note that this operator is indeed symmetric since $\Ima (\p_2 \vv\p_1^*)= -\Ima ((\p_2 \vv\p_1^*)^*)$ and $\vv^*=-\vv$. We then use Prop. \ref{prop:} to conclude that, for any $\q\in\unitq$, any $\p_1=\uu\q \in \q^\perp$ (with $\uu\in\R^3$) and for any $\vv\in\R^3$,
\beqar
[\textnormal{D}_\q \Phi \, (\uu\q)] (\vv) &=& 2 \Phi_\textnormal{BL}(\uu\q,\q) (\vv) = 2\Ima (\uu\q \vv\q^*)\\
&=& 2 \uu\times \Ima (\q \vv\q^*) = 2 [\uu]_\times \Phi(\q) \vv .
\eeqar
 \end{proof}

\subsection{Equivalence between individual based models}
\label{sec:compare_micro}

In this section we check that the flocking dynamic considered in \cite{bodyattitude} corresponds with that of Eqs. \eqref{eq:particleX}-\eqref{eq:particleQ}.

In \cite{bodyattitude} the authors describe an individual based model for body attitude coordination given by the evolution over time of $(X_k, A_k)_{k=1,\hdots,N}$ of $N$ agents,  where $X_k\in \R^3$ is the position of agent $k$ and $A_k\in SO(3)$ is a rotation matrix giving its body attitude. The evolution of the system is given by the following equations:
\begin{eqnarray} \label{eq:IBM_rotation}
	\ud\mathbf{X_k}(t) &=& v_0 A_k(t) \vezero \ud t,\\
	\ud A_k(t) &=& P_{T_{A_k}} \circ \left[ \nu \PD(M_k) \ud t + 2\sqrt{D} \ud W^k_t \right], \label{eq:IBM2_rotation}
\end{eqnarray}
where the Stochastic Differential Equation is in  Stratonovich sense (see Ref.~\cite{gardiner}); $W_t^k$ is the Brownian motion in the space of squared matrices; $M_k$ is defined as
\be \label{eq:Mk}
   M_k(t) := \frac{1}{N}\sum_{i=1}^N K(|\mathbf{X}_i(t)-\mathbf{X_k}(t)|) A_i(t),
\ee
where $K$ is a positive interaction kernel;
 $\nu$, $v_0$ and $D$ are positive constants; $\vezero$ is a vector; and $P_{T_A}$ is the projection in $SO(3)$ to the tangent space to $A$.
The term $PD(M)$ denotes the orthogonal matrix obtained from the polar decomposition of $M$ which is defined as follows:
\begin{lemma}[Polar decomposition of a square matrix.\cite{golub2012matrix}]~
\label{lem:polar_decomposition}

Given a matrix $M\in \mathcal{M}$, if~$\det(M)\neq 0$ then there exists a unique orthogonal matrix~$A=PD(M)$ (given by~$A= M(\sqrt{M^tM})^{-1}$) and a unique symmetric positive definite matrix~$S$ such that~$M=AS$.
\end{lemma}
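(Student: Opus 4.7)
The plan is to exploit the fact that $M^tM$ is symmetric positive definite whenever $M$ is invertible, so that it admits a well-defined symmetric positive definite square root which will play the role of $S$.

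First I would verify the positive definiteness: since $M^tM$ is manifestly symmetric and $v^t M^tM v = |Mv|^2 > 0$ for every nonzero $v$ (using $\det M \neq 0$, so $Mv\neq 0$), the matrix $M^tM$ is symmetric positive definite. Next I would invoke the spectral theorem to diagonalize $M^tM = Q D Q^t$ with $Q$ orthogonal and $D$ diagonal with strictly positive entries, and define the symmetric positive definite square root $S := Q\sqrt{D}Q^t$. This $S$ is invertible, and I would then set $A := MS^{-1}$ and check that
\begin{equation*}
A^t A = S^{-t} M^t M S^{-1} = S^{-1} S^2 S^{-1} = \Id,
\end{equation*}
so that $A$ is orthogonal. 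By construction $M = AS$, yielding existence of the decomposition with the claimed explicit formula $A = M(\sqrt{M^tM})^{-1}$.

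For uniqueness, suppose we have two decompositions $M = A_1 S_1 = A_2 S_2$ with $A_i$ orthogonal and $S_i$ symmetric positive definite. Then $M^tM = S_i A_i^t A_i S_i = S_i^2$ for $i=1,2$, so both $S_1$ and $S_2$ are symmetric positive definite square roots of $M^tM$. The main (and really the only nontrivial) step is the uniqueness of the symmetric positive definite square root of a symmetric positive definite matrix; I would establish this by simultaneously diagonalizing $S_1$ and $S_2$ in the eigenbasis of $M^tM$ and observing that on each eigenspace $S_i$ must act as multiplication by the unique positive square root of the corresponding eigenvalue. Once $S_1 = S_2$, it follows that $A_1 = MS_1^{-1} = MS_2^{-1} = A_2$, completing the proof. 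The hypothesis $\det M \neq 0$ is used exactly to ensure that $S$ is invertible (so that $A$ is well-defined) and that the square root is unique among positive definite matrices.
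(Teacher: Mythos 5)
The paper does not provide its own proof of this lemma: it is stated as a standard fact and cited to the reference on matrix computations (Golub and Van Loan). Your proof is a correct and entirely standard derivation, and it is precisely the argument one would find in that textbook: form $S=\sqrt{M^tM}$ via the spectral theorem, set $A=MS^{-1}$, verify $A^tA=\Id$, and reduce uniqueness to the uniqueness of the symmetric positive definite square root. The only place where your sketch is a bit terse is the justification of why $S_1$ and $S_2$ can be \emph{simultaneously} diagonalized with $M^tM$: the clean way to say it is that any symmetric positive definite $S$ satisfying $S^2=M^tM$ commutes with $M^tM$ (since $S\cdot S^2=S^2\cdot S$), hence preserves each of its eigenspaces, and on the eigenspace for eigenvalue $\lambda>0$ a positive definite operator whose square is $\lambda\Id$ must equal $\sqrt{\lambda}\,\Id$. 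With that remark filled in, the argument is complete.
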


\medskip
The vector $A_k\vezero$ in Eq. \eqref{eq:IBM_rotation} gives the direction of movement of agent $k$ and is obtained as the rotation of the vector $\vezero$ by $A_k$. Equivalently, we can express it as  $A_k\vezero~=~\vezero(\q_k)$ (in the notation of Eq. \eqref{eq:particleX}) as long as $A_k$ and $\q_k$ represent the same rotation. Therefore, Eqs. \eqref{eq:particleX} and \eqref{eq:IBM_rotation} represent the same dynamics and we are left to check that $\q_k=\q_k(t)$ and $A_k=A_k(t)$  in Eqs. \eqref{eq:particleQ} and \eqref{eq:IBM2_rotation} represent the same rotation for each time $t$ where the solutions are defined.  

\bigskip
The goal of this section will be to prove that the solution to the stochastic differential equation \eqref{eq:particleX}-\eqref{eq:particleQ} and the solution of the stochastic differential equation \eqref{eq:IBM_rotation}-\eqref{eq:IBM2_rotation} are the same in law (in a precise way that will be given later).

\bigskip
The main result of this section is the following:
\begin{theorem}[Equivalence in law] \label{cor:equivalence_process}
 The processes \eqref{eq:particleX}-\eqref{eq:particleQ} and \eqref{eq:IBM_rotation}-\eqref{eq:IBM2_rotation} are the same in law.
\end{theorem}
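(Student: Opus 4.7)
My plan is to show that, starting from the quaternion-valued solution of \eqref{eq:particleX}--\eqref{eq:particleQ}, the push-forward process $\tilde A_k(t) := \Phi(\q_k(t))$ on $SO(3)$ satisfies exactly the SDE \eqref{eq:IBM_rotation}--\eqref{eq:IBM2_rotation}; equivalence in law then follows from uniqueness (in law) for the latter system. Since both SDEs are in Stratonovich form, the ordinary chain rule applies with no It\^o correction, so $\ud \tilde A_k = D_{\q_k}\Phi \circ \ud\q_k$. Substituting Eq. \eqref{eq:particleQ} and invoking Proposition \ref{prop:DPhi}, which gives $D_\q \Phi(\mathbf{p}) = 2[\mathbf{p}\q^\ast]_\times \Phi(\q)$ for $\mathbf{p} \in \q^\perp$, produces a Stratonovich SDE on $SO(3)$ whose drift and diffusion I must match term by term with those of \eqref{eq:IBM2_rotation}.

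The first ingredient is the correspondence $\Phi(\bqk) = PD(M_k)$ between the quaternionic average and the polar decomposition. This follows from Lemma \ref{lem:equivalence_scalar_product}: applied termwise to the sums defining $M_k$ (Eq. \eqref{eq:Mk}) and $Q_k$ (Eq. \eqref{eq:definition of Q_k}), it yields $\tfrac{1}{2}(A \cdot M_k) = \q \cdot Q_k \q$ for any $A = \Phi(\q) \in SO(3)$. The maximizer over $\q \in \unitq$ of the right-hand side is $\bqk$ by definition (Eq. \eqref{eq:definition bar q}), while the maximizer over $A \in SO(3)$ of the left-hand side is $PD(M_k)$ by the variational characterization of the polar decomposition. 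The sign ambiguity $\bqk \leftrightarrow -\bqk$ is immaterial since $\Phi(-\q) = \Phi(\q)$.

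The drift match is then purely algebraic. Writing $F_k = (\bqk\cdot\q_k)\bqk - \tfrac{1}{4}\q_k$, a direct calculation gives $P_{\q_k^\perp}F_k = (\bqk\cdot\q_k)\,P_{\q_k^\perp}\bqk$, and using $\Real(\bqk\q_k^\ast) = \bqk\cdot\q_k$ one computes $(P_{\q_k^\perp}F_k)\q_k^\ast = (\bqk\cdot\q_k)\,\Ima(\bqk\q_k^\ast)$. By Proposition \ref{prop:DPhi} this gives $D_{\q_k}\Phi(P_{\q_k^\perp}F_k) = 2\bigl[(\bqk\cdot\q_k)\,\Ima(\bqk\q_k^\ast)\bigr]_\times A_k$. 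On the other hand, writing $\bqk\q_k^\ast = \cos(\theta/2) + \sin(\theta/2)\nvec$ and applying Rodrigues' formula \eqref{eq:rodrigues_formula} to $\Phi(\bqk\q_k^\ast) = PD(M_k)\,A_k^t$, the antisymmetric part of this matrix is exactly $\sin\theta\,[\nvec]_\times = 2\bigl[(\bqk\cdot\q_k)\,\Ima(\bqk\q_k^\ast)\bigr]_\times$. Since $P_{T_{A_k}}(B) = \bigl(\tfrac{1}{2}(BA_k^t - A_k B^t)\bigr)A_k$ is the antisymmetric part of $BA_k^t$ right-multiplied by $A_k$, the two drifts coincide.

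The diffusion match uses that $\|D_\q\Phi(u\q)\|_F = \|2[u]_\times\|_F = 2\sqrt{2}\,|u|$ by Proposition \ref{prop:DPhi}, so $D_\q\Phi$ is an isometry up to a factor $2\sqrt{2}$ from $\q^\perp$ (with $\mathbb{H}$-norm) to $T_A SO(3)$ (with Frobenius norm). Consequently, the push-forward under $D_{\q_k}\Phi$ of the noise $\sqrt{D/2}\,P_{\q_k^\perp}\ud B^k_t$ is, in law, a Brownian motion on $T_{A_k}SO(3)$ of intensity $\sqrt{D/2}\cdot 2\sqrt{2} = 2\sqrt{D}$, matching the noise of \eqref{eq:IBM2_rotation}; this is precisely the reason for the slightly unusual scaling $\sqrt{D/2}$ in \eqref{eq:particleQ}. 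The main obstacle I anticipate is not these coefficient matches but the rigorous handling of the covering map $\Phi$: one must verify that the pushed-forward noise $D_{\q_k}\Phi\circ P_{\q_k^\perp}$ and the intrinsic matrix noise $P_{T_{A_k}}$ produce Brownian motions with identical Stratonovich quadratic variation on $T_{A_k} SO(3)$, and one must use the nematic invariance of the quaternion dynamics (Remark \ref{re:invariance_change_of_sign}) to argue that the resulting law on $SO(3)$ does not depend on the initial sign choice of $\q_k(0)$.
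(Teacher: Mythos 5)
Your proposal is correct, but it takes a genuinely different route to the law-equivalence step than the paper does. You push the quaternion SDE forward through the covering map $\Phi$ using the Stratonovich chain rule and then match drift and diffusion coefficients directly; the paper instead establishes a general transfer principle (Prop.~\ref{prop:stochastic_equivalence}) by comparing the forward Kolmogorov (Fokker--Planck) equations of the two processes, using the pre-established correspondences between the gradient, divergence, and Laplace--Beltrami operators on $\unitq$ and $SO(3)$ (Props.~\ref{lem:equivalent_equations}, \ref{lem:equivalence_divergence}, \ref{prop:laplacians}). The paper's weak/PDE route sidesteps exactly the delicate points you flag at the end: it never has to argue that $D_\q\Phi\circ P_{\q^\perp}$ pushes a $4$-dimensional Brownian motion to the same noise as $P_{T_A}$ applied to a $9$-dimensional Brownian motion, nor does it need a uniqueness-in-law input for the $SO(3)$ SDE, since it identifies the two Fokker--Planck equations directly. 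Your pathwise approach buys a more explicit mechanical check --- in particular your drift computation via Rodrigues' formula and the antisymmetric part of $PD(M_k)A_k^t$ recovers the paper's identity $P_{T_{A_k}}(PD(M_k))=\textnormal{D}_\q\Phi(P_{\q_k^\perp}F_k)$ by concrete algebra rather than by differentiating the scalar identity of Lemma~\ref{lem:equivalence_non_normalized_models} --- but it should be completed by (i) the standard uniqueness-in-law statement for the $SO(3)$ Stratonovich SDE with Lipschitz coefficients, and (ii) the observation that the nematic sign-invariance of Eqs.~\eqref{eq:particleX}--\eqref{eq:particleQ} makes the pushed-forward law independent of the initial lift $\pm\q_k(0)$. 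Finally, note that your ``isometry up to $2\sqrt2$'' computation uses the full Frobenius norm on matrices; the paper's conventions take $A\cdot B=\tr(AB^t)/2$, under which $\textnormal{D}_\q\Phi$ scales by $2$, and this factor of $\sqrt 2$ is precisely what is absorbed into the $\tilde\sigma^2/4$ appearing in Eq.~\eqref{eq:p_tilde} --- the two bookkeepings are consistent and both yield $2\sqrt D=\sqrt{8}\cdot\sqrt{D/2}$, but it is worth flagging which convention you are using.
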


The proof is done at the end of this Section. First, we remark that in the absence of randomness (Brownian motion) the equations for the evolution of the body attitude are equivalent:
\begin{proposition} \label{prop:equivalence_normalised_models}
 Let $A_0\in SO(3)$ and $\q_0\in\unitq$ represent the same rotation. Consider the matrix $M_k$ given in Eq. \eqref{eq:Mk}; the matrix $Q_k$ given in Eq. \eqref{eq:definition of Q_k}; and $\bar \q_k \in \unitq$  given in Eq. \eqref{eq:definition bar q}. Then, if $det(M_k)>0$, the following two Cauchy problems are equivalent (in the sense that $A_k=A_k(t)$ and $\q_k=\q_k(t)$ represent the same rotation for all $t$ where the solution is uniquely defined):
\beqar
\frac{\ud A_k}{\ud t}&=& P_{T_{A_k}}(PD(M_k)), \qquad A_k(0)=A_0,\\
\frac{\ud\q_k}{\ud t} &=& P_{\q_k^\perp}\left[\lp \bar \q_k \otimes \bar \q_k - \frac{1}{4}\Id\rp \q_k \right], \qquad \q_k(0)=
\q_0.
\eeqar
Note that these two Cauchy problems can also be written, respectively
\beqar
\frac{\ud A_k}{\ud t}&=& \nabla_A\left[ PD(M_k)\cdot A \right]_{|A=A_k}, \qquad A_k(0)=A_0,\\
\frac{\ud\q_k}{\ud t} &=& \frac{1}{4}\nabla_{\q} \left[ 2\q \cdot \lp \bar \q_k \otimes \bar \q_k - \frac{1}{4}\Id\rp \q \right]_{|\q=\q_k} , \qquad \q_k(0)=
\q_0,
\eeqar
where $\nabla_A$ and $\nabla_\q$ are the gradients in $SO(3)$ and $\unitq$, respectively.
\end{proposition}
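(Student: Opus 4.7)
The plan is to show that if $\q_k(t)$ solves the quaternion Cauchy problem with $\q_k(0) = \q_0$, then $\tilde A_k(t) := \Phi(\q_k(t))$ solves the matrix Cauchy problem with $\tilde A_k(0) = A_0$. Since $\det(M_k) > 0$ guarantees that $PD(M_k)$ is well-defined and depends smoothly on the data (equivalently, the maximal eigenvalue of $Q_k$ is simple, so $\bar\q_k$ is well-defined up to a sign harmlessly absorbed by the tensor $\bar\q_k\otimes\bar\q_k$), both right-hand sides are smooth; Cauchy--Lipschitz then yields uniqueness, so $A_k(t) = \Phi(\q_k(t))$, which is the asserted equivalence.

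The first substantive step is to identify $PD(M_k)$ with $\Phi(\bar\q_k)$. Applying Lemma~\ref{lem:equivalence_scalar_product} summand-by-summand to $M_k$ in Eq.~\eqref{eq:Mk} and $Q_k$ in Eq.~\eqref{eq:definition of Q_k} (and using that the rotations of the neighbouring agents are simultaneously represented by $A_i$ and $\q_i$), one gets $A \cdot M_k = 2\,\q\cdot Q_k\q$ whenever $A = \Phi(\q)$. Lemma~\ref{lem:polar_decomposition} characterises $PD(M_k)$ as the unique maximiser of $A\mapsto A\cdot M_k$ on $SO(3)$, while $\bar\q_k$ is (up to sign) the maximiser of $\q\mapsto \q\cdot Q_k\q$ on $\unitq$ by Eq.~\eqref{eq:definition bar q}, so the identification follows. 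Using the gradient forms stated in the proposition, both dynamics become gradient flows of essentially the same potential: $\phi(A) := \Phi(\bar\q_k)\cdot A$ for the matrix equation, and $\tfrac14\psi(\q) := \tfrac14\cdot 2\,\q\cdot(\bar\q_k\otimes\bar\q_k - \tfrac14\Id)\q$ for the quaternion equation, related by $\psi = \phi\circ\Phi$ (again by Lemma~\ref{lem:equivalence_scalar_product}).

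The remaining step is to show that $D_\q\Phi$ converts the quaternion gradient flow into the matrix one. Using Prop.~\ref{prop:DPhi} together with the standard identity $\tr([\uu]_\times[\vv]_\times) = -2\,\uu\cdot\vv$, a direct calculation shows that $D_\q\Phi(\mathbf{p}_1)\cdot D_\q\Phi(\mathbf{p}_2) = 4\,\mathbf{p}_1\cdot\mathbf{p}_2$ for all $\mathbf{p}_1,\mathbf{p}_2 \in \q^\perp$; i.e., $D_\q\Phi$ pulls the normalised Frobenius inner product on $T_{\Phi(\q)}SO(3)$ back to four times the Euclidean inner product on $T_\q\unitq$. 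Combined with the chain rule $d\psi_\q = d\phi_{\Phi(\q)}\circ D_\q\Phi$, this yields $D_\q\Phi(\nabla_\q\psi) = 4\,\nabla_A\phi|_{A=\Phi(\q)}$. Differentiating $\tilde A_k = \Phi(\q_k)$ in time then gives
$$\dot{\tilde A}_k = D_{\q_k}\Phi(\dot\q_k) = \tfrac14\, D_{\q_k}\Phi(\nabla_\q\psi) = \nabla_A\phi\big|_{A=\tilde A_k} = P_{T_{\tilde A_k}}(PD(M_k)),$$
which is exactly the matrix Cauchy problem. The alternative gradient expressions in the statement follow immediately from the standard identities $\nabla_A(B\cdot A) = P_{T_A}(B)$ and $\nabla_\q(\q\cdot M\q) = 2\,P_{\q^\perp}(M\q)$ for fixed matrices $B$ and symmetric $M$.

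The main obstacle is the metric-pullback computation giving the factor $4$, which is the precise reason for the $\tfrac14$ prefactor in the stated quaternion gradient form; getting this factor correct requires careful bookkeeping of the Frobenius normalisation $A\cdot B = \tfrac12\tr(AB^t)$ together with the explicit expression $D_\q\Phi(\mathbf{p}) = 2[\mathbf{p}\q^\ast]_\times\Phi(\q)$ from Prop.~\ref{prop:DPhi}.
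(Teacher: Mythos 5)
Your proof is correct and follows essentially the same route as the paper's. The paper factors the argument through Lemma~\ref{lem:equivalence_non_normalized_models} (which gives $PD(M_k)\cdot A_k = 2\,\q_k\cdot F_k$, identifying $PD(M_k)$ with $\Phi(\bar\q_k)$ exactly as you do via Lemma~\ref{lem:equivalence_scalar_product} and the variational characterisation of the polar factor) and through Prop.~\ref{lem:equivalent_equations} (the gradient correspondence $(\nabla_A g)\circ\Phi = \tfrac14 D_\q\Phi(\nabla_\q f)$ and the resulting equivalence of gradient-flow Cauchy problems); your metric-pullback computation $D_\q\Phi(\mathbf p_1)\cdot D_\q\Phi(\mathbf p_2)=4\,\mathbf p_1\cdot\mathbf p_2$ is precisely the calculation hidden in the paper's proof of Prop.~\ref{lem:equivalent_equations}, which passes through the same antisymmetric-matrix trace identity and the orthogonality of $\Phi(\q)$. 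The only cosmetic slip is attributing the variational characterisation of $PD(M_k)$ as the maximiser of $A\mapsto A\cdot M_k$ to Lemma~\ref{lem:polar_decomposition}; the paper gets this from \cite[Prop.~3.1]{bodyattitude}, not from the polar-decomposition lemma itself, but this does not affect the substance of the argument.
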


\bigskip
To prove this Proposition we first check that the average orientation of the neighbourgs is the same in the two models, in the sense described below:
\begin{lemma}
\label{lem:equivalence_non_normalized_models}
 Let $A_i=\Phi(\q_i)$ for $i=1, \hdots, N$, then, for every $k\in\{1, \hdots, N\}$ it holds
\be  \label{eq:relationPM_Q}
PD(M_k)\cdot A_k= 2 \q_k \cdot F_k(\q_k),
\ee
as long as $det(M_k)>0$, where $M_k$ is defined in Eq. \eqref{eq:Mk} and $F_k$ is given in Eq. \eqref{eq:definition F_k}.
\end{lemma}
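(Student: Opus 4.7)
The plan is to interpret both sides of \eqref{eq:relationPM_Q} as the optimal value of a quadratic optimization, one on $SO(3)$ and one on $\unitq$, and then use Lemma \ref{lem:equivalence_scalar_product} to identify the two optimizations. The key identity to exploit is the correspondence
\be\label{eq:proofplan_scalar}
\Phi(\mathbf{r})\cdot \Phi(\q) = 2\,\q\cdot\lp\mathbf{r}\otimes \mathbf{r} - \tfrac14 \Id\rp\q, \qquad \q,\mathbf{r}\in\unitq,
\ee
which follows directly from Lemma \ref{lem:equivalence_scalar_product}.

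First I would recall the variational characterization of the polar decomposition. When $\det(M_k)>0$, writing the singular value decomposition $M_k=U\Sigma V^t$ with $U,V\in SO(3)$, one checks that for any $A\in SO(3)$ one has $M_k\cdot A = \tfrac12\tr(A^t M_k)\le \tfrac12(\sigma_1+\sigma_2+\sigma_3)$, with equality uniquely attained at $A=UV^t=PD(M_k)$. In other words,
$$PD(M_k) = \arg\max_{A\in SO(3)} M_k\cdot A.$$
Next, using the linearity of the inner product and \eqref{eq:proofplan_scalar} applied to each $A_i=\Phi(\q_i)$ against an arbitrary $A=\Phi(\q)$, I compute
$$M_k\cdot A = \frac{1}{N}\sum_{i=1}^N K(|X_i-X_k|)\,A_i\cdot A = 2\,\q\cdot Q_k\q,$$
with $Q_k$ as in \eqref{eq:definition of Q_k}. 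Since $\Phi$ is a surjection from $\unitq$ onto $SO(3)$ (with $\Phi(\q)=\Phi(-\q)$), maximizing $A\mapsto M_k\cdot A$ over $SO(3)$ is equivalent to maximizing $\q\mapsto \q\cdot Q_k\q$ over $\unitq$; by the definition of $\bar\q_k$ in \eqref{eq:definition bar q}, the maximizer is (up to a sign) $\bar\q_k$, and therefore
$$PD(M_k) = \Phi(\bar\q_k).$$

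Finally, combining this identification with \eqref{eq:proofplan_scalar} applied to the pair $(\bar\q_k,\q_k)$, I obtain
$$PD(M_k)\cdot A_k = \Phi(\bar\q_k)\cdot\Phi(\q_k) = 2\,\q_k\cdot\lp\bar\q_k\otimes \bar\q_k - \tfrac14\Id\rp\q_k = 2\,\q_k\cdot F_k,$$
by the definition \eqref{eq:definition F_k} of $F_k$. This is exactly \eqref{eq:relationPM_Q}.

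I expect the mildly delicate point to be the first step, namely the variational characterization of $PD(M_k)$ and the need for the hypothesis $\det(M_k)>0$: it ensures that $M_k$ is invertible so that the polar decomposition of Lemma \ref{lem:polar_decomposition} is unique, and also guarantees that the maximum of $A\mapsto M_k\cdot A$ over $O(3)$ is attained in $SO(3)$ (otherwise it would be $-PD(M_k)$ in the opposite component). Once this is in place, all the remaining algebra is a direct application of Lemma \ref{lem:equivalence_scalar_product} and the definitions of $Q_k$, $\bar\q_k$, and $F_k$, and the sign ambiguity $\bar\q_k\leftrightarrow -\bar\q_k$ is harmless since $F_k$ depends only on $\bar\q_k\otimes\bar\q_k$.
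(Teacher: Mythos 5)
Your proof is correct and follows essentially the same route as the paper: both use Lemma \ref{lem:equivalence_scalar_product} to identify $M_k\cdot A$ with $2\,\q\cdot Q_k\q$, then identify $PD(M_k)$ with $\Phi(\bar\q_k)$ via the variational characterization of the polar decomposition, and conclude by a second application of Lemma \ref{lem:equivalence_scalar_product}. The only minor difference is that you spell out the SVD-based proof of the maximization property $PD(M_k)=\arg\max_{A\in SO(3)} M_k\cdot A$, whereas the paper cites this as \cite[Prop 3.1]{bodyattitude}.
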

\begin{proof}
Assume for simplicity that $K\equiv 1$ (the general case can be proven equally). Firstly, notice that for $A=\Phi(\q)$ ($\q\in\unitq$) it holds 
$$M_k\cdot A = \frac{1}{N} \sum_{i=1}^N A_i\cdot A=\frac{1}{N}\sum_{i=1}^N \Phi(\q_i) \cdot \Phi(\q) =2 \q\cdot\lp \frac{1}{N} \sum_{i=1}^N \lp \q_i\otimes \q_i - \frac{1}{4}\Id \rp \rp \q = 2 \q \cdot Q_k\q,$$
for $Q_k$ given in Eq. \eqref{eq:definition of Q_k} and
where we used Lem. \ref{lem:equivalence_scalar_product} to compute the inner product. 
Therefore, for any $\q\in\unitq$,
$$2 \q \cdot Q_k\q = M_k\cdot \Phi(\q).$$
Now, the definition of $\bar \q_k$ implies that it maximizes $\q\mapsto \q \cdot Q_k\q$ in $\unitq$. Since $\q \cdot Q_k\q= \frac{1}{2}M_k\cdot A$, this implies that $\Phi(\bar \q_k)$ maximizes $A \mapsto M_k\cdot A$ in $SO(3)$ which is a property that  characterises the matrix $PD(M_k)$ (see \cite[Prop 3.1]{bodyattitude}). Therefore, it holds that $\Phi(\bar \q_k)= PD(M_k)$ as long as $det (M_k)~>~0$, and in this case, using again Lem. \ref{lem:equivalence_scalar_product}, we have
\be
PD(M_k) \cdot A_k = \Phi(\bar \q_k) \cdot \Phi(\q_k) =  2 \q_k \cdot \lp\lp \bar \q_k \otimes \bar \q_k- \frac{1}{4}\mbox{Id} \rp\q_k\rp.
\ee
\end{proof}

We are now ready to prove Prop. \ref{prop:equivalence_normalised_models}.

\begin{proof}[Proof of Prop. \ref{prop:equivalence_normalised_models}] The fact that we can rewrite the first pair of Cauchy problems as the second one comes from the equalities
\be \label{eq:gradient_form}
P_{T_A}(PD(M))= \nabla_A(PD(M)\cdot A), \qquad P_{\q^\perp}F(\q) = \frac{1}{4}\nabla_\q (2\q \cdot F(\q)),
\ee
where
$$ F(\q):=\lp \bar \q_k \otimes \bar \q_k- \frac{1}{4}\mbox{Id} \rp\q.$$
We conclude the equivalence thanks to Lem. \ref{lem:equivalence_non_normalized_models} and Prop. \ref{lem:equivalent_equations}.
\end{proof}

To prove Th. \ref{cor:equivalence_process} we need the following result:
\begin{proposition}
\label{prop:stochastic_equivalence} Let $\sigma>0$, and let $H$ be a time-dependent tangent vector field on $\unitq$:
$$H:\unitq\times [0,\infty) \to \mathbb{H} \quad \text{with} \quad H(\q,t) \in \q^\perp, \; \text{for all}\, \q\in\unitq,\,t\ge0.$$
Let $\tilde \sigma>0$, and let $\tilde H$ be a time-dependent tangent vector field on $SO(3)$:
 $$\tilde H:SO(3)\times [0,\infty) \to M_3 \quad \text{with} \quad \tilde H(A,t) \in T_A, \; \text{for all}\, A\in SO(3),\,t\ge0.$$
Suppose that the following relations hold:
\be \label{eq:PTA_PTq0}
\tilde H (\Phi(\q)) = \textnormal{D}_\q \Phi \, (H(\q)), \qquad\mbox{for all } \q\in \unitq,
\ee
 and 
\be \label{eq:sigma} 
\tilde \sigma = \sqrt{8} \, \sigma.
\ee
Let $\tilde p_t$ be the law over time of a stochastic process in $SO(3)$ defined by
$$\ud A=\tilde H(A,t) \ud t +\tilde \sigma P_{T_A}\circ \ud\tilde B_t,$$
for $\tilde B_t$ a 9-dimensional Brownian motion.
Then, if $\tilde p_t$ is an absolutely continuous measure, the absolutely continous measure $p_t$ defined by
\be \label{eq:relation_laws}
\tilde p_t(\Phi(\q))=2\pi^2 p_t(\q), \quad\mbox{for all } \q\in \unitq,\, t\ge0,
\ee
is the law over time of a stochastic process in $\unitq$ defined by
\be \label{eq:dq_1particle}
\ud\q = H(\q,t) \ud t + \sigma P_{\q^\perp}\circ \ud B_t,
\ee
for $B_t$ a 4-dimensional Brownian motion.
\end{proposition}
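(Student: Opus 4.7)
The approach is to work at the level of the Kolmogorov forward (Fokker--Planck) equations and exploit the fact that $\Phi:\unitq\to SO(3)$ is a Riemannian covering, up to a constant rescaling of the metric. Given that $\tilde p_t$ satisfies
\begin{equation*}
\pa_t\tilde p_t = -\nabla_A\cdot(\tilde H\tilde p_t) + \tfrac{\tilde\sigma^2}{2}\Delta_A\tilde p_t,
\end{equation*}
(the standard Fokker--Planck equation for the given Stratonovich SDE, since the projected Brownian motion on $SO(3)$, embedded in $M_3$ via the Frobenius inner product $\langle A,B\rangle=\mathrm{tr}(AB^t)$, has generator $\tfrac{1}{2}\Delta_A$), the plan is to pull this equation back through \eqref{eq:relation_laws} and check that it coincides term-by-term with the Fokker--Planck equation
\begin{equation*}
\pa_t p_t = -\nabla_\q\cdot(Hp_t) + \tfrac{\sigma^2}{2}\Delta_\q p_t
\end{equation*}
of the quaternion SDE \eqref{eq:dq_1particle}. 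Uniqueness of weak solutions to this (uniformly parabolic, drift-perturbed) equation on the compact manifold $\unitq$ then identifies $p_t$ as the law of \eqref{eq:dq_1particle}.

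\paragraph{Key step: transfer of differential operators.} The crux is to rewrite the $SO(3)$-operators in terms of their $\unitq$-counterparts. From Prop.~\ref{prop:DPhi}, for any $\uu\in\R^3$ one has $D_\q\Phi(\uu\q)=2[\uu]_\times\Phi(\q)$, and a direct computation (using $|[\uu]_\times|_{M_3}^2=2|\uu|^2$) gives $|D_\q\Phi(\uu\q)|_{M_3}^2=8|\uu\q|_{\mathbb{H}}^2$. Hence the pullback metric satisfies $\Phi^\ast g_{SO(3)}=8\,g_\unitq$, so $\Phi$ is a local isometry for the rescaled metric $8g_\unitq$. Standard rescaling of Riemannian operators --- $\mathrm{div}_{cg}=\mathrm{div}_g$ and $\Delta_{cg}=c^{-1}\Delta_g$ for a constant factor $c$ --- combined with hypothesis \eqref{eq:PTA_PTq0} (which ensures that $H$ is the unique lift of $\tilde H$ through $D\Phi$) and with the relations formalized in Props.~\ref{lem:equivalent_equations} and \ref{prop:laplacians} of the annex, yields
\begin{equation*}
\bigl(\nabla_A\cdot(\tilde H\tilde p_t)\bigr)\circ\Phi=2\pi^2\,\nabla_\q\cdot(Hp_t),\qquad (\Delta_A\tilde p_t)\circ\Phi=\tfrac{\pi^2}{4}\,\Delta_\q p_t,
\end{equation*}
where the factor $2\pi^2$ in the divergence identity reflects the ratio of volumes in \eqref{eq:relation_laws}, while the factor $\tfrac{\pi^2}{4}=\tfrac{1}{8}\cdot 2\pi^2$ carries the additional $\tfrac{1}{8}$ produced by the Laplacian rescaling. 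Composing the Fokker--Planck equation for $\tilde p_t$ with $\Phi$, substituting these two identities and dividing by $2\pi^2$, one obtains
\begin{equation*}
\pa_t p_t=-\nabla_\q\cdot(Hp_t)+\tfrac{\tilde\sigma^2}{16}\,\Delta_\q p_t,
\end{equation*}
which matches the $\q$-Fokker--Planck equation precisely when $\tfrac{\tilde\sigma^2}{16}=\tfrac{\sigma^2}{2}$, that is $\tilde\sigma^2=8\sigma^2$ --- exactly \eqref{eq:sigma}.

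\paragraph{Main obstacle and a subtlety.} The heart of the argument, and the only place where the non-obvious scaling $\tilde\sigma=\sqrt{8}\sigma$ emerges, is the identification of the constants in the covariance of $\nabla\cdot$ and $\Delta$ under $\Phi$. A secondary subtlety is that the definition of $p_t$ by \eqref{eq:relation_laws} implicitly requires the symmetry $p_t(-\q)=p_t(\q)$. This symmetry is automatic from the formula itself and is preserved by the dynamics because \eqref{eq:PTA_PTq0}, combined with the identity $D_{-\q}\Phi=-D_\q\Phi$ (immediate from Prop.~\ref{prop:DPhi}), forces $H(-\q)=-H(\q)$; thus the odd drift and the even diffusion are both compatible with the $\mathbb{Z}_2$-symmetry of the process, ensuring well-posedness of the whole construction.
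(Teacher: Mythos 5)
Your proposal follows the same overall strategy as the paper: pass to the Kolmogorov forward (Fokker--Planck) equations, transfer the divergence and Laplacian from $SO(3)$ to $\unitq$ through $\Phi$, and read off the relation between the diffusion constants. The factor $2\pi^2$ in the divergence identity and the final reduction $\partial_t p_t = -\nabla_\q\cdot(Hp_t) + \tfrac{\tilde\sigma^2}{16}\Delta_\q p_t$ agree with the paper. You also add two valid points the paper glosses over: the implicit need for uniqueness of the parabolic FP equation to \emph{identify} $p_t$ with the law of the quaternion process, and the preservation of the $\mathbb{Z}_2$-symmetry $p_t(-\q)=p_t(\q)$ by the dynamics.

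One thing to flag: you work with the full Frobenius inner product $\langle A,B\rangle=\mathrm{tr}(AB^t)$, for which the generator of the projected Brownian motion is $\tfrac{\tilde\sigma^2}{2}\Delta_A$ and the pullback metric satisfies $\Phi^*g_{SO(3)}=8\,g_\unitq$ (hence a Laplacian rescaling by $\tfrac18$). The paper, however, uses the half-Frobenius inner product $A\cdot B=\tfrac12\mathrm{tr}(A^tB)$, for which the FP coefficient is $\tfrac{\tilde\sigma^2}{4}$ and Prop.~\ref{prop:laplacians} gives $(\Delta_A g)\circ\Phi=\tfrac14\Delta_\q f$, i.e.\ a rescaling by $\tfrac14$. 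These two choices are self-consistent and the discrepancies cancel in the product, so your final constant $\tfrac{\tilde\sigma^2}{16}$, and hence $\tilde\sigma=\sqrt8\,\sigma$, is correct. But your intermediate claim $(\Delta_A\tilde p_t)\circ\Phi=\tfrac{\pi^2}{4}\Delta_\q p_t$ is \emph{not} what Prop.~\ref{prop:laplacians} produces when applied to $f=2\pi^2 p_t$ (that gives $\tfrac{\pi^2}{2}\Delta_\q p_t$); citing the annex propositions while stating a constant derived under a different convention is misleading, even though it does not affect the conclusion. You should either adopt the paper's normalization throughout and cite Props.~\ref{lem:equivalence_divergence}--\ref{prop:laplacians} with the constants they actually give (together with the $\tfrac{\tilde\sigma^2}{4}$ FP coefficient), or stick with Frobenius and re-derive the $\tfrac18$ Laplacian factor from the metric pullback rather than citing the annex.
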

\begin{proof}
Firstly, notice that for any borel set $B\subset \unitq$ it holds
$$
\int_{\Phi(B)} \tilde p_t(A)\,\ud A\, = \int_B  p_t(\q)\,\ud\q\,
$$
thanks to Lem. \ref{lem:volume_elements}. Note that this is the reason why we introduce the factor $2\pi^2$ in \eqref{eq:relation_laws}, which allows to have this equivalence of integrals. 
We start from the equation for $\tilde p_t$:
\be \label{eq:p_tilde}
\partial_t \tilde p_t(A) + \nabla_A\cdot \big( \tilde H(A,t) \, \tilde p_t (A)\big) = \frac{\tilde \sigma^2}{4}\Delta_A \,\tilde p_t(A).
\ee
Notice that the fact that we obtain a factor $\tilde \sigma^2/4$ is consequence of considering the inner product $A\cdot B = \mbox{trace}(A^tB)/2$ (see Ref. \cite{bodyattitude}).
By Prop. \ref{lem:equivalence_divergence} we have that
$$\nabla_A\cdot \big(\tilde H(\cdot,t) \, \tilde p_t\big)(\Phi(\q))= \nabla_\q \cdot\big(H(\q,t)\, 2\pi^2 p_t(\q)\big),$$
and by Prop. \ref{prop:laplacians} we have that
$$\Delta_A \,\tilde p_t (\Phi(\q)) = \frac{1}{4} \Delta_\q \,(2\pi^2p_t(\q)).$$
Therefore, we recast Eq. \eqref{eq:p_tilde} into
$$\partial_t p_t(\q)  + \nabla_\q \cdot \big(H(\q,t)\,  p_t(\q)\big) = \frac{\tilde \sigma^2}{16}\Delta_\q \,p_t(\q).$$
Consequently, $p_t$ is the law of the process
$$\ud\q = H(\q,t)\, \ud t + \frac{\tilde \sigma}{\sqrt{8}} P_{\q^\perp} \circ \ud B_t.$$
which is exactly Eq. \eqref{eq:dq_1particle} thanks to Eq. \eqref{eq:sigma}.
\end{proof}

Finally we are ready to prove
\begin{proof}[Proof of Th. \ref{cor:equivalence_process}]
Using Prop. \ref{lem:equivalent_equations}, we deduce from Eq. \eqref{eq:relationPM_Q}, that
\be
\nabla_A \left.\lp PD(M_k)\cdot A\rp\right|_{A=A_k} =  \textnormal{D}_\q \Phi \lp  \frac12 \left. \nabla_\q\lp \q\cdot F_k\rp\right|_{\q=\q_k}\rp,
\ee
which can be rewritten thanks to Eq. \eqref{eq:gradient_form} as
\be
P_{T_{A_k}}\lp PD(M_k)\rp =  \textnormal{D}_\q \Phi \lp P_{\q_k^\perp}F_k\rp.
\ee
The condition \eqref{eq:PTA_PTq0} in Prop. \ref{prop:stochastic_equivalence} is then satisfied with $\tilde H(A)=P_{T_{A}}\lp PD(M_k)\rp$ and $H(\q)=P_{\q^\perp}F(\q)$, so that, proceeding similarly as in Prop. \ref{prop:stochastic_equivalence} for the $N$-particles system, we conclude the result.

\end{proof}

\subsection{Comparison of the macroscopic model with Eqs. \eqref{eq:macro_rho}--\eqref{eq:macro_lambda}}
\label{sec:comparison_macro}
In this section we show the equivalence between the macroscopic system \eqref{eq:continuity_equation}--\eqref{eq:constants} (or, equivalently, Eq. \eqref{eq:q_relative_form} for the last expression), expressed in terms of unitary quaternions, and the system \eqref{eq:macro_rho}--\eqref{eq:macro_lambda}, expressed in term of rotation matrices from Ref. \cite{bodyattitude}.   

Recall $\Phi$ the natural map between unitary quaternions and rotation matrices defined in Eq. \eqref{eq:def_Phi}. We first notice that if $\q$ and $\Lambda$ represent the same rotation (that is, if $\Phi(\q)=\Lambda$), then
$$\Lambda \vezero = \vezero(\bar\q).$$
Therefore, the continuity equations \eqref{sys:macro1} and \eqref{eq:macro_rho} represent the same dynamics (it is direct from their definitions in Eq. \eqref{eq:c1} and in Ref. \cite{bodyattitude} that the constants $c_1$ and $\tilde c_1$ are identical). We are left with comparing the various differential operators in $\bar \q$ and $\Lambda$ in Eqs. \eqref{eq:q_relative_form} and \eqref{eq:macro_lambda}.

\subsubsection{Relation between the differential operators $\delta_x$, $\rvec_x$ and $\pa_\textnormal{rel}$.} 

\begin{proposition}
Let $\bar\q=\bar\q(t,x)$ be a function on $\R_+\times\R^3$ with values in $\unitq$. We define $\Lambda=\Phi(\bar\q)$ the matrix representation of the rotation represented by $\bar\q$. Let $\mathbf{v}=\mathbf{v}(t,x)$ be a vector field in $\R^3$. Then we have the following equalities (everywhere on $\R_+\times\R^3$):
\begin{eqnarray}
 \lp\pa \Lambda\rp \Lambda^t &=&2 \Big[\pa_\textnormal{rel} \bar\q\Big]_\times, \quad \text{for }\pa\in\{\pa_t, \pa_{x_1},\pa_{x_2},\pa_{x_3}\}, \label{equiv:partial}\\
\label{equiv:DLambda}\mathscr{D}_x(\Lambda) &=&2 (\nabla_{x,\textnormal{rel}} \bar\q)^t,\\
\label{equiv:delta}\delta_x(\Lambda) &=& 2\nabla_{x,\textnormal{rel}} \cdot \bar\q,\\
\label{equiv:rvec}\mathbf{v}\times\rvec_x(\Lambda) &=&2\lp\nabla_{x,\textnormal{rel}} \bar\q\rp\mathbf{v}-2\lp\mathbf{v}\cdot\nabla_{x,\textnormal{rel}}\rp\bar\q,
\end{eqnarray}
where the operators $\nabla_{x,\textnormal{rel}}$ and $\nabla_{x,\textnormal{rel}} \cdot$ are defined in Eqs. \eqref{eq:def_gradient_divergence_rel}--\eqref{eq:def_gradient_divergence_rel2}.
\end{proposition}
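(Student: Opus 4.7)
The plan is to derive all four equalities as consequences of Proposition \ref{prop:DPhi}, which is the bridge between derivatives of $\Phi$ and the relative differential operator. The first identity is the main engine, and the other three follow by manipulating it against the definitions in Eqs. \eqref{def:D_x}--\eqref{def:delta_and_r}.

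First I would prove \eqref{equiv:partial}. Let $\pa\in\{\pa_t,\pa_{x_1},\pa_{x_2},\pa_{x_3}\}$. Since $\bar\q(t,x)\in\unitq$, the vector $\pa\bar\q$ belongs to the tangent space $\bar\q^\perp$ (recall Prop. \ref{prop:tangent space}), so the chain rule together with Eq. \eqref{eq:DPhi_is_uPhi} gives
\begin{equation*}
\pa \Lambda = \pa (\Phi(\bar\q)) = \textnormal{D}_{\bar\q}\Phi(\pa\bar\q) = 2 \bigl[(\pa\bar\q)\bar\q^\ast\bigr]_\times \Phi(\bar\q) = 2\bigl[\pa_{\textnormal{rel}}\bar\q\bigr]_\times \Lambda.
\end{equation*}
Multiplying on the right by $\Lambda^t=\Lambda^{-1}$ yields \eqref{equiv:partial}.

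Next I would derive \eqref{equiv:DLambda}. Applying \eqref{equiv:partial} to the directional derivative $\pa = \mathbf{w}\cdot\nabla_x$ (for arbitrary $\mathbf{w}\in\R^3$), and comparing with the defining relation \eqref{def:D_x}, I get
\begin{equation*}
\bigl[\mathscr{D}_x(\Lambda)\mathbf{w}\bigr]_\times = (\mathbf{w}\cdot\nabla_x)\Lambda\,\Lambda^t = 2\bigl[(\mathbf{w}\cdot\nabla_{x,\textnormal{rel}})\bar\q\bigr]_\times.
\end{equation*}
Since the map $\mathbf{u}\mapsto[\mathbf{u}]_\times$ from $\R^3$ to antisymmetric $3\times 3$ matrices is injective, this gives $\mathscr{D}_x(\Lambda)\mathbf{w} = 2(\mathbf{w}\cdot\nabla_{x,\textnormal{rel}})\bar\q = 2\sum_j w_j\,\pa_{x_j,\textnormal{rel}}\bar\q$. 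By the conventions in Eqs. \eqref{eq:def_gradient_divergence_rel}--\eqref{def:Drel_times_e1}, the matrix $\nabla_{x,\textnormal{rel}}\bar\q$ has $\pa_{x_i,\textnormal{rel}}\bar\q$ as its $i$-th row, so $(\nabla_{x,\textnormal{rel}}\bar\q)^t \mathbf{w} = \sum_j w_j\,\pa_{x_j,\textnormal{rel}}\bar\q$, which gives \eqref{equiv:DLambda}.

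Finally \eqref{equiv:delta} is immediate: $\delta_x(\Lambda) = \textnormal{Tr}(\mathscr{D}_x(\Lambda)) = 2\,\textnormal{Tr}\bigl((\nabla_{x,\textnormal{rel}}\bar\q)^t\bigr) = 2\sum_i(\pa_{x_i,\textnormal{rel}}\bar\q)_i = 2\,\nabla_{x,\textnormal{rel}}\cdot\bar\q$. For \eqref{equiv:rvec}, I substitute \eqref{equiv:DLambda} into the defining relation \eqref{def:delta_and_r} to obtain $[\rvec_x(\Lambda)]_\times = 2(\nabla_{x,\textnormal{rel}}\bar\q)^t - 2(\nabla_{x,\textnormal{rel}}\bar\q)$, and then use the identity $\mathbf{v}\times\mathbf{u} = -[\mathbf{u}]_\times\mathbf{v}$ applied to $\mathbf{u}=\rvec_x(\Lambda)$:
\begin{equation*}
\mathbf{v}\times\rvec_x(\Lambda) = -[\rvec_x(\Lambda)]_\times\mathbf{v} = 2(\nabla_{x,\textnormal{rel}}\bar\q)\mathbf{v} - 2(\nabla_{x,\textnormal{rel}}\bar\q)^t\mathbf{v},
\end{equation*}
and recognize $(\nabla_{x,\textnormal{rel}}\bar\q)^t\mathbf{v} = (\mathbf{v}\cdot\nabla_{x,\textnormal{rel}})\bar\q$, giving \eqref{equiv:rvec}. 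The only real pitfall here is keeping conventions consistent (row-vs-column in the definition of $\nabla_{x,\textnormal{rel}}\bar\q$, and the sign in $\mathbf{v}\times\mathbf{u}=-[\mathbf{u}]_\times\mathbf{v}$); once those are pinned down the argument reduces to chain rule and the injectivity of $[\,\cdot\,]_\times$.
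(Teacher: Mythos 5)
Your proof is correct and follows essentially the same route as the paper: differentiate $\Lambda = \Phi(\bar\q)$ via the chain rule and Prop.~\ref{prop:DPhi} to obtain \eqref{equiv:partial}, then feed the directional derivative $\mathbf{w}\cdot\nabla_x$ into the defining relation \eqref{def:D_x} (using injectivity of $[\,\cdot\,]_\times$) to get \eqref{equiv:DLambda}, and read off \eqref{equiv:delta} and \eqref{equiv:rvec} from the definitions in \eqref{def:delta_and_r}. The paper phrases the middle step in the reverse direction (verifying that $2[(\nabla_{x,\textnormal{rel}}\bar\q)^t\mathbf{w}]_\times\Lambda = (\mathbf{w}\cdot\nabla_x)\Lambda$ rather than dividing out $[\,\cdot\,]_\times$), but the content and the index conventions you pinned down are the same.
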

\begin{proof}
Eq. \eqref{equiv:partial} is obtained by first differentiating the equality $\Lambda=\Phi(\bar\q)$,
$$\partial\Lambda = \textnormal{D}_\q\Phi_{|\bar\q} (\partial\bar\q), $$
then using Eq. \eqref{eq:DPhi_is_uPhi},
$$\partial\Lambda = 2\Big[\lp\pa \bar\q\rp \bar\q^\ast\Big]_\times \Lambda= 2\Big[\pa_{\text{rel}} \bar\q\Big]_\times \Lambda. $$
 Let $\mathbf{w}\in\R^3$. We compute
\begin{eqnarray*}
[(\nabla_{x,\textnormal{rel}} \bar\q)^t \mathbf{w}]_\times \Lambda &=& [((\mathbf{w}\cdot\nabla_{x}) \bar\q) \bar \q^\ast]_\times \Lambda \\
&=&\sum_{i=1,2,3} w_i \,\left[(\pa_{x_i} \bar\q) \bar \q^\ast\right]_\times \Lambda\\
&=& \frac{1}{2}\sum_{i=1,2,3} w_i \pa_{x_i} \Lambda \\
&=& \frac{1}{2}(\mathbf{w}\cdot\nabla_{x}) \Lambda,
\end{eqnarray*}
where we have used successively the definition of $\pa_\textnormal{rel}$; the fact that in components $\mathbf{w}=(w_1, w_2,w_3)$; and Eq. \eqref{equiv:partial}.
Recall that since $\pa_{x_i}\bar\q$ is orthogonal to $\bar \q$, the product $\pa_{\textnormal{rel},x_i}=(\pa_{x_i}\bar\q) \bar\q^\ast$ is purely imaginary and can be identified with a vector in $\R^3$, so all the above terms make sense.
Recall now the definition of $\mathscr{D}_x(\Lambda)$ in Eq. \eqref{def:D_x}: we have just proved Eq. \eqref{equiv:DLambda}.

We now use Eq. \eqref{equiv:DLambda} and the definitions of $\delta_x$ and $\rvec_x$ in Eq. \eqref{def:delta_and_r} to verify
\be
\delta_x(\Lambda) = \textnormal{Tr}\lp \mathscr{D}_x(\Lambda) \rp= 2\textnormal{Tr}\lp \nabla_{x,\textnormal{rel}} \bar\q \rp = 2 \nabla_{x,\textnormal{rel}} \cdot \bar\q,
\nonumber
\ee
and
\be
\mathbf{v}\times\rvec_x(\Lambda) = - [\rvec_x(\Lambda)]_\times\mathbf{v}=-\mathscr{D}_x(\Lambda)\mathbf{v}+\mathscr{D}_x(\Lambda)^t\mathbf{v}
=-2\lp\mathbf{v}\cdot\nabla_{x,\textnormal{rel}}\rp\bar\q+2\lp\nabla_{x,\textnormal{rel}} \bar\q\rp\cdot\mathbf{v},
 \nonumber
\ee
which concludes the result.

\end{proof}

\subsubsection{Interpretation in terms of a local vector $\mathbf{b}$.} In Ref. \cite{bodyattitude}, an interpretation in terms of a locally defined vector field $\mathbf{b}$ was proposed for the operators $\delta_x(\Lambda)$ and $\rvec_x(\Lambda)$. We summarize it here: let $(t_0,x_0)\in\R_+\times\R^3$ be fixed. When $\Lambda=\Lambda(t,x)$ is smooth enough, we can write 
\be \Lambda(t,x)=\exp\lp[2\mathbf{b}(t,x)]_\times \rp\Lambda(t_0,x_0),\label{def:b_Lambda}\ee
with $\mathbf{b}(t,x)$ a uniquely defined vector in $\R^3$, smooth around $(t_0,x_0)$ and with $\mathbf{b}(t_0,x_0)=0$. Then   
\be\label{int_delta_rvec}
\delta_x(\Lambda)(t_0,x_0) = 2\left.\nabla_x \cdot \mathbf{b}(x)\right|_{|(t,x)=(t_0,x_0)}\quad\text{ and }\quad
\rvec_x(\Lambda)(t_0,x_0) = 2 \left.\nabla_x \times \mathbf{b}(x)\right|_{|(t,x)=(t_0,x_0)},
\ee
where~$\nabla_x\times$ is the curl operator.

We propose a similar interpretation for our model. Let $(t_0,x_0)\in\R_+\times\R^3$ be fixed. We define $\mathbf{r}=\mathbf{r}(t,x)$ similarly as in Eq. \eqref{eq:r}:
\be \label{eq:r_tx}
\bar\q(t,x) = \mathbf{r}(t,x) \bar\q(t_0,x_0).
\ee
Since $\mathbf{r}\in\unitq$, its logarithm is a purely imaginary quaternion $\mathbf{b}=\mathbf{b}(t,x)$ with $\mathbf{b}(t_0,x_0)=0$. With these notation, we recast Eq. \eqref{eq:r_tx} into
$$ \bar\q(t,x) = \exp(\mathbf{b}(t,x)) \bar\q(t_0,x_0),$$
and differentiating with respect to any variable ($\pa\in\{\pa_t,\pa_{x_1},\pa_{x_2},\pa_{x_3}\}$), by definition of $\pa_{\textnormal{rel}}$,
\be\label{int_drel_b}
\pa_{\text{rel}} \bar\q (t_0,x_0)=\left.(\pa \exp(\mathbf{b}))\right|_{(t,x)=(t_0,x_0)}=\left.(\pa \mathbf{b})\right|_{(t,x)=(t_0,x_0)}.
\ee

Note that if $\Lambda$ and $\bar\q$ represent the same rotation, that is, if $\Lambda=\Phi(\bar\q)$, applying the morphism $\Phi$ to Eq. \eqref{eq:r_tx}, we end up with
$$ \Lambda (t,x) = \Phi(\exp(\mathbf{b}(t,x))) \Lambda(t_0,x_0).$$
We have that $\mathbf{b}=\theta \mathbf{n}/2$ in the Euler axis-angle representation (see Eqs. \eqref{eq:exponential form unitary quaternion}--\eqref{eq:rodrigues_formula}), where $\theta\in [0,2\pi]$ and $\mathbf{n}$ is a unitary vector in $\R^3$. The unitary quaternion $\exp(\mathbf{b})=\exp(\theta \mathbf{n}/2)$ represents the rotation of angle $\theta$ anti-clockwise around the axis $\mathbf{n}$, whose matrix representation is the corresponding matrix formulation, given by Rodrigues formula (Eq. \eqref{eq:rodrigues_formula}): 
$$\Phi(\exp(\mathbf{b}))=\exp(2[\mathbf{b}]_\times),$$
which implies that 
$$ \Lambda (t,x) = \exp(2[\mathbf{b}]_\times) \Lambda(t_0,x_0),$$
and we recover Eq. \eqref{def:b_Lambda}.

\begin{remark} The combination of Eqs. \eqref{int_delta_rvec} and \eqref{int_drel_b} gives an alternative proof of Eqs. \eqref{equiv:delta} and \eqref{equiv:rvec}.
\end{remark}

\subsubsection{Summary: comparison between quaternions, matrices, and $\mathbf{b}$}
We summarize the discussion of the two previous paragraphs in the
\begin{proposition}
\label{prop:table_equivalences}
Let $\rho=\rho(t,x)$ and $\bar\q=\bar\q(t,x)$ be two functions on $\R_+\times\R^3$ with values in $\R_+$ and $\unitq$ respectively. We define $\Lambda=\Phi(\bar\q)$ the matrix representation of the rotation represented by $\bar\q$. For any fixed $t_0\in\R_+$, $x_0\in\R^3$, we also define the vector field $\mathbf{b}^{t_0,x_0}=\mathbf{b}^{t_0,x_0}(t,x)$ as
$$\mathbf{b}^{t_0,x_0}(t,x) = \log [\bar\q(t,x)\bar \q(t_0,x_0)^\ast]. $$
Finally we define the velocity vector field
$$\mathbf{v}=\vezero(\bar\q)=\Lambda \vezero.$$

Then the following equivalence table holds:
\begin{center}
\def\arraystretch{1.5}
\begin{tabular}{|c|l|l|l|}
\hline
$i$ & Quaternion & Vector $\mathbf{b}$ locally at point $(t_0,x_0)$ & Orthonormal matrix \\
\hline
$1$ & $X_{\q,1}:=2\rho \partial_{t,\textnormal{rel}} \bar \q $ & $X_{\mathbf{b},1}^{t_0,x_0}:=2 \rho \pa_t \mathbf{b}^{t_0,x_0}$ & $X_{\Lambda,1}:=\rho \lp \partial_t\Lambda\rp \Lambda^t$ \\
\hline
$2$ & $X_{\q,2}:=2\rho (\vezero(\bar\q) \cdot \nabla_{x,\textnormal{rel}}) \bar \q$ & $X_{\mathbf{b},2}^{t_0,x_0}:=2 \rho \lp \mathbf{v}\cdot \nabla_x\rp \mathbf{b}^{t_0,x_0}$ & $X_{\Lambda,2}:=\rho \lp\big((\Lambda \vezero) \cdot \nabla_x\big)\Lambda \rp\Lambda^t$ \\
\hline
$3$ & $X_{\q,3}:=\vezero(\bar\q) \times \nabla_x \rho$ & $X_{\mathbf{b},3}^{t_0,x_0}:= \mathbf{v}\times \nabla_x \rho$ & $X_{\Lambda,3}:=\left[(\Lambda \vezero)\times \nabla_x \rho\right]_\times$ \\
\hline
$4$ & $X_{\q,4}:=2\rho \lp \nabla_{x,\textnormal{rel}}\bar \q\rp \vezero(\bar \q)$ & $X_{\mathbf{b},4}^{t_0,x_0}:=2\rho \lp\nabla_x \mathbf{b}^{t_0,x_0} \rp \, \mathbf{v}$ & $\begin{matrix} X_{\Lambda,4}:=\rho\,[ (\Lambda \vezero) \times \rvec_x(\Lambda)]_\times\\ + X_{\Lambda,2}\end{matrix}$ \\
\hline
$5$ & $X_{\q,5}:=2\rho \,(\nabla_{x,\textnormal{rel}}\cdot \bar \q) \vezero(\bar\q)$ & $X_{\mathbf{b},5}^{t_0,x_0}:=2\rho \lp\nabla_x  \cdot\mathbf{b}^{t_0,x_0} \rp \mathbf{v}$ & $X_{\Lambda,5}:=\left[ \rho\,\delta_x(\Lambda)\Lambda \vezero\right]_\times$ \\
\hline
\end{tabular}
\end{center}

The equivalence is to be read in the following sense: for $i=1\dots 5$, we have everywhere on $\R_+\times\R^3$,
\be \label{equiv_rep} X_{\q,i}({t_0,x_0}) = X_{\mathbf{b},i}^{t_0,x_0}({t_0,x_0}) \qquad \text{and} \qquad X_{\Lambda,i} = [X_{\q,i}]_\times.
\ee
\end{proposition}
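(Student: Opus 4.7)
The plan is to verify the ten equalities asserted in \eqref{equiv_rep} (one pair for each $i = 1, \ldots, 5$) case by case, leveraging the identities \eqref{equiv:partial}--\eqref{equiv:rvec} (for the matrix side) and Eq.~\eqref{int_drel_b} (for the $\mathbf{b}$ side) already established earlier in the section. Throughout I would use the identification $\vezero(\bar\q) = \Lambda\vezero = \mathbf{v}$ to move freely between the three descriptions of the velocity, and the fact that $[\cdot]_\times$ is linear so it commutes with scalar and functional multipliers such as $\rho$.

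For the local equivalence $X_{\q,i}(t_0,x_0) = X_{\mathbf{b},i}^{t_0,x_0}(t_0,x_0)$, the single ingredient is \eqref{int_drel_b}, which, at the point $(t_0,x_0)$, identifies $\pa_\text{rel}\bar\q$ with $\pa \mathbf{b}^{t_0,x_0}$ for any $\pa \in \{\pa_t, \pa_{x_1}, \pa_{x_2}, \pa_{x_3}\}$. Case $i=1$ is exactly this identity with $\pa = \pa_t$; cases $i=2,4,5$ reduce to a coordinatewise application of \eqref{int_drel_b} after expanding the directional derivative, the matrix--vector product $(\nabla_{x,\text{rel}}\bar\q)\vezero(\bar\q)$, or the divergence $\nabla_{x,\text{rel}}\cdot\bar\q$ in terms of the components $\pa_{x_j,\text{rel}}\bar\q$; case $i=3$ is immediate since no relative derivative of $\bar\q$ appears at all.

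For the matrix identification $X_{\Lambda,i} = [X_{\q,i}]_\times$, I would handle cases $i=1,2,3,5$ directly. Case $i=1$ is \eqref{equiv:partial} with $\pa = \pa_t$, multiplied by $\rho$; case $i=2$ follows by linearly combining \eqref{equiv:partial} with $\pa = \pa_{x_j}$ and weights $v_j$, using that $(\mathbf{v}\cdot\nabla_x)\Lambda = \sum_j v_j \pa_{x_j}\Lambda$; case $i=3$ is just the definition of $[\cdot]_\times$; case $i=5$ is \eqref{equiv:delta} combined with $\Lambda\vezero = \vezero(\bar\q)$.

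The one step requiring care is case $i=4$, where $X_{\Lambda,4}$ is defined with the corrective summand $+X_{\Lambda,2}$. Substituting \eqref{equiv:rvec} into $\rho[\mathbf{v}\times\rvec_x(\Lambda)]_\times$ produces \emph{both} $[X_{\q,4}]_\times = [2\rho(\nabla_{x,\text{rel}}\bar\q)\mathbf{v}]_\times$ \emph{and} a spurious term $-[2\rho(\mathbf{v}\cdot\nabla_{x,\text{rel}})\bar\q]_\times = -[X_{\q,2}]_\times$; the additive correction $X_{\Lambda,2} = [X_{\q,2}]_\times$ (from the already-proved case $i=2$) cancels this spurious term exactly, leaving $[X_{\q,4}]_\times$. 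This cancellation reflects the fact that $\rvec_x(\Lambda)$ captures only the antisymmetric part of $\mathscr{D}_x(\Lambda)$, while the full matrix $\nabla_{x,\text{rel}}\bar\q$ has both a symmetric and an antisymmetric contribution; it is the sole bookkeeping subtlety and, in fact, is precisely what motivates the particular grouping chosen in the definition of $X_{\Lambda,4}$. Once this is recognised, the remaining verifications are routine.
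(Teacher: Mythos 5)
Your proposal is correct and follows exactly the route the paper intends: the paper presents Prop.~\ref{prop:table_equivalences} as a ``summary'' of the preceding two paragraphs, so the proof is precisely the case-by-case application of Eqs.~\eqref{equiv:partial}--\eqref{equiv:rvec} (matrix side) and Eq.~\eqref{int_drel_b} ($\mathbf{b}$ side) that you carry out, and you correctly identify the only subtle point, namely that the corrective summand $+X_{\Lambda,2}$ in the definition of $X_{\Lambda,4}$ is there to cancel the term $-2\rho[(\mathbf{v}\cdot\nabla_{x,\textnormal{rel}})\bar\q]_\times$ coming from Eq.~\eqref{equiv:rvec}.
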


As a consequence, it holds that
\begin{theorem}
\label{th:equivalence_macro_equations}
Let $\rho_0=\rho_0(x)\ge0$. Let $\bar\q_0=\bar\q_0(x)\in\unitq$ and $\Lambda_0=\Lambda_0(x)\in SO(3)$ represent the same rotation, i.e., $\Lambda_0(x)=\Phi(\bar\q_0(x))$ for all $x\in\R^3$. Then the system \eqref{sys:macro1}--\eqref{sys:macro3} and the system  \eqref{eq:macro_rho}--\eqref{eq:macro_lambda} are equivalent (in the sense that any solution $(\rho,\bar\q)$ of \eqref{sys:macro1}--\eqref{sys:macro3} gives a solution  $(\rho,\Lambda=\Phi(\bar\q))$ of the system \eqref{eq:macro_rho}--\eqref{eq:macro_lambda}).
\end{theorem}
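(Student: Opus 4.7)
The plan is to show the two systems are equivalent term by term, using the dictionary provided by Prop. \ref{prop:table_equivalences}. The mass equations are easy: since $\Lambda(t,x)=\Phi(\bar\q(t,x))$ means $\Lambda\vezero = \vezero(\bar\q)$, and since one checks from formula \eqref{eq:c1} and the corresponding definition of $\tilde c_1$ in Ref.~\cite{bodyattitude} (via the change of variables in Lem. \ref{lem:volume_elements} together with Rem. \ref{rem:equilibria_comparison} on the equilibria) that $c_1=\tilde c_1$, equations \eqref{sys:macro1} and \eqref{eq:macro_rho} coincide as soon as $\bar\q_0$ and $\Lambda_0$ represent the same rotation. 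The main content of the theorem is therefore the equivalence of the evolution equations for $\bar\q$ and $\Lambda$.

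The central reduction is to rewrite \eqref{sys:macro3} under its relative form \eqref{eq:q_relative_form} by right-multiplication by $\bar\q^\ast$ (which preserves equivalence since $\bar\q^\ast$ is invertible), and to rewrite \eqref{eq:macro_lambda} in the analogous way by right-multiplication by $\Lambda^t$. In this form, each term of the quaternion equation is a purely imaginary quaternion (an element of $(\R^3)$ under the identification of Rem. \ref{rem:abuse of notation}), while each term of the matrix equation is an antisymmetric $3\times 3$ matrix, i.e.\ of the form $[\mathbf{u}]_\times$. Prop. \ref{prop:table_equivalences}, formula \eqref{equiv_rep}, exhibits a bijection between these representations term by term: the time derivative term, the convection term, the pressure term, and the two additional body-attitude terms in \eqref{eq:q_relative_form} match (up to the $[\cdot]_\times$ operation) the corresponding $X_{\Lambda,i}$ in \eqref{eq:macro_lambda}. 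Note that the $X_{\Lambda,4}$ entry in the table contains the matrix $X_{\Lambda,2}$ as an additive contribution because the operator $\rvec_x(\Lambda)$ encodes the \emph{antisymmetric} part of $\mathscr{D}_x(\Lambda)$, so one has to track that the $c_2$ and $c_4$ terms on the matrix side are grouped differently from the quaternion side; this reshuffling is accounted for precisely by identity \eqref{equiv:rvec}.

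Once the structural identification is established, it remains to match the numerical constants: we must verify $c_i = \tilde c_i$ for $i=2,3,4$ (modulo the factor of $2$ appearing in Prop. \ref{prop:table_equivalences} due to the normalization $[\cdot]_\times$ vs. identification with $\R^3$, and modulo the regrouping just described). Here we use again Lem. \ref{lem:volume_elements} to convert integrals on $SO(3)$ into integrals on $\unitq$, Lem. \ref{lem:equivalence_scalar_product} (or Rem. \ref{rem:equilibria_comparison}) to check that the von-Mises equilibria $M_\Lambda$ and $M_{\bar\q}$ give rise to the same weight functions after the change of variables $\bar\q\cdot\q=\cos(\theta/2)$, and finally the fact that the ODE \eqref{eq:ode_h} characterizing $h$ is the same (upon the same change of variable) as the ODE satisfied by the generalized collision invariant in Ref. \cite{bodyattitude}. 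This last identification is the one that requires some care, because the GCI in the two settings are defined via different constraints (projection onto $\bar\q^\perp$ vs.\ antisymmetric part of a matrix moment), but their weak formulations reduce in both cases to the same scalar ODE on $(-1,1)$.

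The main obstacle I expect is bookkeeping rather than conceptual: we have to carefully track the factors of $2$ coming from the $[\cdot]_\times$ identification in Prop. \ref{prop:DPhi} and from the definition of $\mathscr{D}_x(\Lambda)$ (which is $2(\nabla_{x,\textnormal{rel}}\bar\q)^t$, cf. \eqref{equiv:DLambda}), and verify that they combine consistently with the definitions of $c_2$, $c_3$, $c_4$ in Th.~\ref{th:macro_limit} and of $\tilde c_2,\tilde c_3,\tilde c_4$ in Ref.~\cite{bodyattitude}. Once the five pairs $(X_{\q,i},X_{\Lambda,i})$ of Prop. \ref{prop:table_equivalences} are matched with the correct coefficients, the equivalence of the two evolution equations for $\bar\q$ and $\Lambda$ follows. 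Combined with the equivalence of the continuity equations and with the compatible initial conditions $\Lambda_0=\Phi(\bar\q_0)$, this proves that $(\rho,\Phi(\bar\q))$ solves \eqref{eq:macro_rho}--\eqref{eq:macro_lambda} whenever $(\rho,\bar\q)$ solves \eqref{sys:macro1}--\eqref{sys:macro3}, which concludes the proof.
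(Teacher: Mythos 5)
Your proposal follows essentially the same route as the paper: equate the continuity equations via $c_1=\tilde c_1$, pass to the relative form \eqref{eq:q_relative_form}, match terms via Prop.~\ref{prop:table_equivalences} (including the regrouping of the $c_2$/$c_4$ terms forced by $X_{\Lambda,4}$), and reduce the constant-matching to comparing the GCI weight functions. The only small imprecision is that the ODE for $h$ and the ODE for the matrix-side GCI function $k$ are not literally the same ODE under a change of variable; the paper's Prop.~\ref{propo:h_and_k} gives the explicit transformation $k(\theta)=4h(\cos(\theta/2))/\cos(\theta/2)$, which is what makes the weight $\widetilde m(\theta)\sin^2(\theta/2)$ proportional to $m(\theta)\sin^4(\theta/2)h(\cos(\theta/2))\cos(\theta/2)$ and hence the normalized averages (and constants) coincide.
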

\begin{proof}
We already checked that  the continuity equations  \eqref{sys:macro1} and \eqref{eq:macro_rho} are equivalent. Using the notations of Prop. \ref{prop:table_equivalences}, we recast, after multiplying by 2, Eq. \eqref{eq:q_relative_form} for $\bar\q$ (which is equivalent to Eq. \eqref{sys:macro3})  into
\be  
X_{\q,1} + c_2 X_{\q,2}  + 2\,c_3 X_{\q,3} + c_4 X_{\q,4} + c_4 X_{\q,5} = 0, \label{eq:aux_relative_quaternion}
\ee
and Eq. \eqref{eq:macro_lambda} for $\Lambda$ into
\be \label{eq:aux_relative_matrix}
 X_{\Lambda,1} + (\tilde c_2-\tilde c_4) X_{\Lambda,2}  + \tilde c_3 X_{\Lambda,3}+ \tilde c_4 X_{\Lambda,4}+ \tilde c_4 X_{\Lambda,5} = 0,
 \ee
where (see Ref. \cite{bodyattitude})
\beqar
\tilde c_3 &=& d,\\
\tilde c_2 &=&  \tfrac{1}{5}\langle 2+3\cos\theta\rangle_{\widetilde{m}(\theta)\sin^2(\theta/2)} ,\\
\tilde c_4&=& \tfrac{1}{5}\langle 1-\cos\theta\rangle_{\widetilde{m}(\theta)\sin^2(\theta/2)},
\eeqar
and where the notation
$\langle \cdot \rangle_{\widetilde{m}(\theta)\sin^2(\theta/2)}$ is given in Eq. \eqref{eq:definition_angles}.
The function~$\widetilde{m}:(0,\pi)\to(0,+\infty)$ is given by
\be \label{eq:weight_mtilde}
\widetilde{m}(\theta):= \sin^2\theta\, m(\theta)\, k(\theta),
\ee
where $m(\theta)=\exp\left(d^{-1}(\tfrac{1}{2}+\cos\theta)\right)$ is the same as in Eq. \eqref{eq:tilde_m} and $k$ is the solution of Eq. \eqref{eq:ode_GCI_matrices}.

To check that Eqs. \eqref{eq:aux_relative_quaternion}--\eqref{eq:aux_relative_matrix} are equivalent, it suffices to show the correspondence between the constants since the equivalence of the terms is already given by Eq. \eqref{equiv_rep}. Therefore, we are left to check that $\tilde c_2-\tilde c_4 = c_2$ and $\tilde c_4= c_4$.

Recall the values of the constants
\beqar
c_3&=&\frac{d}{2},\\
c_2&=&\frac{1}{5} \la 1+4\cos \theta \ra_{m(\theta) \sin^4(\theta/2) h(\cos(\theta/2)) \cos(\theta/2)},\\
c_4&=& \frac{1}{5}\la 1-\cos\theta \ra_{m(\theta)\sin^4(\theta/2) h(\cos(\theta/2)) \cos(\theta/2)}.
\eeqar
Using Prop. \ref{propo:h_and_k}, we have
\be k(\theta)=4\,\frac{h\lp\cos(\theta/2)\rp}{\cos(\theta/2)},\ee
so that
\beqar
\widetilde{m}(\theta)\sin^2(\theta/2) &=& \sin^2\theta\, m(\theta)\, k(\theta) \sin^2(\theta/2)\\
&=& 4\,\frac{h\lp\cos(\theta/2)\rp}{\cos(\theta/2)} \sin^2\theta\, m(\theta)\, \sin^2(\theta/2)\\
&=& 16\, h\lp\cos(\theta/2)\rp\,\cos(\theta/2)  m(\theta)\, \sin^4(\theta/2).
\eeqar
Therefore, we have that
\beqar
\langle \cdot \rangle_{\widetilde{m}(\theta)\sin^2(\theta/2)}
=
\langle \cdot \rangle_{m(\theta) \,\sin^4(\theta/2)\, h\lp\cos(\theta/2)\rp\,\cos(\theta/2) },
\eeqar
(notice that the constant $16$ is simplified),
which allows to conclude the equivalence of the constants, and hence, of the equations. 
\end{proof}

\section{Conclusion}

In the present work we have introduced a flocking model for body attitude coordination where the body attitude is described through rotations represented by unitary quaternions. The deliberate choice of representing rotations by unitary quaternions is based on their numerical efficiency in terms of memory usage and operation complexity. This will be key for future applications of this model. 
At the modelling level, we introduce an individual based model where agents try to coordinate their bodies attitudes with those of their neighbours. To express this we needed to define an appropriate `averaged' quaternion based on nematic alignment. This average is related to the Gennes $Q$-tensor that appears in liquid crystal theory. From the Individual Based Model we have derived the macroscopic equations (SOHQ) via the mean-field equations. We also show the equivalence between the SOHQ and the macroscopic equations (SOHB) of Ref. \cite{bodyattitude} where the body attitude is expressed through rotation matrices. However, we observe that the SOHQ is simpler to interpret than the equivalent SOHB. In particular, all the terms in the SOHQ are explicit. 
We have also seen that the dynamics of the SOHQ system are more complex than those of the SOH system (macroscopic equations corresponding to the Vicsek model). The body attitude coordination model presented here opens many questions and perspectives. We refer the reader to Ref. \cite[Conclusions and open questions]{bodyattitude} for an exposition.

One may wonder why we did not consider to translate directly into quaternions the results in Ref. \cite{bodyattitude} for rotation matrices. The answer is that, firstly, for the individual based model, it is not possible to obtain a direct translation, in the sense that we need to consider some particular modeling choices (like the average in Eq. \eqref{eq:definition average} and the relaxation in Eq. \eqref{eq:relaxation_term}) and check a posteriori the equivalence with the model in Ref. \cite{bodyattitude}. Secondly, the relation at the macroscopic level is not easy to obtain a priori. It is the macroscopic limit that gives us the necessary information and intuition to establish the link between both results.

In a future work, we will carry out simulations of the Individual Based Model and the SOHQ model; study the patterns that arise; and compare them with the ones of the Vicsek and SOH model. 

\paragraph{Acknowledgements} 

P.D. acknowledges support from the Royal Society and the Wolfson foundation through a Royal Society Wolfson Research Merit Award ref WM130048; the British “Engineering and Physical Research Council” under grants ref: EP/M006883/1 and EP/P013651/1; the National Science Foundation under NSF Grant RNMS11-07444 (KI-Net). P.D. is on leave from CNRS, Institut de Math\'ematiques de Toulouse, France.\\
A.F. and A.T. acknowledge support for the ANR projet “KIBORD”, ref: ANR-13-BS01-0004 funded by the French Ministry of Research.\\
S.M.A. was supported by the British “Engineering and Physical Research Council” under grant ref: EP/M006883/1.\\
A.T. was supported by the European Research Council under the European Union's Seventh Framework Programme (FP/2007-2013) / ERC Grant Agreement n. 279600.

\paragraph{Data statement}

No new data was generated in the course of this research

\paragraph{Conflict of interest} The authors declare that they have no conflict of interest.

\appendix

\section{Unitary quaternions: some properties}

\begin{proposition} 
\label{prop:} Let $Q$ be a symmetric $4\times4$ matrix. For the function $\q\in\unitq \mapsto (\q\cdot Q\q)$, we have
\be 
\label{eq:gradient Q}
\nabla_\q(\q\cdot Q\q)= 2P_{\q^\perp}(Q\q), 
\ee
where $\nabla_\q$ is the gradient in $\unitq$.
\end{proposition}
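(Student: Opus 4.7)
The plan is to exploit the fact that $\unitq$ is simply the unit sphere $\mathbb{S}^3$ in $\mathbb{R}^4$, so that the intrinsic gradient on $\unitq$ is the orthogonal projection of the ambient (Euclidean) gradient in $\mathbb{R}^4$ onto the tangent space, which at $\q\in\unitq$ coincides with $\q^\perp$.

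First I would extend the function $F(\p) := \p\cdot Q\p$, originally defined on $\unitq$, to all of $\mathbb{H}\simeq\mathbb{R}^4$. Since $Q$ is symmetric, a direct computation in $\mathbb{R}^4$ gives the ambient gradient
\begin{equation*}
\nabla_{\mathbb{R}^4} F(\p) = (Q+Q^t)\p = 2Q\p .
\end{equation*}

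Next, I would invoke the standard fact that for a function on an embedded submanifold of Euclidean space, the intrinsic gradient is obtained by projecting the ambient gradient of any smooth extension onto the tangent space. At a point $\q\in\unitq$, the tangent space to $\unitq\subset\mathbb{R}^4$ is precisely $\q^\perp$ (this is also recorded in Prop.~\ref{prop:tangent space} in the appendix). Applying the orthogonal projection $P_{\q^\perp}$ to $2Q\q$ yields the claimed identity
\begin{equation*}
\nabla_\q(\q\cdot Q\q) = P_{\q^\perp}(2 Q \q) = 2 P_{\q^\perp}(Q\q).
\end{equation*}

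I do not anticipate a serious obstacle: the symmetry of $Q$ handles the ambient differentiation, and identifying the tangent space with $\q^\perp$ is immediate from the constraint $|\q|^2=1$. The only mild subtlety is to make explicit that the intrinsic gradient on $\unitq$ as used throughout the paper does coincide with the projected Euclidean gradient, which can be checked directly by writing, for any smooth curve $\gamma:(-\eps,\eps)\to\unitq$ with $\gamma(0)=\q$ and $\gamma'(0)=\mathbf{v}\in\q^\perp$,
\begin{equation*}
\left.\frac{\ud}{\ud t}\right|_{t=0} F(\gamma(t)) = \nabla_{\mathbb{R}^4}F(\q)\cdot\mathbf{v} = 2(Q\q)\cdot\mathbf{v} = 2 P_{\q^\perp}(Q\q)\cdot\mathbf{v},
\end{equation*}
so that $2 P_{\q^\perp}(Q\q)$ represents the differential of $F|_{\unitq}$ via the inherited inner product on $\q^\perp$, which by definition is $\nabla_\q F$.
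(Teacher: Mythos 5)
Your proposal is correct and takes essentially the same approach as the paper: the paper's proof also computes the directional derivative $\left.\frac{d}{d\eps}\right|_{\eps=0}(\q(\eps)\cdot Q\q(\eps)) = 2\,\delta_\q\cdot Q\q$ along a curve in $\unitq$ tangent to $\delta_\q\in T_\q$ and concludes, which is exactly your final verification step. Your exposition merely makes explicit the intermediate ``ambient gradient $=2Q\q$, then project onto $\q^\perp$'' viewpoint, but the underlying calculation is identical.
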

\begin{proof}
Consider a path in $\unitq$ parametrized by $\eps>0$, $\q=\q(\eps)$ where $\q(0)=\q$ and $\left.\frac{d}{d\eps}\q(\eps)\right|_{\eps=0}=\delta_\q\in T_\q$, then
\beqar
\partial_\q(\q\cdot Q\q) \cdot \delta_\q &=& \lim_{\eps\to 0}\frac{\q(\eps)\cdot Q(\eps)\q(\eps)-\q\cdot Q\q}{\eps}\\
&=& \lim_{\eps \to 0}\delta_{ \q} \cdot Q\q + \q\cdot Q\delta_{ \q} + \mathcal{O}(\eps)\\
&=& 2 \delta_{\q}\cdot Q\q,
\eeqar
from which we conclude the result.
\end{proof}

\begin{proposition} 
\label{prop:tangent space}
Let $\q\in\unitq$. The tangent space $T_\q$ at $\q$ in $\unitq$ corresponds to $\q^\perp$ (the orthogonal space to $\q$). Particularly, it holds that
\be \label{eq:tangent space} 
\q^\perp = \{ \vv\q,\, \mbox{ for } \vv\in \R^3\},
\ee
considering the abuse of notation explained in Rem. \ref{rem:abuse of notation}.
\end{proposition}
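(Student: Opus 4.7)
\medskip

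\noindent\textbf{Proof proposal.} The plan is to treat $\unitq$ as the unit sphere $\mathbb{S}^3$ inside $\mathbb{H}\cong\R^4$ equipped with the standard inner product from Eq. \eqref{eq:inner product quaternions}. Then the first assertion, that $T_\q=\q^\perp$, is the classical fact that the tangent space to a Euclidean sphere at a point is the hyperplane orthogonal to the radius vector: differentiating the constraint $\q(s)\cdot\q(s)=1$ along any smooth curve $s\mapsto\q(s)$ in $\unitq$ with $\q(0)=\q$ yields $\q\cdot\q'(0)=0$, and conversely every element of $\q^\perp$ is realized in this way (e.g.\ by the curve $s\mapsto \cos(s|\vv|)\q+\sin(s|\vv|)\vv/|\vv|$ for a prescribed tangent direction). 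No difficulty here.

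\medskip

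\noindent For the identity $\q^\perp=\{\vv\q : \vv\in\R^3\}$ (with $\vv$ identified with a purely imaginary quaternion per Rem. \ref{rem:abuse of notation}), I would prove both inclusions by a direct computation using Eq. \eqref{eq:inner product quaternions} and the fact that $\q\in\unitq$ satisfies $\q\q^\ast=1$. For $\supset$: given $\vv\in\R^3$, I compute
\[
(\vv\q)\cdot\q = \Real\bigl((\vv\q)\q^\ast\bigr) = \Real(\vv) = 0,
\]
since $\vv$ is purely imaginary; hence $\vv\q\in\q^\perp$. For $\subset$: given $\p\in\q^\perp$, set $\vv:=\p\q^\ast$; then $\vv\q=\p\q^\ast\q=\p$, and
\[
\Real(\vv)=\Real(\p\q^\ast)=\p\cdot\q=0,
\]
so $\vv$ is purely imaginary, i.e.\ identifies with an element of $\R^3$. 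This shows the two sets coincide.

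\medskip

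\noindent There is no real obstacle in this argument; it is a routine verification once one uses the identification $\Real(\p_1\p_2^\ast)=\p_1\cdot\p_2$ from Eq. \eqref{eq:inner product quaternions}. The only point requiring minor care is to keep track consistently of the identification between purely imaginary quaternions and vectors of $\R^3$, as warned in Rem. \ref{rem:abuse of notation}, so that the expressions $\vv\q$ and $\p\q^\ast$ make sense as quaternion products.
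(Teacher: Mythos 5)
Your proof is correct and takes essentially the same route as the paper: both observe that the map $\p\mapsto\p\q$ is a bijection of $\mathbb{H}$ and reduce to the identity $(\p\q)\cdot\q=\Real(\p)$, the paper phrasing it as a chain of equivalences while you split it into the two inclusions. The extra detail you give on $T_\q=\q^\perp$ (differentiating the constraint and exhibiting an explicit curve) matches what the paper dispatches in one line as the standard sphere fact.
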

\begin{proof}
The fact that $T_\q=\q^\perp$ can be seen by identifying $\unitq$ with the unit sphere $\mathbb{S}^3$.
Since $\q$ is invertible, we have
$$ \mathbb{H} = \{ \p\q,\, \textnormal{ for } \p \in \mathbb{H}\},$$
and for any $\p \in \mathbb{H}$, we have
$$ \p\q \in \q^\perp \iff (\p\q) \cdot \q =0 \iff \Real (\p)=0 \iff \p=\Ima(\p)=\vv\in\R^3.$$

 \end{proof}

\begin{proposition}[Decomposition of the volume form in $\unitq$] \label{prop:volume_element}
Let $f=f(\q)$ be a function on $\unitq$. Recall the parametrization in \eqref{eq:exponential form unitary quaternion},
$$ \q =\cos \frac{\theta}{2} + \sin\frac{\theta}{2}\lp n_1 \vec{\imath}+n_2\vec{\jmath}+n_3\vec{k} \rp,$$
where $\nvec:=(n_1,n_2,n_3)$ is a unitary vector in $\R^3$ and $\theta\in[0,2\pi]$. Let
$$\bar f(\theta,\nvec)= f(\cos(\theta/2)+\sin(\theta/2)\nvec).$$
Then we have the following change of variable
$$\int_{\unitq} f(\q)\, \ud\q = \int^{2\pi}_0 \frac{\sin^2(\theta/2)}{2}\int_{\mathbb{S}^2}\bar f(\theta,\nvec)\, \ud\theta \ud\nvec,$$
where $\ud\q$ is the Lebesgue measure on the hypersphere $\unitq$ and $\ud\nvec$ is the Lebesgue measure on the sphere $\mathbb{S}^2$.
In particular, if $f(\q)=f(-\q)$, we have
$$\int_{\unitq} f(\q)\, \ud\q = \int^{\pi}_0 \sin^2(\theta/2) \int_{\mathbb{S}^2}\bar f(\theta,\nvec)\, \ud\theta \ud\nvec,$$
and if furthermore $\bar f(\theta,\nvec)=\bar f(\theta)$ is independent of $\nvec$, we have
$$\int_{\unitq} f(\q)\, \ud\q = 4\pi \,\int^{\pi}_0 \sin^2(\theta/2) \bar f(\theta)\, \ud\theta.$$
\end{proposition}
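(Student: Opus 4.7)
The plan is to identify $\unitq$ with the unit sphere $\mathbb{S}^3\subset\R^4$ via the natural embedding $\q=q_0+q_1\vec{\imath}+q_2\vec{\jmath}+q_3\vec{k}\mapsto(q_0,q_1,q_2,q_3)$, under which the Lebesgue measure $\ud\q$ on $\unitq$ becomes the standard surface measure on $\mathbb{S}^3$. With the parametrization of \eqref{eq:exponential form unitary quaternion}, setting $\phi:=\theta/2\in[0,\pi]$, this embedding reads
\begin{equation*}
(q_0,q_1,q_2,q_3)=(\cos\phi,\,\sin\phi\, n_1,\,\sin\phi\, n_2,\,\sin\phi\, n_3).
\end{equation*}
These are exactly the standard hyperspherical coordinates on $\mathbb{S}^3$: the ``polar angle'' $\phi$ (distance from the north pole $(1,0,0,0)$) together with a unit vector $\nvec\in\mathbb{S}^2$ parametrizing the latitude sphere at level $\phi$.

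The first step is to compute the Jacobian. I would differentiate the map $(\phi,\nvec)\mapsto(\cos\phi,\sin\phi\,\nvec)$ to check that the tangent vector $\pa_\phi$ has length $1$, is orthogonal to the tangent vectors coming from variations of $\nvec$ (which live in $\{0\}\times\R^3$ at fixed $\phi$ and themselves produce a factor $\sin\phi$ in length). The orthogonality is immediate from the block structure, and the factor $\sin\phi$ for \emph{each} of the two tangential directions on $\mathbb{S}^2$ yields the volume element
\begin{equation*}
\ud\q = \sin^2\phi\,\ud\phi\,\ud\nvec,
\end{equation*}
where $\ud\nvec$ is the standard surface measure on $\mathbb{S}^2$. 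Substituting $\phi=\theta/2$, so that $\ud\phi = \ud\theta/2$ and $\theta$ ranges over $[0,2\pi]$, yields the first identity of the proposition. (Alternatively, one may simply quote the classical formula for surface measure on $\mathbb{S}^n$ in hyperspherical coordinates.)

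The second statement, for sign-invariant $f$, is proven via the symmetry $\q\leftrightarrow-\q$. Writing $-\q=\cos((2\pi-\theta)/2)+\sin((2\pi-\theta)/2)(-\nvec)$ (using $\cos(\pi-\alpha)=-\cos\alpha$ and $\sin(\pi-\alpha)=\sin\alpha$ with $\alpha=\theta/2$), the point $-\q$ corresponds to the parameters $(2\pi-\theta,-\nvec)$. Splitting the $\theta$-integral over $[0,2\pi]$ into $[0,\pi]\cup[\pi,2\pi]$ and using the change of variable $\theta'=2\pi-\theta$ together with $\nvec\mapsto-\nvec$ (which preserves both $\sin^2(\theta/2)$ and the measure $\ud\nvec$ on $\mathbb{S}^2$), the two pieces coincide when $\bar f(\theta,\nvec)=\bar f(2\pi-\theta,-\nvec)$, i.e. precisely when $f(\q)=f(-\q)$. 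This doubles the integral on $[0,\pi]$ and cancels the factor $\tfrac{1}{2}$. The third statement is immediate from the second by factoring out $\bar f(\theta)$ and using $\int_{\mathbb{S}^2}\ud\nvec=4\pi$.

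There is no real obstacle: the only computation of substance is the Jacobian, and this is just the standard volume form on $\mathbb{S}^3$. The only point requiring slight care is the bookkeeping for the factor $1/2$ (which comes from $\ud\theta=2\ud\phi$) and for the $\theta$-range $[0,2\pi]$ (the quaternion exponential \eqref{eq:exponential form unitary quaternion} covers $\unitq$ only once, not twice, because at $\theta=2\pi$ the quaternion is $-1$ independently of $\nvec$, a measure-zero set).
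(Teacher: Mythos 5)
Your proof is correct and takes essentially the same approach as the paper: both pass to hyperspherical coordinates on $\mathbb{S}^3$ and compute the Jacobian (you do so geometrically via orthogonality and scaling of tangent vectors, the paper by direct computation with explicit angles $\theta_2,\theta_3$). You are slightly more thorough in that you spell out the $\q\leftrightarrow-\q$ symmetry argument for the restriction to $\theta\in[0,\pi]$, which the paper leaves implicit.
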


\begin{proof}
 We consider the following change of variables for $\q=(q_1, q_2, q_3, q_4)$ corresponding to the spherical coordinates on the 4-dimensional sphere:
 \beqar
  q_1 &=& \cos (\theta/2)\\
  q_2 &=& \sin (\theta/2) \cos \theta_2\\
  q_3 &=& \sin(\theta/2) \sin \theta_2 \cos \theta_3\\
  q_4 &=& \sin (\theta/2) \sin \theta_2 \sin \theta_3
 \eeqar
 for $\theta \in [0,2\pi]$, $\theta_2 \in [0,\pi]$, $\theta_3 \in [0,2\pi)$. Then we have that
\be \label{eq:aux_volume} 
 \ud\q = \frac{1}{2}\sin^2(\theta/2) \sin \theta_2 \ud\theta \ud\theta_2 \ud\theta_3
 \ee
 by computing the Jacobian of this change of variables.
 However $\nvec \in \mathbb{S}^2$ can be parametrized as
 $$\nvec= \left[\begin{array}{l}
 0\\
 \cos\theta_2\\
 \sin\theta_2 \cos \theta_3\\
 \sin\theta_2 \sin \theta_3
 \end{array}\right],
 $$
and $\ud\nvec= \sin\theta_2 \ud\theta_2\ud\theta_3$. Substituting this in Eq. \eqref{eq:aux_volume} we conclude the proposition.
 
\end{proof}

\section{Differential operators on $SO(3)$ and on $\unitq$}\label{annex:Phi}

The next three propositions explain the relation between the gradient, divergence and laplacian operators in $SO(3)$ and $\unitq$. 

\begin{proposition} [Comparison of the gradient operator]
\label{lem:equivalent_equations}
Consider a scalar function $g: SO(3) \to \mathbb{R}$ differentiable and define the function $f: \unitq \to \mathbb{R}$ as $f(\q)= g(\Phi(\q))$. It holds that 
\be 
\label{eq:equivalence_gradients1}
(\nabla_A g) (\Phi(\q)) = \frac 14 \textnormal{D}_\q\Phi (\nabla_\q f (\q)),
\ee
or equivalently, for any $\uu\in \R^3$,
\be 
\label{eq:equivalence_gradients2}
\langle \nabla_\q f(\q), \uu\q\rangle_{\unitq}= 2\langle \nabla_A g(\Phi(\q)), [\uu]_\times \Phi(\q)\rangle_{SO(3)},
\ee
where $\langle\cdot, \cdot \rangle$ indicates the dot product and the subindex associated indicates to which space it corresponds.

Particularly, consider the following Cauchy problems for some $\q_0\in \unitq$ and $A_0=\Phi(\q_0)$:
\beqarl
\frac{\ud\q}{\ud t}&=& \frac{1}{4}\nabla_\q(f(\q)), \quad \q(0)=\q_0, \label{eq:q}\\
\frac{\ud A}{\ud t}&=& \nabla_A (g(A)), \quad A(0)=A_0. \label{eq:A}
\eeqarl
If $\q=\q(t)$ is a solution of \eqref{eq:q} on some time interval $[0,T)$, then  
$A(t):=\Phi(\q(t))$ is a solution of \eqref{eq:A} on the same time interval $[0,T)$.
\end{proposition}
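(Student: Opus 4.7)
The plan is to exploit three earlier ingredients: the characterisation of the tangent space $T_\q\unitq=\{\uu\q:\uu\in\R^3\}$ (Prop.~\ref{prop:tangent space}), the identity $\textnormal{D}_\q\Phi(\uu\q)=2[\uu]_\times\Phi(\q)$ (Prop.~\ref{prop:DPhi}), and the chain rule $\textnormal{D}_\q f=\textnormal{D}_{\Phi(\q)}g\circ \textnormal{D}_\q\Phi$. Taking a generic tangent vector $\mathbf{p}=\uu\q\in T_\q\unitq$, the chain rule gives $\langle\nabla_\q f(\q),\uu\q\rangle_\unitq=\langle\nabla_A g(\Phi(\q)),\textnormal{D}_\q\Phi(\uu\q)\rangle_{SO(3)}=2\langle\nabla_A g(\Phi(\q)),[\uu]_\times\Phi(\q)\rangle_{SO(3)}$, which is exactly Eq.~\eqref{eq:equivalence_gradients2}.

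To deduce Eq.~\eqref{eq:equivalence_gradients1} from this, the idea is to read off from Eq.~\eqref{eq:equivalence_gradients2} the explicit form of $\nabla_A g(\Phi(\q))$. Since $\nabla_\q f(\q)\in T_\q\unitq=\q^\perp$, Prop.~\ref{prop:tangent space} lets me write $\nabla_\q f(\q)=\vv\q$ for a unique $\vv\in\R^3$. The left-hand side of Eq.~\eqref{eq:equivalence_gradients2} then simplifies using the fact that right multiplication by the unit quaternion $\q$ is an isometry of $\mathbb{H}$: concretely, $(\vv\q)\cdot(\uu\q)=\Real(\uu\q\q^\ast\vv^\ast)=\Real(\uu\vv^\ast)=\vv\cdot\uu$ for purely imaginary $\uu,\vv$. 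For the right-hand side I will show that $\langle[\vv]_\times\Phi(\q),[\uu]_\times\Phi(\q)\rangle_{SO(3)}=\vv\cdot\uu$: using the inner product convention $A\cdot B=\tfrac12\tr(AB^t)$ and the orthogonality of $\Phi(\q)$, this reduces to $-\tfrac12\tr([\vv]_\times[\uu]_\times)$, a quick direct trace computation giving $\vv\cdot\uu$. Comparing the two sides of Eq.~\eqref{eq:equivalence_gradients2} for all $\uu\in\R^3$, one concludes $\nabla_A g(\Phi(\q))=\tfrac12[\vv]_\times\Phi(\q)=\tfrac14\,\textnormal{D}_\q\Phi(\vv\q)=\tfrac14\,\textnormal{D}_\q\Phi(\nabla_\q f(\q))$, which is Eq.~\eqref{eq:equivalence_gradients1}.

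The Cauchy problem equivalence will then be immediate from the chain rule applied to the composition $A(t):=\Phi(\q(t))$: if $\q$ solves \eqref{eq:q} on $[0,T)$, then $\tfrac{\ud A}{\ud t}=\textnormal{D}_\q\Phi\bigl(\tfrac{\ud\q}{\ud t}\bigr)=\tfrac14\,\textnormal{D}_\q\Phi(\nabla_\q f(\q))=\nabla_A g(\Phi(\q))=\nabla_A g(A)$, with $A(0)=\Phi(\q_0)=A_0$, which is exactly \eqref{eq:A}.

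The whole argument is essentially bookkeeping; the only subtle point is the careful tracking of the geometric constants, namely the factor $2$ produced by Prop.~\ref{prop:DPhi} and the factor $\tfrac12$ built into the $SO(3)$ inner product $A\cdot B=\tfrac12\tr(AB^t)$, which together conspire to give the coefficient $\tfrac14$ in Eq.~\eqref{eq:equivalence_gradients1}. This will be the main place where one has to be careful in order not to drop or invert the normalisation.
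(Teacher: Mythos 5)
Your proposal is correct and takes essentially the same approach as the paper: the paper proves Eq.~\eqref{eq:equivalence_gradients2} by differentiating along a path $\tilde\q(s)$ (which is just the chain rule you invoke directly), establishes the equivalence of Eqs.~\eqref{eq:equivalence_gradients1} and \eqref{eq:equivalence_gradients2} using Prop.~\ref{prop:DPhi}, the identity $\langle[\vv]_\times,[\uu]_\times\rangle_{SO(3)}=\vv\cdot\uu$ (which your trace computation spells out), and the isometry of right multiplication by $\q$, and finishes the Cauchy-problem claim by the same chain-rule argument. The only difference is presentation order (you go $\eqref{eq:equivalence_gradients2}\Rightarrow\eqref{eq:equivalence_gradients1}$ by writing $\nabla_\q f(\q)=\vv\q$ explicitly, whereas the paper shows the two are equivalent first and then proves \eqref{eq:equivalence_gradients2}); the computational content is identical.
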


\begin{proof} To make the proof clearer we will use the notation $\la \cdot, \cdot \ra$ rather than the symbol '$\cdot$' to indicate the inner product (in the sense of matrices as well as in the sense of vectors and quaternions).
We first check that Eqs. \eqref{eq:equivalence_gradients1} and \eqref{eq:equivalence_gradients2} are equivalent: indeed, since $\textnormal{D}_\q \Phi (\nabla_\q f (\q))$ belongs to $T_{\Phi(\q)}= \{[\uu]_\times \Phi(\q),\,\uu\in\R^3\}$, Eq. \eqref{eq:equivalence_gradients1} can be rewritten as, for all $\uu\in\R^3$,
\be
\langle (\nabla_A g) (\Phi(\q)),  [\uu]_\times \Phi(\q)\rangle = \frac 14\langle \textnormal{D}_\q \Phi (\nabla_\q f (\q),  [\uu]_\times \Phi(\q)\rangle.
\ee
By Prop. \ref{prop:DPhi}, the right-hand-side is equal to
\beqarl
\frac 14\langle \textnormal{D}_\q \Phi (\nabla_\q f (\q),  [\uu]_\times \Phi(\q)\rangle &=& \frac 12\langle [(\nabla_\q f (\q))\q^\ast]_\times \Phi(\q),  [\uu]_\times \Phi(\q)\rangle\\
&=&\frac 12\langle [(\nabla_\q f (\q))\q^\ast]_\times,  [\uu]_\times \rangle\\
&=&\frac 12\langle (\nabla_\q f (\q))\q^\ast,  \uu \rangle\\
&=&\frac 12\langle \nabla_\q f (\q),  \uu \q \rangle,
\eeqarl
so that we recover Eq. \eqref{eq:equivalence_gradients2}.

\bigskip

We now prove Eq. \eqref{eq:equivalence_gradients2}: fix some $\q\in\unitq$, $\uu\in\R^3$ and let $\tilde \q=\tilde \q (s) \in \unitq$ be a differentiable path in $\unitq$ with
\beqar
\tilde \q(0)&=&\q,\\
\left. \frac{\ud}{\ud s}\tilde \q\right|_{s=0} &=& \uu\q.
\eeqar
We compute
\beqar
\langle \left.\nabla_\q f(\q)\right., \uu\q \rangle
&=& \left.\frac{\ud}{\ud s} f(\tilde \q (s))\right|_{s=0}\\
&=&  \left.\frac{\ud}{\ud s} g(\Phi(\tilde{\q}(s)))\right|_{s=0}\\
&=& \left\la (\nabla_A g)(\Phi(\q)), \left. \frac{\ud}{\ud s}\Phi(\tilde{\q}(s))\right|_{s=0} \right\ra  \\
&=& \left\la (\nabla_A g)(\Phi(\q)), \textnormal{D}_\q\Phi (\uu\q)\right\ra \\
&=& 2\la (\nabla_A g)(\Phi(\q)), [\uu]_\times \Phi(\q)\ra \, ,
\eeqar
where we used Prop. \ref{prop:DPhi} to compute $ \textnormal{D}_\q\Phi$. This proves Eq. \eqref{eq:equivalence_gradients2}.

\bigskip

Let $\q=\q(t)$ be a solution of \eqref{eq:q} on some time interval $(0,T)$, let $ A(t) := \Phi(\q(t))$ on $(0,T)$. For any $\uu\in\R^3$, we compute
\beqar
\left\la \frac{\ud A}{\ud t}, [\uu]_\times A(t) \right\ra &=& \left\la \left.\textnormal{D}_\q\Phi\right|_{\q(t)}\lp \frac{\ud\q}{\ud t}\rp, [\uu]_\times A(t) \right\ra\\
&=& \la \left.\textnormal{D}_\q\Phi\right|_{\q(t)} (\vv\q(t)), [\uu]_\times A(t) \ra \\
&=& 2\la [\vv]_\times A(t), [\uu]_\times A(t)\ra \\
&=& 2\la \vv, \uu\ra,
\eeqar
where we note $\vv=\vv(t):=(\ud \q /\ud t)\q^\ast(t) \in \R^3$. On the other hand, we compute thanks to Eq. \eqref{eq:equivalence_gradients2}
\beqar
\la \nabla_A g(A(t)), [\uu]_\times A(t)\ra \, &=& \frac{1}{2}\langle\nabla_\q f(\q(t)), \uu\q(t) \rangle\\
&=&2\langle \frac{\ud\q}{\ud t}, \uu\q(t) \rangle,\\
&=& 2\la \vv, \uu\rangle,
\eeqar
so that
\be \label{eq:definition_gradient}
 \left\la \frac{\ud A}{\ud t}, [\uu]_\times A(t) \right\rangle = \left\langle \nabla_A g(A(t)), [\uu]_\times A(t) \right\rangle, \qquad \mbox{ for all } \uu\in \mathbb{R}^3.
\ee
This concludes the proof.
\end{proof}

\begin{proposition}[Comparison of the divergence operator]
\label{lem:equivalence_divergence}
Let $G$ be a vector field tangent to $SO(3)$ and $H$ a vector field tangent to $\unitq $ such that
\be \label{eq:PTA_PTq}
 G (\Phi(\q)) = \textnormal{D}_\q \Phi \, (H(\q)), \qquad\mbox{for all } \q\in \unitq.
\ee
Then,
\be \label{eq:equivalence_divergence}
(\nabla_\q \cdot H)(\q)= (\nabla_A\cdot G)(\Phi(\q)), \qquad\mbox{for all } \q\in \unitq.
\ee
\end{proposition}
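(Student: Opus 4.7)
My plan is to use duality with the gradient identity of Prop.~\ref{lem:equivalent_equations} and Stokes' theorem on the two compact Riemannian manifolds $\unitq$ and $SO(3)$. Let $g:SO(3)\to\R$ be an arbitrary smooth test function and set $f:=g\circ\Phi:\unitq\to\R$. On a closed manifold without boundary, the divergence is (up to sign) the adjoint of the gradient, so I would start from
\[
\int_{SO(3)}(\nabla_A\cdot G)(A)\,g(A)\,\ud A=-\int_{SO(3)}\langle G(A),\nabla_A g(A)\rangle_{SO(3)}\,\ud A,
\]
and the analogous Stokes identity on $\unitq$ involving $H$ and $\nabla_\q f$.

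The crucial pointwise identity to establish is
\[
\langle G(\Phi(\q)),(\nabla_A g)(\Phi(\q))\rangle_{SO(3)}=\langle H(\q),\nabla_\q f(\q)\rangle_{\unitq}\qquad \text{for all }\q\in\unitq.
\]
For this, I would combine the hypothesis $G(\Phi(\q))=\textnormal{D}_\q\Phi(H(\q))$ with the gradient identity $(\nabla_A g)(\Phi(\q))=\tfrac14\,\textnormal{D}_\q\Phi(\nabla_\q f(\q))$ from Prop.~\ref{lem:equivalent_equations}, and then invoke the short linear-algebra computation that $\textnormal{D}_\q\Phi:\q^\perp\to T_{\Phi(\q)}SO(3)$ scales inner products by $4$. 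This last fact follows from Prop.~\ref{prop:DPhi}, namely $\textnormal{D}_\q\Phi(\uu\q)=2[\uu]_\times\Phi(\q)$, together with the identity $\langle[\uu]_\times\Phi(\q),[\vv]_\times\Phi(\q)\rangle_{SO(3)}=\tfrac12\textnormal{Tr}([\uu]_\times[\vv]_\times^t)=\uu\cdot\vv$ (using that $\Phi(\q)$ is orthogonal and $[\vv]_\times^t=-[\vv]_\times$), combined with the fact that right multiplication by the unitary quaternion $\q^\ast$ preserves the quaternion inner product.

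Combining this pointwise identity with the change of variable $\int_{SO(3)}\psi(A)\,\ud A=\tfrac{1}{2\pi^2}\int_{\unitq}\psi(\Phi(\q))\,\ud\q$ from Lem.~\ref{lem:volume_elements}, the two Stokes identities yield
\[
\int_{\unitq}\bigl[(\nabla_\q\cdot H)(\q)-(\nabla_A\cdot G)(\Phi(\q))\bigr]\,f(\q)\,\ud\q=0
\]
for every test function $f$ of the form $g\circ\Phi$, that is, for every smooth \emph{even} function on $\unitq$.

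The remaining obstacle, and the main subtlety of the argument, is to pass from this integrated identity over even test functions to the desired pointwise equality. For this I would observe that the hypothesis forces $H$ to be odd: since $\Phi(\q)=\Phi(-\q)$, the equality $\textnormal{D}_\q\Phi(H(\q))=\textnormal{D}_{-\q}\Phi(H(-\q))$ combined with the explicit formula of Prop.~\ref{prop:DPhi} yields $H(-\q)=-H(\q)$. Since $\q\mapsto-\q$ is an isometry of $\unitq$, the divergence $\nabla_\q\cdot H$ is then even in $\q$; and $(\nabla_A\cdot G)(\Phi(\q))$ is manifestly even. The bracket in the integrand is therefore an even function of $\q$ that is $L^2$-orthogonal to every even function, so it must vanish identically, which gives \eqref{eq:equivalence_divergence} and completes the proof.
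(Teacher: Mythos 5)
Your proposal is correct and follows essentially the same duality argument as the paper: Stokes' theorem on each manifold, the gradient identity from Prop.~\ref{lem:equivalent_equations}, and the change-of-variable Lem.~\ref{lem:volume_elements}, reducing to an integrated identity over even test functions $f=g\circ\Phi$. You actually go slightly further than the paper, whose proof ends with a bare ``This implies''; your observation that the hypothesis forces $H(-\q)=-H(\q)$ (so $\nabla_\q\cdot H$ is even, hence the bracket is even and vanishes by $L^2$-orthogonality to all even functions) supplies the justification for the final passage to pointwise equality, which the paper leaves implicit.
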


\begin{proof}
Consider functions $f,g$ with $f(\q)=g(\Phi(\q))$ and define $\uu:\unitq\to \R^3$ by
$$\uu(\q)= 2 H(\q) \q^\ast,\qquad \text{for all }\q\in\unitq.$$
By Eq. \eqref{eq:PTA_PTq} and Prop. \ref{prop:DPhi}, we have
$$G(\Phi(\q)) = [\uu(\q)]_\times \Phi(\q),\qquad \text{for all }\q\in\unitq.$$
Then, we can compute
\beqar
\int_{\unitq} f(\q) \, \nabla_\q \cdot H(\q)\, \ud\q &=& -\int_{\unitq}\langle \nabla_\q f(\q) , H(\q)\rangle_{\unitq}\, \ud\q\\
&=& -\frac 12 \int_{\unitq}\langle \nabla_\q f(\q) , \uu(\q) \q \rangle_{\unitq}\, \ud\q\\
&=& - \int_{\unitq}\langle \nabla_A g(\Phi(\q) , [\uu(\q)]_\times \Phi(\q)\rangle_{SO(3)}\, \ud\q\\
&=& - \int_{\unitq}\langle \nabla_A g(\Phi(\q) , G( \Phi(\q))\rangle_{SO(3)}\, \ud\q\\
&=& -2 \pi^2\int_{\unitq}\langle \nabla_A g(A) , G(A)\rangle_{SO(3)}\, \ud A\\
&=& 2 \pi^2 \int_{\unitq}  g(A)\, (\nabla_A\cdot G)(A)\, \ud A\\
&=&\int_{\unitq}  g(\Phi(\q))\, (\nabla_A\cdot G)(\Phi(\q))\, \ud\q\\
&=&\int_{\unitq}  f(\q)\, (\nabla_A\cdot G)(\Phi(\q))\, \ud\q,
\eeqar
where we have used integration by parts and Eq. \eqref{eq:equivalence_gradients2}. We conclude that
$$\int_{\unitq}\left[ \nabla_\q \cdot H(\q)- (\nabla_A\cdot G)(\Phi(\q)) \right]\, f(\q) \, \ud\q=0,$$
for all $f$ such that $f(\q)=f(-\q)$. This implies that Eq. \eqref{eq:equivalence_divergence} holds.
\end{proof}

\begin{proposition}[Comparison of the laplacians]
\label{prop:laplacians}
Consider a scalar function $g: SO(3) \to \mathbb{R}$ twice differentiable and define the function $f: \unitq \to \mathbb{R}$ as $f(\q)= g(\Phi(\q))$. It holds that
$$(\Delta_A g) (\Phi(\q))= \frac{1}{4}\Delta_\q f (\q).$$
\end{proposition}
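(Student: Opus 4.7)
The plan is to chain together the two previous propositions on gradients (Prop.~\ref{lem:equivalent_equations}) and divergences (Prop.~\ref{lem:equivalence_divergence}), using the fact that on both Riemannian manifolds $SO(3)$ and $\unitq$ the Laplacian is the divergence of the gradient. No new computation is really required; the factor $\frac14$ is entirely inherited from the gradient comparison.

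First, I would set $H(\q) := \nabla_\q f(\q)$, which is a smooth vector field tangent to $\unitq$, and $G(A) := 4 \nabla_A g(A)$, which is a smooth vector field tangent to $SO(3)$. By Prop.~\ref{lem:equivalent_equations} (Eq.~\eqref{eq:equivalence_gradients1}), these two vector fields satisfy
\begin{equation*}
G(\Phi(\q)) = 4 (\nabla_A g)(\Phi(\q)) = \textnormal{D}_\q\Phi(\nabla_\q f(\q)) = \textnormal{D}_\q\Phi(H(\q)),
\end{equation*}
which is precisely the hypothesis \eqref{eq:PTA_PTq} of Prop.~\ref{lem:equivalence_divergence}.

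Then I would apply Prop.~\ref{lem:equivalence_divergence} to $(G,H)$, which yields
\begin{equation*}
(\nabla_\q \cdot H)(\q) = (\nabla_A \cdot G)(\Phi(\q)).
\end{equation*}
Since $\nabla_\q \cdot H = \nabla_\q \cdot \nabla_\q f = \Delta_\q f$ and $\nabla_A \cdot G = 4 \nabla_A \cdot \nabla_A g = 4 \Delta_A g$, this reads $\Delta_\q f(\q) = 4 (\Delta_A g)(\Phi(\q))$, which is the claimed identity after dividing by $4$.

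There is no substantial obstacle here: the content is really a compatibility check between the gradient and divergence operators under $\Phi$, and both compatibilities were established in the preceding two propositions. The only minor point to verify, if one wants to be fully explicit, is that $\nabla_\q f$ is indeed tangent to $\unitq$ (it is, by the very definition of the intrinsic gradient on a Riemannian submanifold) so that Prop.~\ref{lem:equivalence_divergence} applies to it.
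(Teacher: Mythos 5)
Your proof is correct and takes essentially the same approach as the paper: chain the gradient comparison (Prop.~\ref{lem:equivalent_equations}) with the divergence comparison (Prop.~\ref{lem:equivalence_divergence}), letting the factor $1/4$ be inherited from the gradient identity. The only cosmetic difference is that you scale $G := 4\nabla_A g$ and keep $H := \nabla_\q f$, while the paper keeps $G := \nabla_A g$ and scales $H := \nabla_\q f/4$; the two choices are equivalent.
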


\begin{proof}
By Prop. \ref{lem:equivalent_equations}, we have that Eq. \eqref{eq:equivalence_gradients1} is true, so that \eqref{eq:PTA_PTq} is true for $G:= \nabla_A g$ and $H:=\nabla_\q f/4$. Applying Prop. \ref{lem:equivalence_divergence} gives the result.
\end{proof}

\section{Equivalence of the GCI equations}

\begin{proposition}\label{propo:h_and_k}
Let $h$ be a solution of Eq. \eqref{eq:ode_h}. Then the function
\be\label{def:k} k(\theta):=4\,\frac{h\lp\cos(\theta/2)\rp}{\cos(\theta/2)},\ee
is a solution of the following equation
\be\label{eq:ode_GCI_matrices}
\frac{1}{\sin^2(\theta/2)}\pa_\theta \lp \sin^2(\theta/2) m(\theta) \pa_\theta \lp \sin \theta k(\theta) \rp\rp - \frac{m(\theta) \sin\theta }{2\sin^2(\theta/2)} k(\theta) =\sin(\theta) m(\theta).
\ee
Remark: For $P$ antisymmetric matrix and $\Lambda\in SO(3)$,
$$\psi(A)= P\cdot (\Lambda^t A)\, \bar k(\Lambda\cdot A), \qquad A\in SO(3),$$
is a generalised collision invariant in the body attitude coordination model based on rotation matrices from Ref. \cite{bodyattitude}. In this case $\bar k(\Lambda \cdot A)= \bar k\lp \frac{1}{2}+\cos(\theta)\rp =:k(\theta)$.
\end{proposition}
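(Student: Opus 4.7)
The plan is to transform the ODE \eqref{eq:ode_h} for $h(r)$ into the ODE \eqref{eq:ode_GCI_matrices} for $k(\theta)$ via the change of variable $r = \cos(\theta/2)$ combined with the substitution \eqref{def:k}. The strategy is purely computational: write everything in the variable $\theta$, replace $h$ by $k$, and match the two divergence-form expressions.

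First I would record the basic dictionary induced by $r = \cos(\theta/2)$: namely $1-r^2 = \sin^2(\theta/2)$, $\frac{dr}{d\theta}=-\tfrac12 \sin(\theta/2)$, and, crucially, $\exp(2r^2/d) = \exp\!\bigl(\tfrac{1+\cos\theta}{d}\bigr) = e^{1/(2d)} m(\theta)$. Since $e^{1/(2d)}$ is a positive constant, it factors out of the whole equation and will not affect whether $h$ solves \eqref{eq:ode_h} or the rescaled version. Using this dictionary, the divergence form of \eqref{eq:ode_h} becomes
\begin{equation*}
\frac{d}{d\theta}\!\left[\sin^5(\theta/2)\, m(\theta)\, \frac{d}{d\theta}\!\left(\frac{h(\cos(\theta/2))}{(-\tfrac12 \sin(\theta/2))}\right)\!\cdot(-\tfrac12\sin(\theta/2))\right]\cdot(-\tfrac{2}{\sin(\theta/2)}) - \cdots
\end{equation*}
after carefully applying the chain rule twice; I would reorganize this to get a clean Sturm–Liouville-type equation in $\theta$ with weight $\sin^{?}(\theta/2) m(\theta)$.

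Next I would insert $h(\cos(\theta/2)) = \tfrac14 \cos(\theta/2)\,k(\theta)$ into the $\theta$-form of the equation. A convenient intermediate identity is
\begin{equation*}
\tfrac14 \sin\theta\, k(\theta) = \tfrac12 \sin(\theta/2)\cos(\theta/2) k(\theta) = 2\sin(\theta/2)\, h(\cos(\theta/2)),
\end{equation*}
which suggests that the natural object appearing in the transformed equation is $\sin\theta\, k(\theta)$, matching the factor present in \eqref{eq:ode_GCI_matrices}. I would therefore aim to recognize the outer derivative $\partial_\theta\bigl(\sin^2(\theta/2) m(\theta)\, \partial_\theta(\sin\theta\, k(\theta))\bigr)$ directly from the transformed version of the term $\frac{d}{dr}[(1-r^2)^{5/2}\exp(2r^2/d) h'(r)]$ in \eqref{eq:ode_h}.

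For the lower-order (undifferentiated) term I would check that
\begin{equation*}
-(1-r^2)^{3/2}\exp(2r^2/d)\!\left(\tfrac{4r^2}{d}+3\right) h(r)
\end{equation*}
becomes, after substitution, precisely $-\frac{m(\theta)\sin\theta}{2\sin^2(\theta/2)} k(\theta)$ plus terms that can be absorbed into the divergence term above, once constants $e^{1/(2d)}$ are factored out and the equation is divided by the appropriate power of $\sin(\theta/2)$. A good sanity check at this stage is to match the right-hand sides: \eqref{eq:ode_h} has RHS $r(1-r^2)^{3/2}\exp(2r^2/d) = e^{1/(2d)}\cos(\theta/2)\sin^3(\theta/2) m(\theta)$, while \eqref{eq:ode_GCI_matrices} has RHS $\sin\theta\, m(\theta) = 2\sin(\theta/2)\cos(\theta/2) m(\theta)$; the ratio is $\tfrac12 e^{1/(2d)}\sin^2(\theta/2)$, which is exactly the normalizing factor I expect to pull out of the whole equation to get from one to the other (after dividing by $\sin^2(\theta/2)$ on both sides).

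The main obstacle I anticipate is purely bookkeeping: there are two nested derivatives with awkward mixtures of $\sin(\theta/2)$, $\cos(\theta/2)$, $\sin\theta$, and $m(\theta)$, and verifying that the \emph{undifferentiated} $k$ term produced by expanding $\partial_\theta(\sin\theta\,k)$ combines correctly with the explicit undifferentiated term in \eqref{eq:ode_GCI_matrices} to reproduce the coefficient $-\bigl(\tfrac{4r^2}{d}+3\bigr)$ of $h$ in \eqref{eq:ode_h}. I would handle this by expanding $\partial_\theta(\sin\theta\, k)$ as $\cos\theta\, k + \sin\theta\, k'$, tracking the two resulting contributions separately, and regrouping by powers of $\sin(\theta/2)$ and $\cos(\theta/2)$ at the end; the identity $\cos\theta = 1 - 2\sin^2(\theta/2)$ is what ultimately produces the $-3$ in the coefficient $-(4r^2/d + 3)$, while the $\tfrac{4r^2}{d}$ arises from differentiating $m(\theta)$.
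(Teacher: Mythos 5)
Your strategy matches the paper's in outline: change variables via $r=\cos(\theta/2)$, use the identity $\sin\theta\,k(\theta)=8\sin(\theta/2)\,h(\cos(\theta/2))$, observe that $\exp(2r^2/d)=e^{1/(2d)}m(\theta)$ so the constant $e^{1/(2d)}$ drops out, and match the two equations term by term; your sanity check on the right-hand sides is also correct. However, as written this is a plan and not a proof: the actual nested-derivative computation, which is where all the content lies, is never carried out, and you yourself flag it as "the main obstacle."

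There is also a tactical choice you should make differently, because it substantially reduces the bookkeeping you are worried about. Rather than transporting the divergence-form ODE \eqref{eq:ode_h} through the change of variable (which forces you to chain-rule through $\frac{d}{dr}\big[(1-r^2)^{5/2}e^{2r^2/d}h'\big]$), the paper works from the $k$-side with the \emph{expanded} form \eqref{eq:ode_h_alt} of the $h$-ODE, namely $\big(-\tfrac{4}{d}c^2-3\big)h + \big(\tfrac{4}{d}s^2-5\big)c\,h' + s^2 h'' = c$ with $c:=\cos(\theta/2)$, $s:=\sin(\theta/2)$. The linchpin identity, which your proposal circles around but does not compute, is
\begin{equation*}
\partial_\theta\big(\sin\theta\,k(\theta)\big)\;=\;4\,c\,h(c)\;-\;4\,s^2\,h'(c),
\end{equation*}
which follows in one line from $\sin\theta\,k=8s\,h(c)$ and $\partial_\theta c = -s/2$, $\partial_\theta s = c/2$. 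Feeding this into $D(\theta):=\partial_\theta\big[s^2 m(\theta)\partial_\theta(\sin\theta\,k)\big]-m(\theta)\,c\,s\,k(\theta)$ and collecting the coefficients of $h$, $h'$, $h''$ gives $D(\theta)=2s^3m(\theta)\big\{(-3-\tfrac{4}{d}c^2)h + (-5+\tfrac{4}{d}s^2)c\,h' + s^2 h''\big\}$, which reduces by the expanded $h$-ODE to $D(\theta)=2cs^3 m(\theta)$, i.e., \eqref{eq:ode_GCI_matrices} after dividing by $s^2$. You should carry out this computation explicitly, rather than gesture at it.
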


\begin{proof}
For convenience we introduce the notation
$$c:=\cos(\theta/2), \qquad s:=\sin(\theta/2),$$
so that
$$c^2+s^2=1, \qquad \pa_\theta c = -s/2, \qquad \pa_\theta s = c/2, \qquad \cos \theta = c^2-s^2,\qquad \sin\theta = 2cs.$$
We write
$$ k(\theta)=4\frac{h\lp c\rp}{c}.$$
We use the equivalent Eq. \eqref{eq:ode_h_alt} for $h$ and rewrite it in $r=c$ as
\be\label{eq:h_of_c}
\lp \frac{-4}{d}c^2-3\rp h+ \lp\frac{4}{d}s^2-5 \rp\,c\, h' + s^2 h''=c.
\ee
Finally by definition of $m$ in Eq. \eqref{eq:tilde_m},
$$m(\theta) = \exp\lp \frac{1}{d} (\frac{1}{2}+ \cos \theta) \rp, \qquad \pa_\theta m(\theta) = - \frac{2}{d} cs m(\theta).$$

We want to check that $k$ defined by Eq. \eqref{def:k} is a solution of Eq. \eqref{eq:ode_GCI_matrices}. This is equivalent to showing that
$$D(\theta):=\pa_\theta[s^2 m(\theta) \pa_\theta (\sin\theta k(\theta))] - m(\theta) c s k(\theta) = 2 cs^3 m(\theta).$$
We first compute
\begin{eqnarray*}
\pa_\theta \lp \sin(\theta) k(\theta) \rp &=& \cos\theta k(\theta) + \sin\theta k'(\theta) \\
&=&4 \cos\theta \frac{h\lp c\rp}{c} +4 \sin\theta\frac{s}{2 c^2} h(c) + 4\frac{\sin\theta}{c}\frac{-s}{2} h'(c)\\
&=&4 (c^2-s^2) \frac{h\lp c\rp}{c} +4 cs \frac{s}{c^2} h(c) - 4 s^2 h'(c)\\
&=&4 c h\lp c\rp  - 4 s^2 h'(c).
\end{eqnarray*}
Then, inserting this expression into $D(\theta)$, we have that
\begin{eqnarray*}
D(\theta)&=& 4 \pa_\theta[s^2 m(\theta) \lp c h\lp c\rp -s^2 h'(c)\rp] - 4 m(\theta) s h(\theta) \\
&=&4 \pa_\theta[s^2 m(\theta) c] h\lp c\rp - 2 c s^3 m(\theta) h'\lp c\rp\\
&&- 4 \pa_\theta[s^4 m(\theta) ]h'(c) - 4 s^4 m(\theta) \frac{-s}{2} h''(c) - 4 m(\theta) s h(\theta)\\
&=&4 \left[sc m(\theta) c-\frac{1}{2}s^3 m(\theta)-\frac{2}{d}c^2s^3 m(\theta) - m(\theta) s\right] h\lp c\rp\\
&&+\left[- 2 c s^3 m(\theta) - 4 [2 c s^3 m(\theta) - \frac{2}{d} cs s^4 m(\theta) ]  \right] h'\lp c\rp\\
&& + 2 s^5 m(\theta) h''(c)\\
&=&2 s^3 m(\theta)  \left\{ 2 \left[ c^2/s^2-\frac{1}{2} -\frac{2}{d}c^2 - 1/s^2 \right] h\lp c\rp \right.\\
&&\left.\left[  - c - 2 [2 c - \frac{2}{d} cs^2 ]   \right] h'\lp c\rp + s^2 h''(c) \right\}\\
&=&2 s^3 m(\theta)  \left\{ (-3 -\frac{4}{d} c^2) h\lp c\rp + (- 5 + \frac{4}{d} s^2) ch'\lp c\rp + s^2 h''(c) \right\}.
\end{eqnarray*}
Using Eq. \eqref{eq:h_of_c}, this last expression is equal to
\begin{eqnarray}
D(\theta) &=&2 s^3 m(\theta) c,
\end{eqnarray}
which concludes the proof.
\end{proof}

\bibliographystyle{abbrv}
\bibliography{biblio}

\end{document}